\def\BState{\State\hskip-\ALG@thistlm}
\newcounter{defcounter}
\newcommand{\spn}[1]{\mbox{\texttt{span}}\left(#1\right)}
\newcommand{\bvec}[1]{\boldsymbol{#1}}
\newcommand{\real}{\mathbb{R}}
\newcommand{\opnorm}[2]{\left\|#1\right\|_{#2}}
\newcommand{\norm}[1]{\left\|#1\right\|_{2}}
\newcommand{\abs}[1]{\left|#1\right|}
\newcommand{\conv}[1]{\texttt{conv}\left(#1\right)}
\newcommand{\expect}[2]{\mathbb{E}_{#1}\left(#2\right)}
\newcommand{\indicator}[1]{\mathbbm{1}_{\left\{#1\right\}}}
\newtheorem*{remark}{Remark}
\newtheorem{thm}{Theorem}[section]
\newtheorem{lem}[thm]{Lemma}
\newtheorem{cor}{Corollary}[thm]
\theoremstyle{definition}
\newtheorem{define}{Definition}[section]
\journal{Elsevier Signal Processing}
\begin{document}
	\setlength\abovedisplayskip{0pt}
	\setlength\belowdisplayskip{0pt}
	\begin{frontmatter}	
		\title{Sparse Recovery Analysis of Generalized $J$-Minimization with Results for Sparsity Promoting Functions with Monotonic Elasticity}
		\author{Samrat Mukhopadhyay\corref{cor}}
		\address{Department of Electronics and Electrical Communication Engineering, Indian Institute of Technology, Kharagpur, INDIA}
		\ead{samratphysics@gmail.com}
		\cortext[cor]{Corresponding Author}
\begin{abstract}
	In this paper we theoretically study exact recovery of sparse vectors from compressed measurements by minimizing a general nonconvex function that can be decomposed into the sum of single variable functions belonging to a class of smooth nonconvex sparsity promoting functions. Null space property (NSP) and restricted isometry property (RIP) are used as key theoretical tools. The notion of \emph{scale function} associated to a sparsity promoting function is introduced to generalize the state-of-the-art analysis technique of the $l_p$ minimization problem. The analysis is used to derive an upper bound on the null space constant (NSC) associated to this general nonconvex minimization problem, which is further utilized to derive sufficient conditions for exact recovery as upper bounds on the restricted isometry constant (RIC), as well as bounds on optimal sparsity $K$ for which exact recovery occurs. The derived bounds are explicitly calculated when the sparsity promoting function $f$ under consideration possesses the property that the associated \emph{elasticity function}, defined as, $\psi(x)=\frac{xdf(x)/dx}{f(x)}$, is monotonic in nature. Numerical simulations are carried out to verify the efficacy of the bounds and interesting conclusions are drawn about the comparative performances of different sparsity promoting functions for the problem of $1$-sparse signal recovery. 
	\end{abstract}
	\begin{keyword}
			Generalized nonconvex minimization, Sparsity promoting function, Null space property (NSP), Restricted isometry property (RIP).
	\end{keyword}
\end{frontmatter}
\section{Introduction}
\label{sec:intro}
\subsection{Overview}
\label{sec:overview}
Compressed sensing~\cite{candes_decoding_2005,candes2006robust,donoho2006compressed} is a highly successful theoretical tool in the signal processing community that aims to recover a \emph{sparse} vector $\bvec{x}\in \real^N$ with no more than $K (K\ll N)$ nonzero entries from a small number of linear measurements, $\bvec{y}=\bvec{\Phi x}$, where $\bvec{y}\in \real^M$ with $M<N$. The canonical formulation of the compressed sensing problem is in terms of the following optimization problem: \begin{align}
\label{eq:l0-minimization-problem}
P_0: & \min_{\bvec{x}} \opnorm{\bvec{x}}{0},\ \mathrm{s.t.}\ \bvec{y}=\bvec{\Phi x},
\end{align}
where $\opnorm{\bvec{x}}{0}$ is the $l_0$ ``norm'' of $\bvec{x}$, which counts the number of nonzero entries of $\bvec{x}$. The problem $P_0$~\eqref{eq:l0-minimization-problem}, however, is a combinatorial minimization problem which becomes computationally intractable for even moderate problem size $N$~\cite{candes2006robust}. Candes~\emph{et al}~\cite{candes2006robust} studied the following convex relaxation of the problem $P_0$~\eqref{eq:l0-minimization-problem}:\begin{align}
\label{eq:l1-minimization-problem}
P_1: & \min_{\bvec{x}}\opnorm{\bvec{x}}{1},\ \mathrm{s.t.}\ \bvec{y}=\bvec{\Phi x}.
\end{align}
It was proved in~\cite{candes2006robust} that if $\bvec{x}$ is $K$-sparse, solving the problem $P_1$~\eqref{eq:l1-minimization-problem} using $\mathcal{O}(K\ln N)$ random Fourier measurements will yield the same minimizer as that of $P_0$~\eqref{eq:l0-minimization-problem} with probability $1 - \mathcal{O}(N^{-M})$. Since then, a plethora of efficient techniques have been developed to solve the problem $P_1$~\eqref{eq:l1-minimization-problem}, ranging from an interior point method~\cite{kim2007interior}, a modified homotopy method called the least angle regression (LARS)~\cite{donoho2008fast}, a fast iterative shrinkage-thresholding algorithm (FISTA)~\cite{beck2009fast}, an iteratively reweighted least squares (IRLS) method~\cite{daubechies2010iteratively} and alternating directions method of multipliers (ADMM)~\cite{boyd2011distributed,combettes2011proximal}, to name a few. Because of the convex nature of the problem $P_1$~\eqref{eq:l1-minimization-problem}, all these methods are well-analyzed and have provable convergence guarantees that make them attractive for the purpose of practical usage.

Lately, there has been an increasingly growing interest in studying an alternative formulation of the original problem $P_0$~\eqref{eq:l0-minimization-problem} as below: \begin{align}
\label{eq:J-minimization-problem}
P_f: & \min_{\bvec{x}\in \real^N} J(\bvec{x}),\ \mathrm{s.t.}\ \bvec{y}=\bvec{\Phi x},
\end{align}
where $J:\real^N\to \real$ can be decomposed as $J(\bvec{x})=\sum_{i=1}^N f(x_i)$, where $f:\real\to \real$, is a \emph{sparsity promoting function}~\cite{gribonval2007highly,chen2014convergence}, typically chosen to be a non-negative increasing piece-wise concave function. For different $f$, the performance of this so called \emph{$J-$minimization} has been studied by numerous researchers~\cite{chartrand2008restricted,ba2013convergence,chen2014convergence,jeong2014can,chen2015null,chen2016square-root-lasso,yukawa2016regularized,woodworth2016compressed}. Although the problem $P_f$ is highly non-convex because of the piece-wise concave nature of $f$, it has been proved and demonstrated in these studies that the sparse recovery performance of this non-convex minimization approach outperforms that of the $P_1$~\eqref{eq:l1-minimization-problem} problem, either in terms of faster convergence of the solution methods, or in the requirement of smaller number of measurements to achieve the same level of probability of perfect recovery (when using random measurement matrices, typically with i.i.d. Gaussian entries). In most of these analyses the main tools used are the null space property (NSP) and the restricted isometry property (RIP) associated with the sensing (or measurement) matrix $\bvec{\Phi}$.
\subsection{Relevant literature on $J$-minimization, NSP and RIP}
\label{sec:relevant-literature}
The paper~\cite{gribonval2007highly} first proposed the NSP and the null space constant (NSC) (and the $J-$NSC for problem $P_f$~\eqref{eq:J-minimization-problem}) and first established that to recover a $K-$sparse vector exactly by solving $P_f$~\eqref{eq:J-minimization-problem}, the $J-$NSC has to be strictly less than unity. Also, it was established in~\cite{gribonval2007highly} that for any sparsity promoting function $f$, the corresponding $J-$NSC is greater than or equal to the $l_0-$NSC for $P_0$~\eqref{eq:l0-minimization-problem} and less than or equal to the one for $P_1$~\eqref{eq:l1-minimization-problem}. It was further proved in~\cite{chen2014convergence} that the $J-$NSC for the problem $P_f$~\eqref{eq:J-minimization-problem} is identical to the $l_1-$NSC for $P_1$~\eqref{eq:l1-minimization-problem} when the function $f$ satisfies a certain weak convexity property~\cite{chen2014convergence}. However, many functions, for example $f(x)=\abs{x}^p$, or the continuous mixed norm~\cite{zayyani2014continuous} (which technically is a quasi-norm) do not satisfy such weak convexity property and it is not clear whether this conclusion holds true for the corresponding $J-$NSC.

The paper~\cite{candes_decoding_2005} defined the notion of RIP and the restricted isometry constant (RIC) of order $K$, denoted by $\delta_K$, and proved that the condition $\delta_K+\delta_{2K}+\delta_{3K}<1$ is sufficient for exact recovery of a $K-$sparse signal by solving the problem $P_1$~\eqref{eq:l1-minimization-problem}. Following up this research effort, numerous other works~\cite{cai2014sparse,wen2015stable,zhang2017optimal} have adopted more sophisticated arguments to propose increasingly improved bounds on $\delta_{2K}$ for the recovery of a $K-$sparse vector by solving the problems $P_p$(i.e. $P_f$~\eqref{eq:J-minimization-problem} with $f(x)=\abs{x}^p$), for $p\in(0,1]$. 
Recently, the work~\cite{gu2018sparse-lp-bound} has addressed the issue of both determining an upper bound on the $J-$NSC for the problem $P_p, p\in(0,1]$, with explicit dependence on $\delta_{2K},K,p$, as well as finding upper bounds on the RIC $\delta_{2K}$ that ensures exact recovery of a $K$-sparse signal by solving the problem $P_p$. Furthermore,~\cite{gu2018sparse-lp-bound} has produced the sharpest of all the RIC bounds derived so far. However, in recent years, many researchers have proposed and worked with many other non-convex sparsity promoting functions, for example, the concave exponential $f(x) = 1 - e^{-\abs{x}^p},\ p\in[0,1]$~\cite{bradley1998feature,fung2002minimal,tipping-etal2003l0}, the Lorentzian function $f(x)=\ln(1+\abs{x}^p)$~\cite{carrillo2010robust,carrillo2013lorentzian}, and the continuous mixed norm $f(x)=\expect{p}{\abs{x}^p}$~\cite{zayyani2014continuous}, where the expectation is taken with respect to a probability measure defined on the Borel sets of $[0,1]$ for the random variable $p$. To the best of our knowledge, no effort seems to have been spent on either deriving an upper bound on the $J-$NSC or bounds on $\delta_{2K}$, with explicit dependence on $K$, for such sparsity promoting functions that are neither $l_p$ nor weakly convex.

It is the goal of this paper to investigate the following questions:
\begin{enumerate}
\item Can the arguments of~\cite{gu2018sparse-lp-bound} be extended to the case of general sparsity promoting functions to derive bounds on NSC for the problem $P_f$~\eqref{eq:J-minimization-problem}?
\item Can a bound on $\delta_{2K}$ be derived that ensures exact recovery of a $K$-sparse signal by solving the problem $P_f$~\eqref{eq:J-minimization-problem} for general sparsity promoting functions $f$?
\item For a given sparsity promoting function $f$, can an upper bound for the largest $K$ of exactly recoverable $K$ sparse vector by solving the problem $P_f$~\eqref{eq:J-minimization-problem} be found? 
\end{enumerate}
\subsection{Contribution and Organization}
\label{sec:contribution-organization}
In this paper, we answer the above questions related to the performance of sparse recovery generalized nonconvex minimization. First, after describing the notations adopted in the paper as well as some Lemmas and definitions that are used throughout the rest of the paper in Section~\ref{sec:prelimiaries}, we present the main technical tool in Section~\ref{sec:upper-bound-F-NSP}, where we extend the techniques of~\cite{gu2018sparse-lp-bound} to general sparsity promoting functions. Second, we use this tool to present an upper bound on the null space constant associated with the generalized problem $P_f$ in Section~\ref{sec:upper-bound-F-NSP}. In particular, we show that computing such bound is straight forward for functions which have an associated monotonic \emph{elasticity function}, defined as $\psi(x)=\frac{xf'(x)}{f(x)}$, where $f'(x)=\frac{df(x)}{dx}$. Third, we use this result to find explicit upper bounds on the NSC in Section~\ref{sec:upper-bound-F-NSP} and on the RIC and the maximum recoverable sparsity in Section~\ref{sec:recovery-conditions-ric-p-k} for the problem $P_f$ specialized to three popular functions $f(x)=\ln(1+\abs{x}^p),\ 1-e^{-\abs{x}^p}$ and,$\int_0^1 \abs{x}^pd\nu$ (for a general probability measure $\nu$), and discuss several of their implications.
%
%
In Section~\ref{sec:numerical-results} numerical experiments are performed to study exact recovery of $1-$sparse vectors by solving $P_f$ using different sparsity promoting functions and several interesting observations about comparative recovery performances are illustrated. Furthermore, comparative study of the actual NSC and the derived bounds are numerically studied and the efficacy of the proposed bounds are illustrated for several sparsity promoting functions. 
%
\section{Preliminaries}
\label{sec:prelimiaries}
In this section we describe the notations used throughout the paper and describe some relevant results and introduce some definitions and lemmas regarding the sparsity promoting function, that will be useful in the subsequent analysis in the paper. 
\subsection{Notations}
\label{sec:notations}
The following notations have been used throughout the paper : `$\top$' in
superscript indicates matrix / vector transpose, $[N]$ denotes the set of indices $\{1,2,\cdots,\
N\}$, $x_i$ denotes the $i^\mathrm{th}$ entry of $\bvec{x}$, and $\bm{\phi}_i$ denotes the $i$ th column of $\bvec{\Phi}$. 
For any $S$, $S^C$ denotes the complement of the set. $\bvec{x}_S$ denotes a vector in $\real^N$, such that $[\bvec{x}_S]_i = x_i$, if $i\in S$, and $[\bvec{x}_S]_i = 0$ if $i\notin S$. Similarly, $\bvec{\Phi}_S$ denotes
the submatrix of $\bvec{\Phi}$ formed with the columns of
$\bvec{\Phi}$ having column numbers given by the index set $S$. The null space of a matrix $\bvec{\Phi}$ is denoted by $\mathcal{N}(\bvec{\Phi})$. The generalized $f-$mean of a vector $\bvec{x}\in \real^N$ is defined by $f^{-1}\left(\frac{1}{N}\sum_{i=1}^N f(x_i)\right).$ For any two real numbers $a,b$, we denote $a\wedge b = \min\{a,b\}$, and $a\vee b = \max\{a,b\}$. The set $\real_+$ denotes the set of all non-negative real numbers. For any function $f$ of a single real variable $f'$ denotes its derivative. The notation $\expect{p}{\cdot}$ is used to denote expectation with respect to the random variable $p$.
\subsection{Definitions}
\label{sec:prelimianries-definitions}
%
\begin{define}
\label{def:f-properties}
A function $f$ is called a \emph{sparsity promoting function} if it satisfies the following properties:
\begin{enumerate}
\item $f(0)=0$, and is an even function.
\item $f$ is strictly increasing in $[0,\infty)$ and is continuous at $0$.
\item $f$ is differentiable and concave in $(0,\infty)$.
\end{enumerate}
\end{define}
Properties similar to the ones in Definition~\ref{def:f-properties} first appeared in~\cite{gribonval2007highly}, where $f$ was neither required to be continuous at $0$ nor was required to be differentiable in $(0,\infty)$, and the requirement of concavity of $f$ in $(0,\infty)$ was replaced by the weaker requirement that $f(x)/x$ is non-increasing in $(0,\infty)$. However, in the current Definition~\ref{def:f-properties} a bit more smoothness is assumed, without which the analysis might become riddled with unnecessary technical difficulties arising due to non-differentiability issues. 
\begin{define}[Definition~$2$ of~\cite{chen2014convergence}]
\label{def:J-NSC-definition}
For a sparsity promoting function $f$ let the function $J:\real^N\to \real$ be defined as $J(\bvec{x})=\sum_{j=1}^N f(x_j)$ for all $\bvec{x}\in \real^N$. The $J$ null space constant ($J$-NSC) $\gamma(J,\bvec{\Phi},K)$ of matrix $\bvec{\Phi}$ is the smallest number $\gamma>0$ such that \begin{align}
\label{eq:J-NSC-inequality}
J(\bvec{z}_{S})\le \gamma J(\bvec{z}_{S^C}),
\end{align} 
for any $\bvec{z}\in \mathcal{N}(\bvec{\Phi})$ and any subset $S\subset [N]$ such that $\abs{S}\le K$.
\end{define}
\begin{define}
	\label{def:restricted-isometry-property}
	For a given sparsity order $K$, a matrix
	$\bvec{\Phi}\in \real^{M\times N}\ (N\ge K)$, is said to satisfy the restricted isometry property (RIP) of order $K$ if $\exists\delta\ge 0$, such that the following is satisfied for all $K-$sparse vectors $\bvec{x}\in \real^N$:
	\begin{align*}
	(1-\delta)\norm{\bvec{x}}^2\le \norm{\bvec{\Phi x}}^2\le (1+\delta)\norm{\bvec{x}}^2.
	\end{align*}
	The smallest such constant $\delta$ is denoted by $\delta_K$, and is called the restricted isometry constant
	(RIC)~\cite{candes_decoding_2005} of the matrix $\bvec{\Phi}$ of order $K$.
\end{define}
\subsection{An useful lemma}

\label{sec:preliminaries-lemmas}
%
%
\begin{lem}[\cite{bertsekas2009convex}, pp.95, Prop. 2.1.4(a)]
	\label{lem:verification-vertex-polyhedron}
	If $P$ is a polyhedral subset of $\real^n$ of the form \begin{align}
	P=\{\bvec{x}:\bvec{a}_j^t\bvec{x}\le b_j,\ j=1,2,\cdots,r\},
	\end{align}
	where $\bvec{a}_j\in\real^n,\ b_j\in \real,\ j=1,\cdots,\ r$, then a vector $\bvec{v}\in P$ is an extreme point of $P$ if and only if the set \begin{align}
	A_v=\{\bvec{a}_j:\bvec{a}_j^t\bvec{v}=b_j,\ j\in \{1,\cdots,r\}\}
	\end{align}
	contains $n$ linearly independent vectors.
\end{lem}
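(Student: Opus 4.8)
The plan is to prove the two implications separately, working directly from the definition: $\bvec{v}\in P$ is an extreme point when it admits no representation $\bvec{v}=\lambda\bvec{x}+(1-\lambda)\bvec{y}$ with $\bvec{x},\bvec{y}\in P$, $\bvec{x}\neq\bvec{y}$ and $\lambda\in(0,1)$. Throughout I will partition the index set $\{1,\dots,r\}$ into the \emph{active} indices $\mathcal{A}=\{j:\bvec{a}_j^t\bvec{v}=b_j\}$ (so that $A_v=\{\bvec{a}_j:j\in\mathcal{A}\}$) and the remaining \emph{inactive} indices, for which $\bvec{a}_j^t\bvec{v}<b_j$ strictly. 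The statement to be shown is then: $\bvec{v}$ is an extreme point if and only if $\spn{A_v}=\real^n$.

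For the ``only if'' direction I would argue by contraposition. Assume $A_v$ does not contain $n$ linearly independent vectors, i.e.\ $\dim\spn{A_v}<n$; then there is a nonzero $\bvec{d}\in\real^n$ with $\bvec{a}_j^t\bvec{d}=0$ for every $j\in\mathcal{A}$. Since each inactive constraint has strictly positive slack $b_j-\bvec{a}_j^t\bvec{v}>0$ and there are only finitely many of them, one can choose $\epsilon>0$ small enough that $\bvec{a}_j^t(\bvec{v}\pm\epsilon\bvec{d})\le b_j$ holds simultaneously for all inactive $j$; for the active constraints equality is preserved because $\bvec{a}_j^t\bvec{d}=0$. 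Hence both $\bvec{v}+\epsilon\bvec{d}$ and $\bvec{v}-\epsilon\bvec{d}$ lie in $P$, they are distinct, and $\bvec{v}$ is their midpoint, so $\bvec{v}$ is not an extreme point.

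For the ``if'' direction, suppose $A_v$ contains $n$ linearly independent vectors $\bvec{a}_{j_1},\dots,\bvec{a}_{j_n}$ and write $\bvec{v}=\lambda\bvec{x}+(1-\lambda)\bvec{y}$ with $\bvec{x},\bvec{y}\in P$, $\lambda\in(0,1)$. For each $i$ we have $b_{j_i}=\bvec{a}_{j_i}^t\bvec{v}=\lambda\,\bvec{a}_{j_i}^t\bvec{x}+(1-\lambda)\,\bvec{a}_{j_i}^t\bvec{y}$, where both $\bvec{a}_{j_i}^t\bvec{x}$ and $\bvec{a}_{j_i}^t\bvec{y}$ are at most $b_{j_i}$; a strict convex combination can attain the upper bound only when $\bvec{a}_{j_i}^t\bvec{x}=\bvec{a}_{j_i}^t\bvec{y}=b_{j_i}$. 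Therefore $\bvec{a}_{j_i}^t(\bvec{x}-\bvec{v})=\bvec{a}_{j_i}^t(\bvec{y}-\bvec{v})=0$ for $i=1,\dots,n$, and since the $\bvec{a}_{j_i}$ span $\real^n$ this forces $\bvec{x}=\bvec{y}=\bvec{v}$; hence $\bvec{v}$ is an extreme point.

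The argument is essentially routine, and no part of it is a serious obstacle; the only point needing a little care is the choice of the single perturbation size $\epsilon$ in the ``only if'' direction. It must keep \emph{every} inactive constraint satisfied for \emph{both} $\bvec{v}+\epsilon\bvec{d}$ and $\bvec{v}-\epsilon\bvec{d}$ at once, which is possible precisely because the inactive constraints are finite in number and each carries strictly positive slack at $\bvec{v}$ — concretely one may take $\epsilon$ to be the minimum over inactive indices $j$ with $\bvec{a}_j^t\bvec{d}\neq 0$ of $(b_j-\bvec{a}_j^t\bvec{v})/\abs{\bvec{a}_j^t\bvec{d}}$ (and any positive value when no such index exists).
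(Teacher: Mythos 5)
Your proof is correct: both directions are handled properly, the perturbation argument in the ``only if'' direction correctly exploits the strictly positive slack of the finitely many inactive constraints, and the ``if'' direction correctly forces $\bvec{a}_{j_i}^t(\bvec{x}-\bvec{v})=0$ for $n$ linearly independent $\bvec{a}_{j_i}$. The paper itself gives no proof of this lemma --- it is quoted verbatim from Bertsekas (Prop.~2.1.4(a)) --- and your argument is precisely the standard one found there, so nothing further is needed.
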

\subsection{Elasticity of a sparsity promoting function}
\label{sec:elastcity-sparsity-promoting-function}
\begin{define}
	\label{def:elasticity-function}
	The \emph{elasticity}~\cite{carvajal2016essential} of a sparsity promoting function $f$ is a non-negative real valued function $\psi:\real_+\to \real_+$, defined as $\psi(x)=xf'(x)/f(x),\ x> 0$. The \emph{maximum elasticity} and \emph{minimum elasticity} are defined as $L_{\psi}=\sup_{x>0}\psi(x)$ and $l_\psi=\inf_{x>0}\psi(x)$, respectively.   
\end{define}
%
%
\subsection{The scale function}
\label{sec:scale-function}
In this section, we introduce the \emph{scale} function associated with a sparsity promoting function. This function plays a critical role later in the analysis.
\begin{define}
	\label{def:scale-function-definition}
	The \emph{scale} function $g:\real_+\to \real_+$  of $f$ is defined as $g(x)=\sup_{y>0}\frac{f(xy)}{f(y)}$.
\end{define}
\begin{lem}[Properties of scale function]
	\label{lem:scale-function-properties}
	For any function $f$ satisfying the properties in Definition~\ref{def:f-properties}, the corresponding scale function $g$ satisfies the following properties: 
	\begin{enumerate}
		\item $g(0) = 0$.
		\item $g$ is non-decreasing in $(0,+\infty)$.
		\item $g(x)/x$ is non-increasing in $(0,\infty)$.
		\item $g(xy)\le g(x)g(y),\ \forall x,y\ge 0$.
		\item $g(x)\in[x\wedge 1,x\vee 1]$.
	\end{enumerate}
\end{lem}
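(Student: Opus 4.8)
The plan is to establish each of the five properties of the scale function $g(x) = \sup_{y > 0} \frac{f(xy)}{f(y)}$ in turn, using the defining properties of a sparsity promoting function from Definition~\ref{def:f-properties}: namely $f(0) = 0$, $f$ strictly increasing and continuous at $0$ on $[0,\infty)$, and $f$ differentiable and concave on $(0,\infty)$. The key structural fact I will lean on repeatedly is that concavity of $f$ on $(0,\infty)$ together with $f(0) = 0$ forces $f(x)/x$ to be non-increasing on $(0,\infty)$; this is the bridge between the smoothness hypotheses and the multiplicative behavior of $g$.

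First, for property~(1), $g(0) = \sup_{y>0} \frac{f(0)}{f(y)} = 0$ since $f(0) = 0$ and $f(y) > 0$ for $y > 0$. For property~(2), monotonicity of $g$: if $0 < x_1 \le x_2$, then for each fixed $y > 0$ we have $x_1 y \le x_2 y$, so $f(x_1 y) \le f(x_2 y)$ because $f$ is increasing; taking suprema over $y$ gives $g(x_1) \le g(x_2)$. For property~(3), I want $g(x)/x$ non-increasing. Write $\frac{g(x)}{x} = \sup_{y>0} \frac{f(xy)}{x f(y)} = \sup_{y>0} \frac{f(xy)}{(xy) f(y)} \cdot y$; substituting $t = xy$ this becomes $\sup_{t>0} \frac{f(t)}{t} \cdot \frac{t/x}{f(xt/x \cdot \ldots)}$ — more cleanly, $\frac{g(x)}{x} = \sup_{y>0} \frac{f(xy)/(xy)}{f(y)/y}$. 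Since $u \mapsto f(u)/u$ is non-increasing, for $x \ge 1$ the ratio $\frac{f(xy)/(xy)}{f(y)/y} \le 1$, and more generally for $x_1 \le x_2$ one compares $f(x_i y)/(x_i y)$ termwise; I will need to argue that increasing $x$ cannot increase this supremum, which follows because $f(xy)/(xy)$ is non-increasing in $x$ (for fixed $y$) while the denominator is untouched. Property~(4), submultiplicativity $g(xy) \le g(x) g(y)$, is the workhorse: for any $z > 0$, $\frac{f(xyz)}{f(z)} = \frac{f(xyz)}{f(yz)} \cdot \frac{f(yz)}{f(z)} \le g(x) \cdot g(y)$ by definition of $g$ applied to the point $yz$ and then to $z$; taking the supremum over $z$ yields the claim. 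Finally property~(5), $g(x) \in [x \wedge 1,\, x \vee 1]$: the lower bound $g(x) \ge \max\{x, 1\}$ if $x \ge 1$ and $g(x) \le 1$ ... actually the two-sided bound splits by cases. For $x \le 1$: since $f(u)/u$ is non-increasing, $f(xy) \ge x f(y)$ is false in that direction — rather $f(xy)/(xy) \ge f(y)/y$ gives $f(xy) \ge x f(y)$, so $g(x) \ge x$; and $f(xy) \le f(y)$ since $f$ is increasing and $xy \le y$, so $g(x) \le 1$; hence $g(x) \in [x, 1] = [x \wedge 1, x \vee 1]$. For $x \ge 1$ the symmetric argument using $f(xy) \le x f(y)$ (from $f(u)/u$ non-increasing) and $f(xy) \ge f(y)$ gives $g(x) \in [1, x]$.

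The main obstacle I anticipate is a subtle one about whether the supremum in the definition of $g$ is actually attained or could be $+\infty$, and making property~(3) fully rigorous. For finiteness, property~(5) (proved independently via the pointwise bounds above, which do not require finiteness a priori) shows $g(x) \le x \vee 1 < \infty$, so this resolves itself — I should arrange the logical order so that the two-sided pointwise estimates of property~(5) are derived first, or at least the boundedness part, before invoking sup-manipulations elsewhere. The genuinely delicate point is property~(3): rewriting $g(x)/x$ as a supremum of the ratio of the non-increasing function $h(u) := f(u)/u$ evaluated at $xy$ versus $y$, and then arguing this supremum is non-increasing in $x$. The cleanest route is: for $0 < x_1 \le x_2$ and any $y > 0$, set $y' = (x_2/x_1) y \ge y$; then $\frac{g(x_1)}{x_1} \ge \frac{h(x_1 y')}{h(y')} = \frac{h(x_2 y)}{h(y')}$, and since $h$ is non-increasing and $y' \ge y$, $h(y') \le h(y)$, giving $\frac{h(x_2 y)}{h(y')} \ge \frac{h(x_2 y)}{h(y)}$; taking the supremum over $y$ yields $\frac{g(x_1)}{x_1} \ge \frac{g(x_2)}{x_2}$. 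One must double check the inequality directions since $h(x_2 y) \le h(y)$ could be negative-looking, but $h > 0$ throughout on $(0,\infty)$ so all ratios are positive and the manipulation is valid. I expect properties (1), (2), (4) to be essentially immediate, property~(5) to require the concavity-derived monotonicity of $f(u)/u$ in a short case analysis, and property~(3) to be the one needing the careful substitution above.
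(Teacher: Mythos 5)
Your proposal is correct and follows essentially the same route as the paper's proof: part (1) is immediate from $f(0)=0$, part (2) from monotonicity of $f$, part (3) from the pointwise comparison of $\frac{f(xy)}{xf(y)}=\frac{f(xy)/(xy)}{f(y)/y}$ using that $f(u)/u$ is non-increasing, part (4) by writing $\frac{f(xyz)}{f(z)}=\frac{f(xyz)}{f(xz)}\cdot\frac{f(xz)}{f(z)}$ and bounding each factor by its supremum, and part (5) by the two-sided case analysis. Your added remarks — that $f(u)/u$ non-increasing follows from concavity together with $f(0)=0$ and continuity at $0$, and that the bounds in part (5) guarantee finiteness of the supremum — are correct and merely make explicit what the paper leaves implicit.
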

\begin{remark}
	The first three properties of Lemma~\ref{lem:scale-function-properties} makes the function $g$ similar to a \emph{a sparsity promoting function}, however it might be discontinuous at $0$; the fourth property shows that $g$ is sub-multiplicative, and the last property shows that $g(x)$ is finite for each $x>0$.
\end{remark}
\begin{proof}
	The part 1) follows simply from $g(0) = \sup_{y>0}\frac{f(0\cdot y)}{f(y)}=0$.\\
	For part 2), choose $0\le x_1\le x_2$, and any $y>0$. Using monotonicity property of $f$, $\frac{f(x_1 y)}{f(y)} \le \frac{f(x_2 y)}{f(y)}\implies g(x_1)\le g(x_2).$\\
	For part 3), note that for any $0<x_1\le x_2$, and for any $y>0$, it follows from property $(3)$ of $f$ that $\frac{f(x_1y)}{x_1f(y)}=\frac{y}{f(y)}\frac{f(x_1y)}{x_1y} \ge \frac{y}{f(y)}\frac{f(x_2y)}{x_2y}=\frac{f(x_2y)}{x_2f(y)}
	\implies \frac{g(x_1)}{x_1} \ge \frac{g(x_2)}{x_2}.$\\ 
	For part 4), first assume that one of $x,y$ is $0$, so that $g(xy)=0=g(x)g(y)$. Then, with $x,y>0$, $g(xy)=\sup_{z>0}\frac{f(xyz)}{f(z)}  =\sup_{z>0}\frac{f(xyz)}{f(xz)}\frac{f(xz)}{f(z)}
	\le  \sup_{u>0}\frac{f(uy)}{f(u)}\sup_{z>0}\frac{f(xz)}{f(z)} =g(y)g(x).$\\
	Finally, for part 5), we consider the cases $x\le 1$, and $x>1$. If $x\le 1$, $f(xy)\le f(y)$, and $\frac{f(xy)}{xy}\ge \frac{f(y)}{y} \implies f(xy)\ge xf(y).$ Hence, for $x\le 1, g(x)\in[x,1]$. Using similar reasoning, one obtains $g(x)\in[1,x]$ when $x>1$.
\end{proof}
\subsection{Computation of the scale function}
%
\label{sec:computing-scale-function}
In order to obtain an expression of $g(y)$, we need to study the function $\phi_y(x)=f(xy)/f(x),\ \forall x>0$, for a fixed $y>0$. Since $\phi_1(x)=1,\ \forall x$, we immediately have $g(1)=1$. Now, to compute $g$ for $y\ne 1$, we find \begin{align}
\label{eq:derivative-phi-y}
\phi_y'(x)=\frac{f(xy)(\psi(xy)-\psi(x))}{xf(x)}.
\end{align}  
Clearly, if (for a fixed $y$) $\psi$ is a monotonic function,  the maximum value of $\phi_y(x)$ is obtained either at $x=0$ or at $x=\infty$. For example, if $\psi$ is a monotonically increasing function, $\phi_y'(x)\le 0$ for $y< 1$, and $\phi_y'(x)\ge 0$, for $y>1$, so that $g(y)=\lim_{x\downarrow 0}f(xy)/f(x)$ when $y< 1$, and $g(y)=\lim_{x\uparrow \infty}f(xy)/f(x)$. Similarly, when $\psi$ is a monotonically decreasing function, one can show that $g(y)=\lim_{x\uparrow \infty}f(xy)/f(x)$, when $y<1$, and $g(y)=\lim_{x\downarrow 0}f(xy)/f(x)$, when $y>1$.

When $\psi$ is not monotone, the function $\phi_y$ might possess several (probably even countably infinite) local maxima, making the computation of $g(y)$ difficult. However, in some special cases, efficient computation might still be possible. 

Let us assume that the set $S_\psi=\{x\ge 0:\psi'(x)=0\}$ of stationary points of $\psi$ can be efficiently computed. Let us define the supremum of $S_\psi$ as $x_\psi:=\sup S_\psi$. Now consider the case $x_\psi<B<\infty$, for some $B>0$. Note that for any local maxima $x_l$ ($\ne 0,\infty$) of $\phi_y$ (for a fixed $y$), $\psi(x_l y)=\psi(x_l)$, which implies by Rolle's Theorem that $\exists c\in[x_l\wedge x_ly,x_l\vee x_l y]$, such that $\psi'(c)=0$. Since $c\le x_\psi<B$, it follows that $x_l\in[0,B/y]$ if $y<1$, and $x_l\in [0,B]$ if $y>1$. Therefore, if $x_\psi<\infty$, all the local maxima ($\ne 0,\infty$) of $\phi_y(x)$ (for fixed $y$) belong to a compact set. Now, since $f(x)$ is concave and positive valued, $1/f(x)$ is a convex function in $(0,\infty)$. Therefore, maximization of $\phi_y$, when $x_{\psi}$ is bounded, is equivalent to finding the maximum of the ratio of two convex function in a compact set along with the points $0$ and $\infty$. The branch and bound algorithm in~\cite{benson2006maximizing} can then be used to find the global maxima of $\phi_y$ and hence the value of $g(y)$.

On the other hand, if $x_{\psi}=\infty$, the algorithm from~\cite{benson2006maximizing} cannot be applied. In that case one can proceed to find an upper and a lower bound on $g(y)$ in the following way. Assume that there is a procedure available to efficiently compute $L_\psi$ and $l_\psi$ (see Definition~\ref{def:elasticity-function}). It is easy to observe that $0\le l_\psi\le L_\psi\le 1$, so that $l_\psi$ and $L_\psi$ exist. Note that, due to concavity of $f$, for any $x>0$, $f(xy)\le f(x)+x(y-1)f'(x)$, so that
\begin{align}
\label{eq:scale-function-upper-bound-using-psi}
g(y)=\sup_{x>0}\frac{f(xy)}{f(x)} & \le 
\left\{\begin{array}{ll}
1+(y-1)L_\psi, & y> 1\\
1+(y-1)l_\psi. & y<1
\end{array}\right.
\end{align}
On the other hand, using $f(x)\le f(xy)+x(1-y)f'(xy)$ one obtains, \begin{align}
g(y) =\frac{1}{\inf_{x>0} \frac{f(x)}{f(xy)}} & \ge \frac{1}{1+\inf_{x>0}(1-y)\frac{xf'(xy)}{f(xy)}}\nonumber\\
\label{eq:scale-function-lower-bound-using-psi}
\implies g(y) & \ge \left\{\begin{array}{ll}
\frac{1}{1+\frac{(1-y)l_\psi}{y}}, & y>1\\
\frac{1}{1+\frac{(1-y)L_\psi}{y}}. & y<1
\end{array}\right.
\end{align}
\subsection{Examples of scale functions and elasticity functions}
\label{sec:example-computing-scale-elasticity-functions}
\begin{figure}[t!]
	\centering
	\begin{subfigure}{.5\textwidth}
		\centering
		\includegraphics[height=1.5in,width=3in]{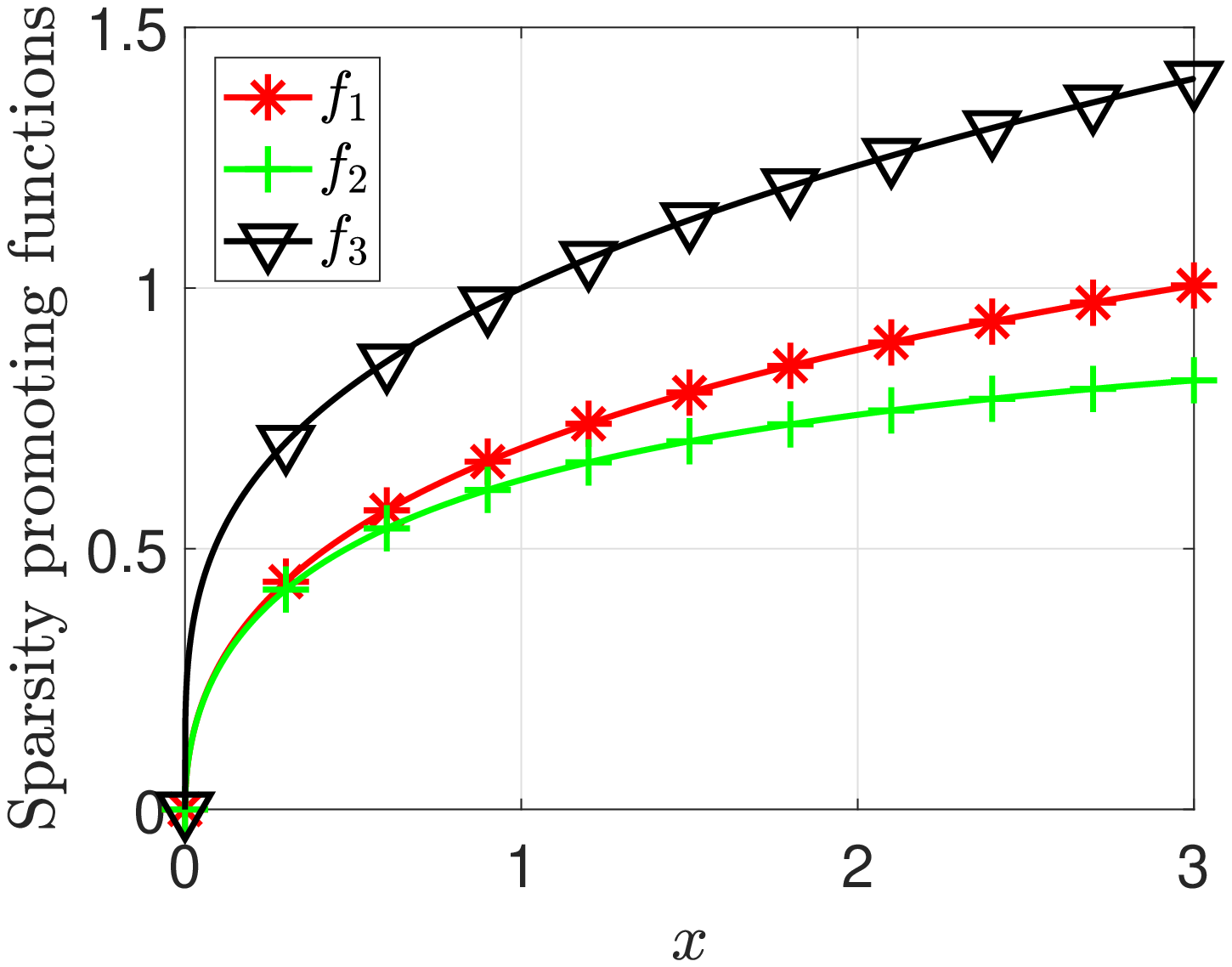}
		\caption{Sparsity promoting functions}
		\label{fig:sparsity-promoting-functions-example}
	\end{subfigure}
	\begin{subfigure}{.5\textwidth}
		\centering
		\includegraphics[height=1.5in,width=3in]{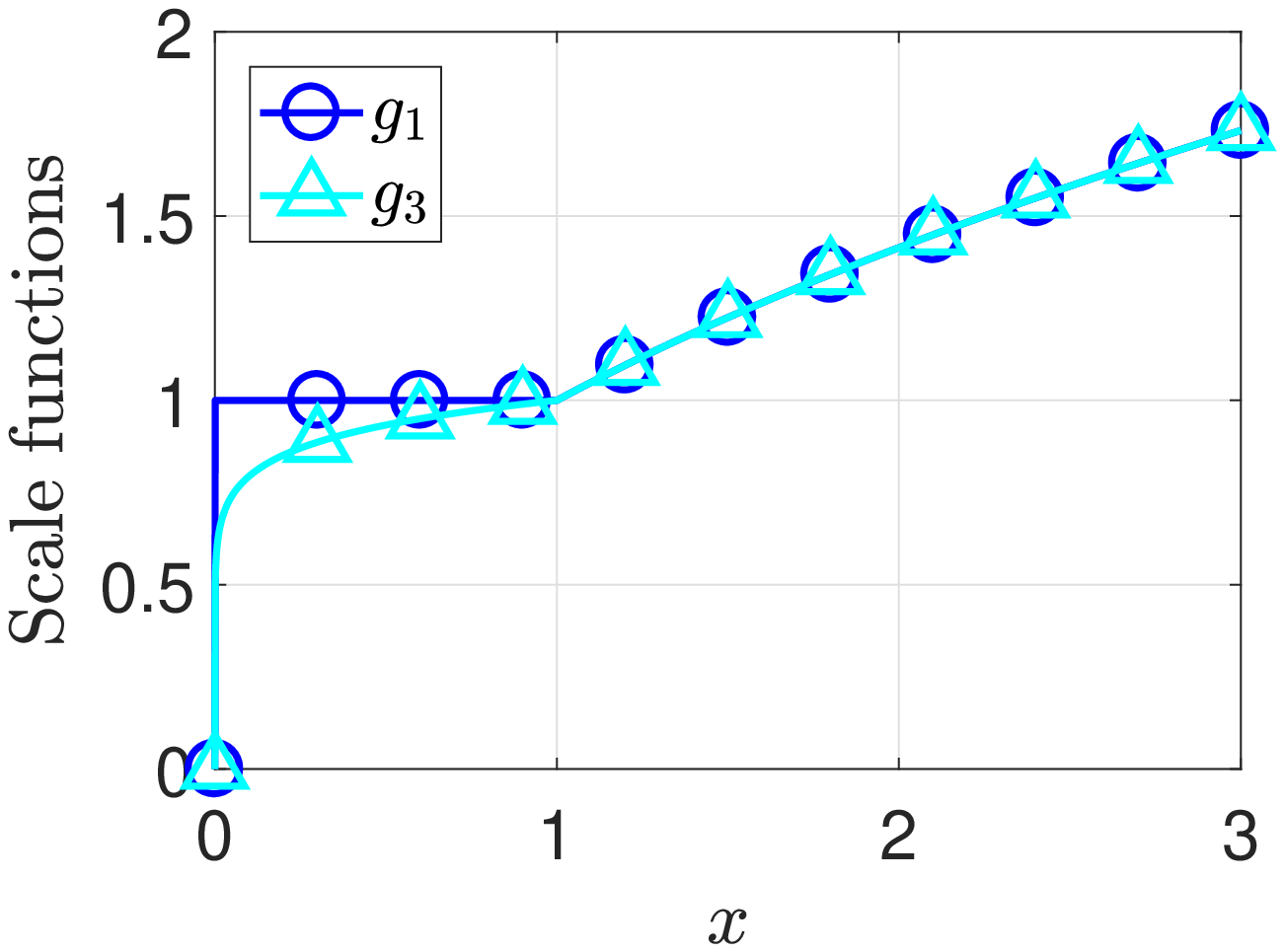}
		\caption{Scale functions}
		\label{fig:scale-functions}
	\end{subfigure}
	\begin{subfigure}{0.5\textwidth}
		\centering
		\includegraphics[height=1.5in, width=3in]{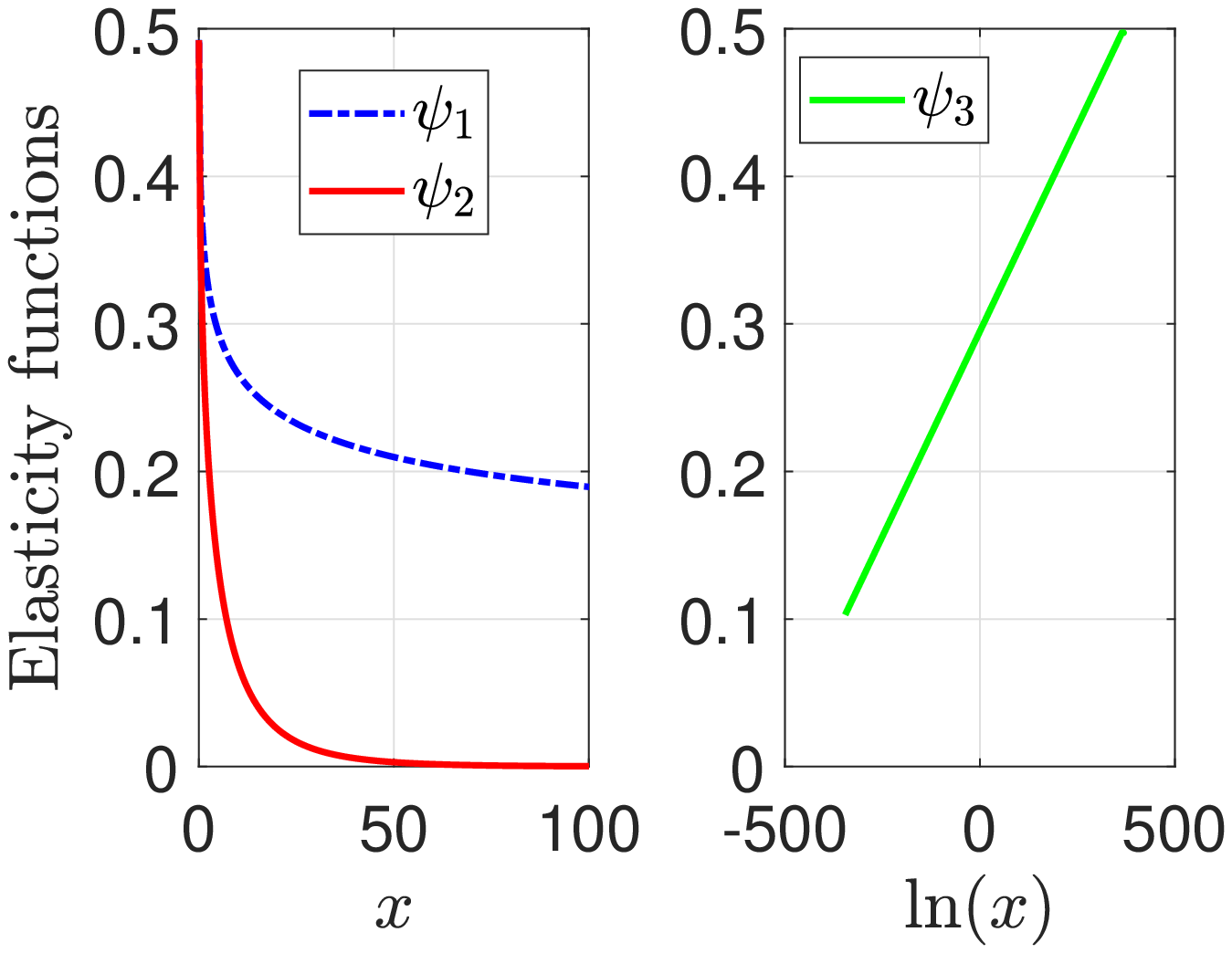}
		\caption{Elasticity functions}
		\label{fig:elasticity-functions}
	\end{subfigure}
	\caption{Examples of sparsity promoting functions (one-sided): $f_1(x)=\ln(1+\sqrt{x}),\ f_2(x)=1 - e^{-\sqrt{x}},\ f_3(x) = \int_{0.1}^{0.5}\frac{1}{0.4}x^p dp$, corresponding scale functions $g_1$ (associated with $f_1,f_2$) and $g_2$(associated with $f_3$) and the associated elasticity functions $\psi_1,\psi_2,\psi_3$ (corresponding to $f_1,f_2,f_3$ respectively) from Table~\ref{tab:scale-functions-examples}.}
	\label{fig:sparsity-promoting-functions-and-scale-functions-examples}
\end{figure} 
\begin{table}[t!]
	\centering
	\scriptsize
	\captionsetup{justification=centering}
	\caption{\textsc{Different Sparsity Promoting Functions with their Scale Functions}}
	\label{tab:scale-functions-examples}
	\begin{tabular}{|>{\centering\arraybackslash}m{0.1cm}|>{\centering\arraybackslash}m{1.5cm}|>{\centering\arraybackslash}m{2.5cm}|>{\centering\arraybackslash}m{1.75cm}|>{\centering\arraybackslash}m{0.25cm}|>{\centering\arraybackslash}m{0.25cm}|}
		\hline
		No. & $f(x)$ & $g(y)$ & $\psi(x)$ & $L_\psi$ & $l_\psi$\\
		\hline
		$1.$ & $\ln(1+\abs{x}^p)$ & $\indicator{y\le 1}+y^p\indicator{y> 1}$ & $\frac{px^p}{\ln(1+x^p)(1+x^p)}$ & $p$ & $0$\\
		\hline 
		$2.$ & $1-e^{-\abs{x}^p}$ & $\indicator{y\le 1}+y^p\indicator{y> 1}$ & $\frac{px^p}{e^{x^p}-1}$ & $p$ & $0$\\
		\hline
		$3.$ & $\expect{p}{\abs{x}^p}$, support $\Omega$, $p_1=\inf \Omega,\ p_2=\sup \Omega$ & $y^{p_1}\indicator{y\le 1}+y^{p_2}\indicator{y> 1}$ & $\frac{\expect{p}{px^p}}{\expect{p}{x^p}}$ & $p_2$ & $p_1$\\
		\hline
	\end{tabular}
\end{table}
We will now evaluate the elasticity and scale functions for several important and frequently used sparsity promoting functions $f$. 

1)\textbf{$f(x)=\abs{x}^p,\ p\in [0,1]$:} This is the celebrated $l_p$ function which is a widely used and well-analyzed sparsity promoting function in the literature~\cite{gribonval2007highly, chartrand2008restricted,foucart2009sparsest,chen2014convergence,chen2015null,woodworth2016compressed,gu2018sparse-lp-bound}. Clearly, for this function $\psi(x)=p,\forall x>0$, which implies that $L_\psi = l_{\psi} = p$ and $g(y)=y^p,\ \forall y>0$.

2)\textbf{$f(x)=\ln(1+\abs{x}^p) (p\in (0,1])$:} This function is commonly known as the Lorentzian function in the literature~\cite{carrillo2010robust,carrillo2013lorentzian}. For this function, we find that, for all $x>0$, \begin{align}
 \psi(x) & =\frac{p x^p}{(1+ x^p)\ln(1+ x^p)}\\
 \implies \psi'(x) & =\frac{p^2x^{p-1}(\ln(1+ x^p)- x^p)}{(1+ x^p)^2(\ln(1+ x^p))^2}\le 0,
\end{align} since $1+x\le e^x,\ \forall x\ge 0$. Hence, for $y\in(0,1]$, $g(y)=\lim_{x\uparrow\infty}\ln(1+ x^py^p)/\ln(1+ x^p)=1$, and for $y>1$, $g(y)=\lim_{x\downarrow 0}\ln(1+ x^py^p)/\ln(1+ x^p)=y^p$. Furthermore, $L_\psi=\lim_{x\downarrow 0}\psi(x)=p,\ l_\psi=\lim_{x\uparrow\infty}\psi(x)=0$.

3)\textbf{$f(x)=1-e^{-\abs{ x}^p} (p\in(0,1])$:} This is the so-called concave exponential function~\cite{bradley1998feature,fung2002minimal,tipping-etal2003l0}. In this case, $\forall x>0,\ \psi(x)= px^pe^{- x^p}/(1-e^{- x^p})=p/(\sum_{k=0}^{\infty}( x^p)^{k}/(k+1)!)$, which is obviously a monotonically decreasing function. Hence, in this case, for $y\in(0,1],\ g(y)=\lim_{x\uparrow \infty}(1-e^{- x^py^p})/(1-e^{-x^p})=1$, and for $y>1$, $g(y)=\lim_{x\downarrow 0}(1-e^{- x^py^p})/(1-e^{- x^p})=y^p$. Furthermore, $L_\psi=\lim_{x\downarrow 0}\psi(x)=p,\ l_\psi=\lim_{x\uparrow\infty}\psi(x)=0$.

4)\textbf{$f(x)=\expect{p}{\abs{x}^p}$:} This is the continuous mixed norm function introduced in~\cite{zayyani2014continuous} for $p\in[1,2]$ and later extended in~\cite{javaheri2018robust} to $p\in[0,2]$ (where it becomes a quasi-norm for $p\in (0,1)$). Here we use the function $f(x)=\expect{p}{\abs{x}^p}=\int_0^1\abs{x}^pd\nu$ for some probability measure $\nu$ defined on the Borel $\sigma$-field associated with the set $[0,1]$. In this case $\forall x>0$, $\psi(x)=\frac{\expect{p}{px^{p}}}{\expect{p}{x^p}}$, so that \begin{align}
\psi'(x)=\frac{\expect{p}{p^2x^{p}}\expect{p}{x^p}-(\expect{p}{px^p})^2}{x(\expect{p}{x^p})^2}\ge 0,
\end{align} since $\expect{p}{p^2x^{p}}\expect{p}{x^p}\ge (\expect{p}{px^p})^2$ by the Cauchy-Schwartz inequality. Hence, for $y\in (0,1]$, $g(y)=\lim_{x\downarrow 0}\frac{\expect{p}{x^py^p}}{\expect{p}{x^p}}$, and for $y>1$, $g(y)=\lim_{x\uparrow \infty}\frac{\expect{p}{x^py^p}}{\expect{p}{x^p}}$. Furthermore, $l_\psi = \lim_{x\uparrow \infty }\psi(x),\ l_\psi=\lim_{x\downarrow 0}\psi(x)$. It can be shown that if $\Omega(\subseteq [0,1])$ is the support of the measure $\nu$, and $p_1=\inf\Omega,\ p_2=\sup \Omega$, with $0<p_1\le p_2<1$, 
\begin{align}
	\label{eq:g(y)-evluation-continuous-mixed-norm}	
	\lim_{x\downarrow 0}\frac{\expect{p}{x^py^p}}{\expect{p}{x^p}} & = y^{p_1},\	\lim_{x\uparrow \infty}\frac{\expect{p}{x^py^p}}{\expect{p}{x^p}} = y^{p_2},\\
	\label{eq:psi(y)-evluation-continuous-mixed-norm}
	\lim_{x\downarrow 0} \frac{\expect{p}{px^p}}{\expect{p}{x^p}} & = p_1,\ 	\lim_{x\uparrow\infty} \frac{\expect{p}{px^p}}{\expect{p}{x^p}}=p_2.
\end{align} See Appendix~\ref{sec:appendix-proving-equations-g(y)-continuos-mixed-norm} for a proof of these claims.

The above sparsity promoting functions as well as the associated elasticity functions and scale functions are tabulated in Table.~\ref{tab:scale-functions-examples} and examples are illustrated in Fig.~\ref{fig:sparsity-promoting-functions-and-scale-functions-examples}.

%
%
\section{Upper bound on $J$-NSC}
\label{sec:upper-bound-F-NSP}
Before stating the main results of the paper, we state a lemma which lies at the heart of the main result of this paper.
\begin{lem}
	\label{lem:generalized-holder-type-bounds}
	Let $f$ be a function satisfying the properties~\ref{def:f-properties}. Let $u_0=\inf\{u\in(0,1]:g(u)=1\}$ and for any positive integer $s\ge 1$, and real number $q\ge 1$, let the non-negative real numbers $\alpha_f(q,s),\ \beta_f(q,s)$ satisfy the following: \begin{enumerate}
		\item $\alpha_f(q,s)+\beta_f(q,s)\ge 1$,
		\item If $u_0=0$, $\beta_f(q,s)\ge (1-1/s)^{1/q}$,
		\item If $u_0>0$, $\displaystyle\sup_{u\in(0,u_0)}\frac{g(u)}{(\alpha_f(q,s) u+\beta_f(q,s))^q}\le 1.$
	\end{enumerate}  
	Then, for any vector $\bvec{a}\in \real^s$, the following is satisfied: \begin{align}
	\label{eq:main-holder-type-inequality-generalized-concave-function}
	f^{-1}\left(\frac{\sum_{j=1}^s f(a_j)}{s}\right) \le \frac{\opnorm{\bvec{a}}{q}}{s^{1/q}} & \le \alpha_f(q,s) f^{-1}\left(\frac{\sum_{j=1}^s f(a_j)}{s}\right) +\beta_f(q,s)\opnorm{\bvec{a}}{\infty}.
	\end{align}
\end{lem}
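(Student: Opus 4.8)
The plan is to establish the two inequalities in~\eqref{eq:main-holder-type-inequality-generalized-concave-function} separately. Since $f$ is even, replacing each $a_j$ by $\abs{a_j}$ changes none of the three quantities involved, so I assume $\bvec{a}\in\real_+^s$ and, after reordering, $a_1\ge a_2\ge\cdots\ge a_s\ge 0$. Set $M=a_1=\opnorm{\bvec{a}}{\infty}$ and $m=f^{-1}\!\big(\tfrac1s\sum_{j=1}^s f(a_j)\big)$; if $M=0$ all terms vanish, so assume $M>0$, so that $f(M)>0$ and $0<m\le M$.

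For the left inequality I would first check that $h(t):=f(t^{1/q})$ is concave on $[0,\infty)$. On $(0,\infty)$ differentiation gives $h''(t)=\tfrac1{q^2}f''(t^{1/q})t^{2/q-2}+\tfrac{1-q}{q^2}f'(t^{1/q})t^{1/q-2}$, which is nonpositive because $f''\le 0$ (concavity of $f$), $f'>0$, and $q\ge 1$; continuity of $f$ at $0$ then gives concavity on all of $[0,\infty)$. Jensen's inequality applied to $h$ at the points $a_j^q$ yields $\tfrac1s\sum_j f(a_j)=\tfrac1s\sum_j h(a_j^q)\le h\!\big(\tfrac1s\sum_j a_j^q\big)=f\!\big(\opnorm{\bvec{a}}{q}/s^{1/q}\big)$, and applying the increasing map $f^{-1}$ gives $m\le\opnorm{\bvec{a}}{q}/s^{1/q}$.

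For the right inequality I would use the ``chord'' bound: because $f$ is concave with $f(0)=0$, $f(x)/x$ is non-increasing, so $a_j\le M\,f(a_j)/f(M)$ for each $j$, and hence $a_j^q\le M^q f(a_j)/f(M)$ (as $f(a_j)/f(M)\le1$ and $q\ge1$). Summing and dividing by $s$, $\tfrac1s\opnorm{\bvec{a}}{q}^q\le M^q\,f(m)/f(M)\le M^q g(m/M)$, so it is enough to prove $g(\mu)\le\big(\alpha_f(q,s)\,\mu+\beta_f(q,s)\big)^q$ with $\mu:=m/M\in(0,1]$. If $u_0>0$: for $\mu\in(0,u_0)$ this is precisely hypothesis~3; for $\mu\in[u_0,1]$ one has $g(\mu)\le1$ by Lemma~\ref{lem:scale-function-properties}(5), while Lemma~\ref{lem:scale-function-properties}(3) together with the definition of $u_0$ forces $\lim_{u\uparrow u_0}g(u)=1$, so letting $u\uparrow u_0$ in hypothesis~3 gives $\alpha_f(q,s)u_0+\beta_f(q,s)\ge1$ and therefore $\big(\alpha_f(q,s)\mu+\beta_f(q,s)\big)^q\ge1\ge g(\mu)$.

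The case $u_0=0$ is where I expect the real work: here $g\equiv1$ on $(0,1]$, so the estimate $\tfrac1s\opnorm{\bvec{a}}{q}^q\le M^q$ just obtained is too crude, and a different argument is needed that extracts the constant $(1-1/s)^{1/q}$ from hypothesis~2. I would argue directly: if $\beta_f(q,s)\ge1$ then $\opnorm{\bvec{a}}{q}/s^{1/q}\le\opnorm{\bvec{a}}{\infty}=M\le\beta_f(q,s)M$ and we are done; otherwise write $\bvec{a}=a_s\bvec{1}+(\bvec{a}-a_s\bvec{1})$, where $\bvec{a}-a_s\bvec{1}=(a_1-a_s,\dots,a_{s-1}-a_s,0)$ has all entries in $[0,M-a_s]$, so $\opnorm{\bvec{a}-a_s\bvec{1}}{q}\le(s-1)^{1/q}(M-a_s)$. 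The triangle inequality then gives $\opnorm{\bvec{a}}{q}/s^{1/q}\le a_s+(1-1/s)^{1/q}(M-a_s)\le a_s+\beta_f(q,s)(M-a_s)=\beta_f(q,s)M+a_s\big(1-\beta_f(q,s)\big)$ by hypothesis~2, and finally $a_s\le m$ (since $f(a_j)\ge f(a_s)$ for every $j$) together with $1-\beta_f(q,s)\le\alpha_f(q,s)$ (hypothesis~1) yields $a_s\big(1-\beta_f(q,s)\big)\le\alpha_f(q,s)m$, which completes the estimate. (One could equivalently use Lemma~\ref{lem:verification-vertex-polyhedron} to reduce first, at fixed $M$ and fixed $\sum_j f(a_j)$, to the ``staircase'' maximizers $(M,\dots,M,\tau,0,\dots,0)$ of the convex map $\bvec{a}\mapsto\opnorm{\bvec{a}}{q}^q$ and run the same bounds; that is the route closest to~\cite{gu2018sparse-lp-bound}, but it does not seem to be needed for this lemma.)
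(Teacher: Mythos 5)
Your proof is correct, but it takes a genuinely different and considerably more elementary route than the paper's. For the lower bound the paper sets up the constrained problem $\max\{\sum_j f(a_j):\opnorm{\bvec{a}}{q}^q\le h(\xi)\}$ and solves it via KKT conditions to show the maximizer has all coordinates equal; your one-line Jensen argument applied to $t\mapsto f(t^{1/q})$ reaches the same conclusion. (One small repair: since $f$ is only assumed once-differentiable, justify concavity of $f(t^{1/q})$ by the composition rule — a non-decreasing concave function composed with the concave map $t\mapsto t^{1/q}$ is concave — rather than via $f''$.) For the upper bound the paper imports the full polytope machinery of Gu et al.: it builds $\mathcal{D}_1=\conv{\mathcal{P}}$ from the vertices $\bvec{p}_k$ solving $\gamma_k f(a_k)=f((\xi-\beta a_k)/\alpha)$, and proves $\mathcal{D}\subset\mathcal{D}_1$ by a second extreme-point analysis of the auxiliary polyhedron $T_\eta$ using Lemma~\ref{lem:verification-vertex-polyhedron}; this is where hypotheses 2 and 3 enter, through the inequality $g\bigl((\gamma_k^{1/q}-\beta)/\alpha\bigr)\le\gamma_k$. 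You instead split on $u_0$: when $u_0>0$ the chord bound $a_j\le M f(a_j)/f(M)$ together with $(f(a_j)/f(M))^q\le f(a_j)/f(M)$ reduces everything to $g(\mu)\le(\alpha\mu+\beta)^q$ on $(0,1]$, which you correctly extend from $(0,u_0)$ to $[u_0,1]$ via left-continuity of $g$ at $u_0$ (a consequence of properties 2 and 3 of Lemma~\ref{lem:scale-function-properties}, which give $g(u_0)=1$); when $u_0=0$ the decomposition $\bvec{a}=a_s\bvec{1}+(\bvec{a}-a_s\bvec{1})$ and the triangle inequality extract exactly the constant $(1-1/s)^{1/q}$, with $a_s\le m$ and $1-\beta\le\alpha$ closing the estimate. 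Both arguments deliver the lemma as stated from the same hypotheses; yours avoids all the extreme-point bookkeeping (and, as a bonus, your $u_0=0$ case uses only hypotheses 1 and 2, while your $u_0>0$ case shows hypothesis 3 actually implies hypothesis 1), whereas the paper's construction stays closer to the $l_p$ template of Gu et al. and exhibits the extremal "staircase" vectors explicitly.
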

\begin{proof}
	See Appendix~\ref{sec:appendix-proof-of-lemma-generalized-holder-type-bounds}.
\end{proof}
The Lemma~\ref{lem:generalized-holder-type-bounds} demonstrates that for a function $f$ satisfying the conditions~\ref{def:f-properties} of a sparsity promoting function, there is positive linear combination of the $l_\infty$ norm of a non-negative vector $\bvec{a}$, and its generalized $f$-mean, such that the combination is no smaller than its $l_q$ mean for $q\ge 1$. This result generalizes Lemma 1 in~\cite{gu2018sparse-lp-bound}, that uses $f(x)=\abs{x}^p,\ p\in(0,1]$. To verify this, note that when $f(x)=\abs{x}^p,\ p\in (0,1]$, one obtains $g(x)=x^p$ so that $\beta_f$ in Lemma~\ref{lem:generalized-holder-type-bounds} can be obtained by solving the following: \begin{align}
\label{eq:alpha-beta-evaluation-for-lp-funtion}
\sup_{u\in [0,1]} \frac{u^p}{(\alpha_f u + \beta_f)^q} \le 1 &
\Leftrightarrow \inf_{u\in [0,1]} \left(\alpha_f u + \beta_f - u^{p/q}\right) \ge 0.
\end{align}
Since the function $u^{p/q}$ is concave for $u>0$, the function in~\eqref{eq:alpha-beta-evaluation-for-lp-funtion} is convex and the minimum is obtained when $u=\left(\frac{p}{q\alpha}\right)^{1/(1-p/q)}$, so that \begin{align}
\inf_{u\in [0,1]} \left(\alpha_f u + \beta_f - u^{p/q}\right) \ge 0
\implies & \beta_f^{1-r} \alpha_f ^r \ge r^r(1-r)^{1-r},
\end{align} which implies that $c^rd^{1-r}\ge 1,$ where $r=p/q,\ c=\frac{\alpha_f}{r},\ d = \frac{\beta_f}{1-r}$. This is exactly the condition stated in Eq(12) of~\cite{gu2018sparse-lp-bound} with the equality sign. Note that for such a pair of $\alpha_f,\beta_f$, one has $\alpha_f+\beta_f\ge 1$, since $\alpha_f+\beta_f=cr+d(1-r)=\exp(\ln(cr+d(1-r)))\ge c^rd^{1-r}\ge 1$, where the penultimate step uses Jensen's inequality. The Lemma~\ref{lem:generalized-holder-type-bounds} enables us to find upper bounds on the null space constant for generalized sparsity promoting functions beyond $l_p$, which we believe will be useful in analyzing the performances of sparse recovery problems with general sparsity promoting functions. Using Lemma~\ref{lem:generalized-holder-type-bounds}, we now present the main result of this paper in the form of the following theorem:
\begin{thm}
	\label{thm:null-space-constant-upper-bound}
	For any given integer $K_0\ge 1$, and any matrix $\bvec{\Phi}$ satisfying $\delta_{2K_0}<1$, for any positive integer $1\le K\le K_0$ and $p\in(0,1]$, the following holds for any sparsity promoting function $f$:
	\begin{align}
	\label{eq:null-space-constant-upper-bound}
		\gamma(J,\bvec{\Phi},K)< \gamma^*(J,\bvec{\Phi},K),
	\end{align}
	where, $J(\bvec{x})=\sum_{i=1}^N f(x_i)$ for any $\bvec{x}\in \real^N$, $L'=2K_0-K+1$, and
	 \begin{align}
	\label{eq:null-space-constant-upper-bound-expression}
	\gamma^*(J,\bvec{\Phi},K) & =\frac{K}{L'}g(\zeta_f(K,K_0)),\\
	\label{eq:zeta-f-expression}
	\zeta_f(K,K_0) & =\frac{L'(\pi_f(L')+\beta_f(1,L'))C_{2K_0}'}{\sqrt{K}\sqrt{L'-1}},\\
	\label{eq:C-2K0-expression}
	C_{2K_0}' & =\frac{\sqrt{2}+1}{2}\frac{\delta_{2K_0}}{1-\delta_{2K_0}},\\
	\label{eq:pi-f-expression}
	\pi_f(L') & =\max\left\{\alpha_f(1,L'),\frac{1}{2}+\frac{1}{2L'}\right\}.
	\end{align}
\end{thm}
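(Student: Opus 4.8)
The goal is to bound the $J$-NSC, i.e. to show that $J(\bvec{z}_S) < \gamma^*(J,\bvec{\Phi},K)\, J(\bvec{z}_{S^C})$ for every $\bvec{z}\in\mathcal{N}(\bvec{\Phi})$ and every $S$ with $|S|\le K$. Following the template of~\cite{gu2018sparse-lp-bound}, the plan is to start from a fixed $\bvec{z}$ in the null space and a fixed index set $S$, and to partition $S^C$ into consecutive blocks $T_1,T_2,\dots$ of size $L' = 2K_0-K+1$ (ordered by decreasing magnitude of the entries of $\bvec{z}$ restricted to $S^C$), with possibly a smaller final block. The first move is to pass from $J$ to an $l_q$ quantity: applying Lemma~\ref{lem:generalized-holder-type-bounds} with $q=1$, $s=|S|$ on the ``head'' block $S$, and again with $s=L'$ on each tail block, converts the $f$-sums $J(\bvec{z}_S)$, $J(\bvec{z}_{T_j})$ into sandwich bounds involving $\|\bvec{z}_S\|_1$, $\|\bvec{z}_{T_j}\|_1$, $\|\bvec{z}_{T_j}\|_\infty$ and generalized $f$-means. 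This is exactly where the constants $\alpha_f(1,L')$, $\beta_f(1,L')$ (hence $\pi_f(L')$) enter.

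**The RIP step.** Next I would invoke the null-space consequence of the RIP. Because $\bvec{\Phi z}=\bvec 0$ and $\delta_{2K_0}<1$, a standard shifting-inequality/cone argument (the same one behind the constant $C'_{2K_0}$ in~\eqref{eq:C-2K0-expression}, i.e. the $\tfrac{\sqrt2+1}{2}\cdot\tfrac{\delta_{2K_0}}{1-\delta_{2K_0}}$ factor from the $\sqrt2$-type analyses of Cai--Zhang) gives a bound of the form $\|\bvec{z}_{S\cup T_1}\|_2 \le C'_{2K_0}\sum_{j\ge 1}\|\bvec{z}_{T_j}\|_2 / \text{(something)}$, or more precisely controls the energy on the ``large'' coordinates by the tail. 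I would then combine this with the block structure: $\|\bvec{z}_{T_{j+1}}\|_\infty \le \|\bvec{z}_{T_j}\|_1/L'$ (monotone rearrangement), so the sum $\sum_j \|\bvec{z}_{T_j}\|_\infty$ telescopes against $\sum_j\|\bvec{z}_{T_j}\|_1$, and $\|\bvec{z}_{T_j}\|_2\le\|\bvec{z}_{T_j}\|_1$ relates $l_2$ to $l_1$. Chaining these yields $\|\bvec{z}_S\|_1$ (or its $f$-mean surrogate) bounded by a multiple of $\sum_j\|\bvec{z}_{T_j}\|_1$, and the multiplier, after tracking the $\sqrt{K}$, $\sqrt{L'-1}$, $\sqrt 2+1$ factors, is precisely $\zeta_f(K,K_0)$ as in~\eqref{eq:zeta-f-expression}.

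**Closing via the scale function.** The final step converts back from $l_1$/$f$-mean quantities to $J$. Having shown, roughly, that $f^{-1}(J(\bvec{z}_S)/|S|) \le \zeta_f(K,K_0)\, f^{-1}(J(\bvec{z}_{S^C})/L')$ — i.e. the generalized $f$-mean over $S$ is at most $\zeta_f$ times a generalized $f$-mean over a block of size $L'$ — I would apply the definition of the scale function $g(x)=\sup_{y>0}f(xy)/f(y)$ (Definition~\ref{def:scale-function-definition}) to pass the factor $\zeta_f$ through $f$: this produces $J(\bvec{z}_S)/|S| \le g(\zeta_f(K,K_0))\, J(\bvec{z}_{S^C})/L'$, and multiplying by $|S|\le K$ and $L'$ gives the claimed $\gamma^*(J,\bvec{\Phi},K) = \tfrac{K}{L'}g(\zeta_f(K,K_0))$. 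The strictness of the inequality in~\eqref{eq:null-space-constant-upper-bound} should come from strictness already present in one of the RIP estimates (or from $\bvec{z}\ne\bvec 0$ forcing a strict step somewhere), so I would be careful to keep at least one strict inequality live throughout the chain.

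**Main obstacle.** I expect the delicate part to be the RIP/shifting argument that yields the $C'_{2K_0}$ constant with the right numerology, and in particular the bookkeeping that makes $\pi_f(L')=\max\{\alpha_f(1,L'),\tfrac12+\tfrac{1}{2L'}\}$ appear rather than just $\alpha_f(1,L')$ — the $\tfrac12+\tfrac1{2L'}$ term presumably arises because one block (likely $T_1$, merged with $S$) must be handled by a cruder $l_2$-based bound instead of Lemma~\ref{lem:generalized-holder-type-bounds}, and reconciling the two estimates forces the max. Getting the generalized $f$-mean manipulations to interact cleanly with the $l_2$ geometry of RIP — without assuming anything about $f$ beyond Definition~\ref{def:f-properties} and the sub-multiplicativity of $g$ from Lemma~\ref{lem:scale-function-properties} — is where the real work lies; everything else is careful but routine chaining of inequalities.
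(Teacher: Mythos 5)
Your proposal follows the paper's proof essentially step for step: the same magnitude-ordered block decomposition of the complement into size-$L'$ pieces, the same use of Lemma~\ref{lem:generalized-holder-type-bounds} (the upper bound with $q=1$ on the tail blocks, producing $\alpha_f(1,L'),\beta_f(1,L')$), the same RIP/shifting inequality yielding $C'_{2K_0}$, the same intermediate comparison $f^{-1}\bigl(J(\bvec{z}_{T_0})/K\bigr)\le \zeta_f(K,K_0)\,f^{-1}\bigl(J(\bvec{z}_{\overline{T_0}})/L'\bigr)$, and the same final passage through the scale function $g$; your diagnosis of where $\pi_f(L')=\max\{\alpha_f(1,L'),\tfrac12+\tfrac1{2L'}\}$ comes from (one term of the shifting inequality, the $\omega z_2^+$ piece, must be absorbed separately, forcing the max) is also accurate. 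The only small correction is that the head block is handled via the \emph{lower} bound of the lemma with $q=2$, so that $f^{-1}\bigl(J(\bvec{z}_{T_0})/K\bigr)\le\norm{\bvec{z}_{T_0}}/\sqrt{K}$ interfaces directly with the $l_2$-based RIP estimate (this is where the $\sqrt{K}$ in $\zeta_f$ originates), rather than $q=1$ as you wrote, but this is cosmetic and does not change the structure of the argument.
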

\begin{proof}
	See Appendix~\ref{sec:appendix-proof-of-theorem-null-space-constant-upper-bound}.
\end{proof}
Using the expressions for the scale function $g$, as tabulated in Table~\ref{tab:scale-functions-examples}, the result of Theorem~\ref{thm:null-space-constant-upper-bound} can be specialized for the different sparsity promoting functions described in Table~\ref{tab:scale-functions-examples}, in the form of the following corollary:
\begin{cor}
		\label{cor:null-space-constant-upper-bound-different-functions}
		For any given integer $K_0\ge 1$, and any matrix $\bvec{\Phi}$ satisfying $\delta_{2K_0}<1$, and for any integer $1\le K\le K_0$:
		\begin{enumerate}
			\item If $f(x)=\ln(1+\abs{x}^p)$, or $f(x)=1-e^{-\abs{x}^p}$, the following holds, 
			 \begin{align}
			 \label{eq:null-space-constant-upper-bound-expression-exp-or-log}
			\gamma^*(J,\bvec{\Phi},K) & = \frac{K}{L'}\left[\indicator{\zeta_1\le 1}+ \zeta^p_1\indicator{\zeta_1> 1}\right],
			\end{align}
			where $\zeta_1=\frac{3L'-1}{2\sqrt{K}\sqrt{L'-1}}C'_{2K_0}$, $L'=2K_0-K+1$.
			\item If $f(x)=\expect{p}{\abs{x}^p}$, with $\Omega$ as the support of the probability measure and $p_1=\inf \Omega,\ p_2=\sup \Omega$, the following holds, 
			\begin{align}
			\label{eq:null-space-constant-upper-bound-prelim-expression-continuous-mixed-norm}
			 \gamma^*(J,\bvec{\Phi},K)  & = \frac{K}{L'}\left[\widehat{\zeta}_2^{p_1}\indicator{\widehat{\zeta}_2\le 1}+\widehat{\zeta}_2^{p_2}\indicator{\widehat{\zeta}_2> 1}\right]\\
			 \label{eq:null-space-constant-upper-bound-expression-continuous-mixed-norm}
			 \  & = \frac{K}{L'}\left[\zeta_2^{p_1}\indicator{\zeta_2\le 1}+\zeta_2^{p_2}\indicator{\zeta_2> 1}\right],
			\end{align}
			where $\widehat{\zeta}_2 = \frac{\max\{2L',3L'-2p_1L'+1\}}{2\sqrt{K}\sqrt{L'-1}}C'_{2K_0}<\zeta_2 = \frac{3L'+1}{2\sqrt{K}\sqrt{L'-1}}C_{2K_0}'$, $L'=2K_0-K+1$.
		\end{enumerate}
\end{cor}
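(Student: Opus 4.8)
The plan is to obtain both items by specializing Theorem~\ref{thm:null-space-constant-upper-bound} to the explicit scale functions listed in Table~\ref{tab:scale-functions-examples}, after choosing, for each $f$, a convenient pair $(\alpha_f(1,L'),\beta_f(1,L'))$ admissible in the sense of Lemma~\ref{lem:generalized-holder-type-bounds} (applied with $q=1$ and $s=L'$). Since Theorem~\ref{thm:null-space-constant-upper-bound} holds for \emph{any} such admissible pair, I am free to pick the one that makes $\zeta_f(K,K_0)$ small and keeps the expression clean. I will use throughout that $L'=2K_0-K+1\ge K_0+1\ge 2$, so that $\sqrt{L'-1}>0$ and $\tfrac12+\tfrac1{2L'}\ge\tfrac1{L'}$. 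The first sub-step is to locate $u_0=\inf\{u\in(0,1]:g(u)=1\}$: for $f(x)=\ln(1+\abs{x}^p)$ and $f(x)=1-e^{-\abs{x}^p}$ Table~\ref{tab:scale-functions-examples} gives $g(y)=\indicator{y\le1}+y^p\indicator{y>1}$, so $g\equiv1$ on $(0,1]$ and $u_0=0$; for $f(x)=\expect{p}{\abs{x}^p}$ we have $g(y)=y^{p_1}\indicator{y\le1}+y^{p_2}\indicator{y>1}$ with $p_1>0$, so $g(u)=u^{p_1}<1$ on $(0,1)$ and $u_0=1$. This dictates whether condition (2) or condition (3) of Lemma~\ref{lem:generalized-holder-type-bounds} is the binding one.

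For $f(x)=\ln(1+\abs{x}^p)$ or $1-e^{-\abs{x}^p}$ I must enforce $\alpha_f+\beta_f\ge1$ and $\beta_f\ge1-1/L'$. Taking the extreme admissible point $\beta_f(1,L')=\tfrac{L'-1}{L'}$, $\alpha_f(1,L')=\tfrac1{L'}$ gives $\pi_f(L')=\max\{\tfrac1{L'},\tfrac12+\tfrac1{2L'}\}=\tfrac{L'+1}{2L'}$, whence $\pi_f(L')+\beta_f(1,L')=\tfrac{3L'-1}{2L'}$ and $\zeta_f(K,K_0)=\tfrac{(3L'-1)C'_{2K_0}}{2\sqrt{K}\sqrt{L'-1}}=\zeta_1$; plugging this into \eqref{eq:null-space-constant-upper-bound-expression} and evaluating $g$ piecewise yields \eqref{eq:null-space-constant-upper-bound-expression-exp-or-log}. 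One can also remark that this choice minimizes $\pi_f+\beta_f$ over the feasible set --- once $\alpha_f\le\tfrac12+\tfrac1{2L'}$ the term $\pi_f$ sits at its floor $\tfrac12+\tfrac1{2L'}$ while $\beta_f$ cannot go below $\tfrac{L'-1}{L'}$ --- so the bound is not improvable by a different split.

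For $f(x)=\expect{p}{\abs{x}^p}$, since $u_0=1$ condition (3) reads $\alpha_f u+\beta_f\ge u^{p_1}$ for all $u\in(0,1)$; by concavity of $u\mapsto u^{p_1}$ its tangent at $u=1$ dominates it, i.e.\ $u^{p_1}\le p_1u+(1-p_1)$, so $(\alpha_f,\beta_f)=(p_1,1-p_1)$ is admissible and also has $\alpha_f+\beta_f=1$. Then $\pi_f(L')=\max\{p_1,\tfrac12+\tfrac1{2L'}\}$, and a one-line case split gives $\pi_f(L')+\beta_f(1,L')=\max\{1,\tfrac32-p_1+\tfrac1{2L'}\}$; multiplying by $L'$ this equals $\tfrac12\max\{2L',3L'-2p_1L'+1\}$, so $\zeta_f(K,K_0)=\widehat{\zeta}_2$ and \eqref{eq:null-space-constant-upper-bound-expression} with the piecewise $g$ gives \eqref{eq:null-space-constant-upper-bound-prelim-expression-continuous-mixed-norm}. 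For the coarser bound \eqref{eq:null-space-constant-upper-bound-expression-continuous-mixed-norm}, note that $p_1>0$ and $L'\ge2$ force $3L'-2p_1L'+1<3L'+1$ and $2L'<3L'+1$, hence $\widehat{\zeta}_2<\zeta_2$; since $g$ is non-decreasing by Lemma~\ref{lem:scale-function-properties}, $\gamma^*(J,\bvec{\Phi},K)=\tfrac{K}{L'}g(\widehat{\zeta}_2)\le\tfrac{K}{L'}g(\zeta_2)=\tfrac{K}{L'}\bigl[\zeta_2^{p_1}\indicator{\zeta_2\le1}+\zeta_2^{p_2}\indicator{\zeta_2>1}\bigr]$. (Equivalently, the even simpler admissible pair $(\alpha_f,\beta_f)=(0,1)$, valid because $\beta_f=1\ge u^{p_1}$ on $(0,1)$, produces $\zeta_2$ directly.)

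The computations are essentially bookkeeping; the only thing requiring attention is resolving the two maxima --- the one inside $\pi_f(L')$ and the one defining $\widehat{\zeta}_2$ --- by the correct sign comparison, and confirming that the pair selected for each $f$ actually verifies the binding hypothesis ((2) when $u_0=0$, (3) when $u_0>0$) rather than the other. I anticipate no genuine obstacle beyond remembering $L'\ge2$, which is what legitimizes both the square roots and the simplification $\pi_f(L')=\tfrac12+\tfrac1{2L'}$ in the log/exp case.
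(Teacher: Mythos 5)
Your proposal is correct and follows essentially the same route as the paper: specialize Theorem~\ref{thm:null-space-constant-upper-bound} by picking the admissible pair $(\alpha_f(1,L'),\beta_f(1,L'))=(1/L',\,1-1/L')$ when $u_0=0$ (log/exp case) and $(p_1,\,1-p_1)$ when $u_0=1$ (continuous mixed norm), then evaluate $\pi_f(L')+\beta_f(1,L')$ and the piecewise scale function. Your added justifications — the tangent-line argument for admissibility of $(p_1,1-p_1)$ and the monotonicity of $g$ used to pass from $\widehat{\zeta}_2$ to $\zeta_2$ (correctly read as an inequality, as in the paper's proof) — only make explicit what the paper asserts.
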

\begin{proof}
	
	To prove Eq.~\eqref{eq:null-space-constant-upper-bound-expression-exp-or-log}, note that  $g(u)=1,\ \forall\ u\in(0,1]$ so that $u_0=0$ for both the functions $1,2$ in Table~\ref{tab:scale-functions-examples}. Therefore, by Lemma~\ref{lem:generalized-holder-type-bounds}, $\beta_f(1,L')=1-1/L'$, and $\alpha_f(1,L')=1/L'$, implying that $\pi_f(L')=1/2+1/(2L')$, since $L'>1$. Consequently, one obtains, that $\zeta_f(K,K_0)=\zeta_1$ for the functions $1,2$.\\
	To prove the bound~\eqref{eq:null-space-constant-upper-bound-expression-continuous-mixed-norm}, note that for the function $3$ in Table~\ref{tab:scale-functions-examples}, $g(u)=u^{p_1},\ u\in [0,1]$, so that $u_0=1$. Therefore, one can use Lemma~\ref{lem:generalized-holder-type-bounds} to choose $\beta_f(1,L')=1-p_1,\ \alpha_f(1,L')=p_1$, so that , $\pi_f(L')+\beta_f(1,L')=\max\left\{1,3/2-p_1+1/(2L')\right\}<3/2+1/(2L')$. Consequently, in this case, $\zeta_f(K,K_0)=\widehat{\zeta}_2<\zeta_2$, and thus, $\gamma^\star(J,\bvec{\Phi},K)=K/L'g(\widehat{\zeta}_2)<K/L'g(\zeta_2)$ from Eq.~\eqref{eq:null-space-constant-upper-bound-expression}.
\end{proof}
We have following remarks to make on Corollary~\ref{cor:null-space-constant-upper-bound-different-functions}:

\textbf{1)} We find from Equations~\eqref{eq:null-space-constant-upper-bound-expression-exp-or-log} and~\eqref{eq:null-space-constant-upper-bound-prelim-expression-continuous-mixed-norm} that \begin{align}
\label{eq:null-space-constant-tight-upper-bound-p-to-0-exp-log}
\lim_{p\downarrow 0} \gamma^\star(J,\bvec{\Phi},K) & = \frac{K}{2K_0-K+1},
\end{align} 
for the functions $1$ and $2$ in Table~\ref{tab:scale-functions-examples}, and \begin{align}
\label{eq:null-space-constant-tight-upper-bound-p-to-0-continuous-mixed-norm}
\lim_{p_2\downarrow 0} \gamma^\star(J,\bvec{\Phi},K) & = \frac{K}{2K_0-K+1},
\end{align}
for the function $3$ in Table~\ref{tab:scale-functions-examples}. 
Now, according to Theorem~5 in~\cite{gribonval2007highly}, if $J(\bvec{x})=\sum_{j=1}^N f(x_j), B(\bvec{x})=\sum_{j=1}^N b(x_i)$, and $f=a\circ b$, for two sparsity promoting functions $a,b$, then, \begin{align}
\label{eq:nsc-comparison-ine1quality}
\gamma(l_0,\bvec{\Phi},K)\le \gamma(J,\bvec{\Phi},K)\le \gamma(B,\bvec{\Phi},K),\ \forall K\ge 1.
\end{align}  Denoting the $J$ functions corresponding to the functions $1,2,3$ by $J_1,J_2,J_3$ and noting that for the functions $1,2,3$ in Table~\ref{tab:scale-functions-examples}, $f=a\circ l_p$, with $a(x)=\ln(1+\abs{x}),\ a(x)=1-e^{-\abs{x}}$, and $a(x)=\int \abs{x}^{p/p_2}d\nu,\ p\in[p_1,p_2]$, respectively, one obtains using the inequalities~\eqref{eq:nsc-comparison-ine1quality}, for all $K\ge 1$, \begin{align}
\ & \gamma(l_0,\bvec{\Phi},K)\le \gamma(J_i,\bvec{\Phi},K)\le \gamma(l_p,\bvec{\Phi},K),\ i=1,2\\
\ & \gamma(l_0,\bvec{\Phi},K)\le \gamma(J_3,\bvec{\Phi},K)\le \gamma(l_{p_2},\bvec{\Phi},K).
\end{align} Now, according to Theorem 2 of~\cite{chen2015null}, if $\mathrm{Spark}(\bvec{\Phi})=2K_0+1$\footnote[2]{$\mathrm{Spark}(\bvec{\Phi})$ is the smallest positive integer $r$ such that any collection of $r$ columns of $\bvec{\Phi}$ is linearly dependent.} for all $1\le K\le K_0$, \begin{align}
\lim_{p\downarrow }\gamma(l_p,\bvec{\Phi},K) & = \gamma(l_0,\bvec{\Phi},K) = \frac{K}{2K_0-K+1},
\end{align} where the last identity is taken from~\cite{foucart2013mathematical}. Therefore, $\forall 1\le K\le K_0$, $\lim_{p\downarrow 0}\gamma(J_i,\bvec{\Phi},K)(i=1,2)=\lim_{p_2\downarrow 0}\gamma(J_3,\bvec{\Phi},K)=\gamma(l_0,\bvec{\Phi},K)=\frac{K}{2K_0-K+1}$. Hence, as $p,p_2\to 0^+$, the upper bounds~\eqref{eq:null-space-constant-upper-bound-expression-exp-or-log}, and~\eqref{eq:null-space-constant-upper-bound-expression-continuous-mixed-norm} become tight, given that $\mathrm{Spark}(\bvec{\Phi})=2K_0+1$.

\textbf{2)} It is known from~\cite{gribonval2007highly} that exact recovery of any $K$-sparse vector can be done by solving the problem $P_f$ if and only if $\gamma(J,\bvec{\Phi},K)<1$. Thus, by requiring that $\gamma^*(J,\bvec{\Phi},K)<1$, one can obtain bounds on $\delta_{2K_0},$ given $p,K,K_0$; on $p,$ given $K,K_0,\delta_{2K_0}$; and on $K,$ given $K_0,p,\delta_{2K_0}$, under which exact recovery is possible by solving the problem $P_f$.

We further observe that the bound~\eqref{eq:null-space-constant-upper-bound-expression-continuous-mixed-norm} becomes the bound for the NSC of $l_p$ minimization, when $p_1=p_2=p$. However, this bound is a little different than the bound proposed by~\cite{gu2018sparse-lp-bound} in Eq~(8)(9)(10)(11), as in our bound the numerator of $\zeta_2$ consists of $3L'+1$, whereas, according to the bound in~\cite{gu2018sparse-lp-bound}, it should be $3L'$. This mismatch seems to stem from the fact that in the second last step to arrive Eq~(55) in~\cite{gu2018sparse-lp-bound}, the authors seem to have made a typo by writing $1-p+\frac{\frac{1}{2}+\frac{1}{2}(2K_0-K)}{2K_0-K+1}$, whereas it should have been $1-p+\frac{1+\frac{1}{2}(2K_0-K)}{2K_0-K+1}$, which follows from the definition of $\mu$ in Eq~(53) therein. This adjustment would make their bound identical to our bound~\eqref{eq:null-space-constant-upper-bound-expression-continuous-mixed-norm} for $p_1=p_2=p$.   
\section{Recovery conditions on RIC, $p,K$ for a few special functions}
\label{sec:recovery-conditions-ric-p-k}
In this section, we obtain bounds on $\delta_{2K_0},p,K$, for the three functions in Table~\ref{tab:scale-functions-examples} and discuss the implications of these bounds that ensure exact recovery by solving the problem $P_f$. 
\begin{thm}
	\label{thm:delta-upper-bound-perfect-recovery}
	For any $p\in[0,1]$, and any integer $1\le K\le K_0$, any $K-$sparse vector can be recovered exactly by solving $P_f$:
	\begin{enumerate}
		\item with the functions $1,2$ in Table~\ref{tab:scale-functions-examples} if, \begin{align}
		\label{eq:delta-upper-bound-perfect-recovery-exp-log}
		\delta_{2K_0} & < f_1(K,K_0,p):=\frac{1}{1+b(K,K_0)\left(\frac{K}{K_0+1}\right)^{1/p}},
		\end{align}
		where $b(K,K_0)=\left(3-\frac{1}{2K_0-K+1}\right)(\sqrt{2}+1)/4$;
		\item with the function $3$ in Table~\ref{tab:scale-functions-examples}, if \begin{align}
		\label{eq:delta-upper-bound-perfect-recovery-continuous-mixed-norm}
		\delta_{2K_0} & < f_2(K,K_0,p_2):=\frac{1}{1+d(K,K_0)\left(\frac{K}{K_0+1}\right)^{1/p_2}},
		\end{align}
		where $d(K,K_0)=\left(3+\frac{1}{2K_0-K+1}\right)(\sqrt{2}+1)/4$;
	\end{enumerate}
\end{thm}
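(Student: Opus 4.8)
The plan is to leverage two facts already in hand: (i) every $K$-sparse vector is recovered by solving $P_f$ if and only if $\gamma(J,\bvec{\Phi},K)<1$~\cite{gribonval2007highly} (as recorded in the discussion after Corollary~\ref{cor:null-space-constant-upper-bound-different-functions}), and (ii) under $\delta_{2K_0}<1$ one has the strict bound $\gamma(J,\bvec{\Phi},K)<\gamma^*(J,\bvec{\Phi},K)$ from Theorem~\ref{thm:null-space-constant-upper-bound} and Corollary~\ref{cor:null-space-constant-upper-bound-different-functions}. Hence it suffices, for each of the three functions, to find an explicit threshold on $\delta_{2K_0}$ below which $\gamma^*(J,\bvec{\Phi},K)\le 1$; translating that threshold into the stated closed form is the whole content of the theorem.

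For the functions $1$ and $2$, Corollary~\ref{cor:null-space-constant-upper-bound-different-functions} gives $\gamma^*(J,\bvec{\Phi},K)=\frac{K}{L'}\,(1\vee\zeta_1)^p$ with $L'=2K_0-K+1$. Since $K\le K_0$ forces $K<L'$, the prefactor $K/L'$ is already $<1$, so $\gamma^*\le 1$ is automatic when $\zeta_1\le 1$ and otherwise reduces to $\zeta_1\le (L'/K)^{1/p}$. Because $L'\ge K_0+1$ and $1/p\ge 1$, it is enough to impose the stronger (hence sufficient) requirement $\zeta_1\le\big((K_0+1)/K\big)^{1/p}$, whose right-hand side is $\ge 1$ and therefore also subsumes the case $\zeta_1\le 1$. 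Substituting $\zeta_1=\frac{3L'-1}{2\sqrt{K}\sqrt{L'-1}}C'_{2K_0}$ together with $C'_{2K_0}=\frac{\sqrt2+1}{2}\frac{\delta_{2K_0}}{1-\delta_{2K_0}}$, using the elementary equivalence $\frac{t}{1-t}<c\Leftrightarrow t<\frac{1}{1+1/c}$ to isolate $\delta_{2K_0}$, and bounding the coefficient multiplying $(K/(K_0+1))^{1/p}$ via the relation between $\sqrt{K(L'-1)}$ and $L'$, I expect to land on exactly $f_1(K,K_0,p)=\big(1+b(K,K_0)(K/(K_0+1))^{1/p}\big)^{-1}$ with $b(K,K_0)=(3-\tfrac{1}{2K_0-K+1})(\sqrt2+1)/4$.

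The argument for function $3$ is structurally identical. There $\gamma^*(J,\bvec{\Phi},K)=\frac{K}{L'}\big[\widehat{\zeta}_2^{\,p_1}\indicator{\widehat{\zeta}_2\le 1}+\widehat{\zeta}_2^{\,p_2}\indicator{\widehat{\zeta}_2>1}\big]$; since the binding exponent on the regime $\widehat{\zeta}_2>1$ is $p_2$ and $\widehat{\zeta}_2<\zeta_2$, I would replace $\widehat{\zeta}_2$ by $\zeta_2=\frac{3L'+1}{2\sqrt{K}\sqrt{L'-1}}C'_{2K_0}$ and $p$ by $p_2$ and repeat the previous paragraph verbatim; the only change in the computation is the numerator $3L'-1\mapsto 3L'+1$, which produces $d(K,K_0)=(3+\tfrac{1}{2K_0-K+1})(\sqrt2+1)/4$. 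Finally, the endpoints $p=0$ and $p_2=0$ are handled as limiting cases: the functions then collapse to a positive multiple of $\opnorm{\cdot}{0}$ and $(K/(K_0+1))^{1/p}\to 0$, so both thresholds tend to $1$ and the condition degenerates to $\delta_{2K_0}<1$, which indeed forces every $2K_0$ columns of $\bvec{\Phi}$ to be linearly independent and hence guarantees $P_0$-recovery of every $K$-sparse vector with $K\le K_0$.

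The routine parts are the two substitutions and the rearrangement into $\big(1+(\cdot)\big)^{-1}$ form. The step that needs the most care — and which I expect to be the main obstacle — is the passage from the \emph{raw} sufficient condition on $\delta_{2K_0}$, whose denominator carries the factor $\sqrt{K}\sqrt{L'-1}$, to the clean closed form whose denominator is $L'$: the implication must be kept in the correct (restrictive) direction uniformly over $1\le K\le K_0$ and over $p\in(0,1]$, and one must also keep track of which admissible pair $(\alpha_f(1,L'),\beta_f(1,L'))$ from Lemma~\ref{lem:generalized-holder-type-bounds} was fixed in Corollary~\ref{cor:null-space-constant-upper-bound-different-functions} (namely $(1/L',\,1-1/L')$ for functions $1,2$ and $(p_1,\,1-p_1)$ for function $3$), since these feed the value of $\pi_f(L')$ and hence the precise numerator $3L'\mp1$.
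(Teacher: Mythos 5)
Your outline reproduces the paper's proof essentially line for line: the same reduction via $\gamma(J,\bvec{\Phi},K)<\gamma^\star(J,\bvec{\Phi},K)$, the same observation that $K<L'$ disposes of the regime $\zeta_1\le 1$ so that everything hinges on $\zeta_1<(L'/K)^{1/p}$ (resp. $\zeta_2<(L'/K)^{1/p_2}$), the same weakening $L'\ge K_0+1$, and the same rearrangement $\tfrac{\delta}{1-\delta}<c\Leftrightarrow\delta<(1+1/c)^{-1}$. The paper's own proof is exactly this argument, stated in three sentences with the final substitution left implicit.

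However, the step you flag as ``the main obstacle'' is a genuine one, and it does not close in the direction you (and the paper) need. Isolating $\delta_{2K_0}$ from $\zeta_1<(L'/K)^{1/p}$ gives the exact threshold $\bigl(1+c_{K}\,(K/L')^{1/p}\bigr)^{-1}$ with $c_{K}=\frac{(3L'-1)(\sqrt{2}+1)}{4\sqrt{K}\sqrt{L'-1}}$, whereas $f_1$ carries $b(K,K_0)=\frac{(3L'-1)(\sqrt{2}+1)}{4L'}$. By AM--GM, $\sqrt{K(L'-1)}=\sqrt{K(2K_0-K)}\le K_0<L'$, so $c_K\ge b(K,K_0)$: replacing $\sqrt{K}\sqrt{L'-1}$ by $L'$ \emph{enlarges} the threshold, i.e.\ weakens the condition, which is the non-restrictive direction. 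The only compensating slack is the factor $\bigl(L'/(K_0+1)\bigr)^{1/p}\ge 1$ gained from $L'\ge K_0+1$, and the required inequality $\bigl(L'/(K_0+1)\bigr)^{1/p}\ge L'/\sqrt{K(L'-1)}$ fails at $K=K_0$ (where the left side equals $1$ but the right side equals $(K_0+1)/K_0>1$) and, at $p=1$, fails for every $1\le K\le K_0$ since it amounts to $(K_0+1)^2\le K(2K_0-K)\le K_0^2$. Concretely, for $K=K_0=1$, $p=1$ the exact requirement is $\delta_2\lesssim 0.399$ while \eqref{eq:delta-upper-bound-perfect-recovery-exp-log} only asks $\delta_2<0.57$. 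So the implication $\delta_{2K_0}<f_1\Rightarrow\zeta_1<(L'/K)^{1/p}$ cannot be established by this route; to make the argument sound one must either keep $\sqrt{K}\sqrt{L'-1}$ in place of $L'$ in the coefficient (i.e.\ state the theorem with $c_K$ instead of $b(K,K_0)$), or restrict the range of $p$ so that the $(L'/(K_0+1))^{1/p}$ slack dominates. The same remark applies verbatim to part 2 with $3L'-1$ replaced by $3L'+1$. Your instinct to single out this step was correct; the expectation that the computation ``lands on exactly $f_1$'' is where the plan breaks down.
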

\begin{proof}
	For the part 1), the proof follows by noting that if $\zeta_1\le 1$, then obviously, from~\eqref{eq:null-space-constant-upper-bound-expression-exp-or-log} $\gamma(J,\bvec{\Phi},K)<1$, for all $K\le K_0$, whereas, if $\zeta_1>1$, then $\gamma(J,\bvec{\Phi},K)<1$ if $\zeta_1<(L'/K)^{1/p}$. Since $L'>K,\ \forall K\le K_0$, it follows that a sufficient condition for ensuring $\gamma(J,\bvec{\Phi},K)<1$ is $\zeta_1<(L'/K)^{1/p}$. Finally, noting that $K\le K_0\implies K/(2K_0-K+1)\le K/(K_0+1)$, $\zeta_1<(L'/K)^{1/p}$ is ensured if~\eqref{eq:delta-upper-bound-perfect-recovery-exp-log} is satisfied. For part 2), using similar reasoning, one finds that a sufficient condition for showing $\gamma(J,\bvec{\Phi},K)<1$ is $\zeta_2<(L'/K)^{1/p_2}$, which is ensured if~\eqref{eq:delta-upper-bound-perfect-recovery-continuous-mixed-norm} is satisfied.  
\end{proof}
We make the following comments on the results of Theorem~\ref{thm:delta-upper-bound-perfect-recovery}:\\
	\textbf{1)} Both the functions $f_1(K,K_0,p)$ and $f_2(K,K_0,p_2)$ are decreasing functions of $K$ for $1\le K\le K_0$. This is obvious for $f_2(K,K_0,p_2)$. For $f_1(K,K_0,p)$, it is straightforward to show that $\frac{d}{dK}(b(K,K_0)K^{1/p})=\frac{(\sqrt{2}+1)K^{1/p-1}}{4p}\left(3-\frac{2K_0-(1-p)K+1}{(2K_0-K+1)^2}\right)$. Since $(2K_0-K+1)^2-(2K_0-(1-p)K+1)=(2K_0-K+1)(2K_0-K)+pK>0$ whenever $0<K\le 2K_0$, it follows that $b(K,K_0)K^{1/p}$ is increasing in $K$, which implies that $f_1(K,K_0,p)$ is a decreasing function of $K$ for $1\le K\le K_0$. Therefore, as the sparsity becomes smaller, the RIC bound becomes larger and the recovery becomes easier.\\
	\textbf{2)} For $1\le K\le K_0$, the functions $f_1(K,K_0,p)$ and $f_2(K,K_0,p_2)$ are decreasing functions of $p,p_2$, so that, for smaller $p,p_2$, the RIC bound is larger and the recovery becomes easier. Specifically, if $p,p_2\to 0+$, the recovery conditions reduce to $\delta_{2K_0}<1$. Since, as $p,p_2\to 0+$ the functions $1,2$ in Table~\ref{tab:scale-functions-examples} reduce to a constant multiple of $l_0$ (with constants $\ln 2$, and $1-e^{-1}$, respectively) and the function $3$ in Table~\ref{tab:scale-functions-examples} reduces to $l_0$, the problems $P_f$ reduce to the problem $P_0$, for which the condition $\delta_{2K_0}<1$ coincides with the optimal unique sparse recovery condition in~\cite{foucart2013mathematical}.\\
	\textbf{3)} Another interesting observation is that for given $K,K_0,p_2=p$, the RIC bound for the functions $1,2$ are larger than the RIC bound for function $3$. Therefore, exact recovery by solving $P_f$ is easier by using functions $1,2$ than by using function $3$. 

It is worth mentioning that both the bounds~\eqref{eq:delta-upper-bound-perfect-recovery-exp-log} and~\eqref{eq:delta-upper-bound-perfect-recovery-continuous-mixed-norm} are achievable by choosing $\bvec{\Phi}$ as a random matrix with  i.i.d. sub-Gaussian entries. In particular, for any $\epsilon\in (0,1)$, if one chooses $\bvec{\Phi}$ as a random matrix with i.i.d. Gaussian entries, with number of rows $M\ge 80.098(f_i(K,K_0,p))^{-2}(K_0\ln(Ne/K_0)+\ln(2/\epsilon))$, using Theorem 9.27 in~\cite{foucart2013mathematical}, the matrix $\bvec{\Phi}$ satisfies the bound~\eqref{eq:delta-upper-bound-perfect-recovery-exp-log} (for $i=1$) or the bound~\eqref{eq:delta-upper-bound-perfect-recovery-continuous-mixed-norm} (for $i=2$) with probability at least $1-\epsilon$. When $K,K_0$ are large, one can see from Theorem~\ref{thm:delta-upper-bound-perfect-recovery} that the quantities $b(K,K_0)$ and $d(K,K_0)$ become almost equal to the constant $\frac{3}{4}(\sqrt{2}+1):=D$, so that $f_i(K,K_0,p)^{-1}\approx 1+D(K/K_0)^{1/p}$. Furthermore, for fixed ratio $N/K_0$, $K_0\ln(Ne/K_0)=\mathcal{O}(K_0)$, so that, for large $K_0$, the number of rows of the Gaussian measurement matrix, required for satisfaction of the bounds~\eqref{eq:delta-upper-bound-perfect-recovery-exp-log} and~\eqref{eq:delta-upper-bound-perfect-recovery-continuous-mixed-norm} with high probability, scale as $(1+D(K/K_0)^{1/p})^2\mathcal{O}(K_0)$. 
\begin{thm}
	\label{thm:p-upper-bound-perfect-recovery}
	Let $\zeta_1,\zeta_2$ be defined as in Corollary~\ref{cor:null-space-constant-upper-bound-different-functions}. For any given $K_0\ge 1$, with matrix $\bvec{\Phi}$ satisfying $\delta_{2K_0}<1$, the following holds:
	\begin{enumerate}
		\item  For a given $K(K\le K_0)$, $P_f$ can recover any $K-$sparse vector under the following conditions: for $f$ as functions $1,2$ in Table~\ref{tab:scale-functions-examples}, $p\le p_{K,1}$, and for the function $3$ in Table~\ref{tab:scale-functions-examples}, $p_1\in(0,1),\ p_2\ge p_1,\ p_2\le p_{K,2}$ where \begin{align}
		\label{eq:p-upper-bound-perfect-recovery-exp-log}
		p_{K,1} & = \max\left\{p\in[0,1]\mid p\ln \zeta_1<\ln\left(\frac{2K_0-K+1}{K}\right)\right\},\\
		\label{eq:p-upper-bound-perfect-recovery-continuous-mixed-norm}
		p_{K,2} & = \max\left\{p\in[0,1]\mid p\ln \zeta_2<\ln\left(\frac{2K_0-K+1}{K}\right)\right\}.
		\end{align}
		\item For a given $p\in[0,1],p_1\in(0,1),p_2\in(p_1,1]$, $P_f$ can recover any $K-$sparse vector ($K\le K_0$) under the following conditions: for functions $1,2$ in Table~\ref{tab:scale-functions-examples}, $K\le K_1$, and and for function $3$ in Table~\ref{tab:scale-functions-examples}, $K\le K_2$, where \begin{align}
		\label{eq:K-upper-bound-perfect-recovery-exp-log}
		K_1 & = \max\left\{K\in[1,K_0],\ K\in \mathbb{Z}^+\mid\frac{K}{L'}\left(\indicator{\zeta_1\le 1} +\indicator{\zeta_1> 1}\zeta^p_1\right)<1\right\}, \\
		\label{eq:K-upper-bound-perfect-recovery-continuous-mixed-norm}
		K_2 & = \max\left\{K\in[1,K_0],\ K\in \mathbb{Z}^+\mid \frac{K}{L'}\left(\zeta_1^{p_1}\indicator{\zeta_2\le 1}+\indicator{\zeta_2>1}\zeta_2^{p_2}\right)<1\right\}.
		\end{align}   	
	\end{enumerate}
\end{thm}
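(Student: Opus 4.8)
The plan is to deduce both parts from two results already available: the characterization of~\cite{gribonval2007highly} (recalled just after Corollary~\ref{cor:null-space-constant-upper-bound-different-functions}) that every $K$-sparse vector is exactly recovered by $P_f$ if and only if $\gamma(J,\bvec{\Phi},K)<1$, and the strict inequality $\gamma(J,\bvec{\Phi},K)<\gamma^*(J,\bvec{\Phi},K)$ of Theorem~\ref{thm:null-space-constant-upper-bound}, which holds since $\delta_{2K_0}<1$. So it suffices, in each case, to find conditions guaranteeing that the closed-form surrogate $\gamma^*(J,\bvec{\Phi},K)$ --- or, for function $3$, the explicit upper bound $\frac{K}{L'}g(\zeta_2)$ on it supplied by Corollary~\ref{cor:null-space-constant-upper-bound-different-functions} --- is at most $1$; then $\gamma(J,\bvec{\Phi},K)<1$ and recovery follows.

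For Part 1, I would fix $K\le K_0$, observe that $K\le K_0$ already forces $K<L'=2K_0-K+1$ (hence $K/L'<1$), and note that the scalars $\zeta_1,\zeta_2$ of Corollary~\ref{cor:null-space-constant-upper-bound-different-functions} depend only on $K,K_0,\delta_{2K_0}$ and not on the exponents. For functions $1,2$, Corollary~\ref{cor:null-space-constant-upper-bound-different-functions} gives $\gamma^*=K/L'<1$ when $\zeta_1\le 1$ and $\gamma^*=\frac{K}{L'}\zeta_1^{p}$ when $\zeta_1>1$, the latter being $<1$ exactly when $p\ln\zeta_1<\ln(\frac{2K_0-K+1}{K})$; this single inequality also subsumes the case $\zeta_1\le1$, where its left side is $\le 0$ while its right side is $>0$. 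Hence every $p$ in the set defining $p_{K,1}$ yields $\gamma^*<1$, so every $p\le p_{K,1}$ does too, which is the stated condition. For function $3$ I would repeat this with $g(\zeta_2)=\zeta_2^{p_1}\indicator{\zeta_2\le1}+\zeta_2^{p_2}\indicator{\zeta_2>1}$: when $\zeta_2\le 1$ the bound is $<K/L'<1$, and when $\zeta_2>1$ it is $\le 1$ precisely when $p_2\ln\zeta_2\le\ln(\frac{2K_0-K+1}{K})$, which is what $p_2\le p_{K,2}$ encodes; the constraint $p_1\le p_2$ is only structural.

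For Part 2 the exponents are fixed and $K$ varies. Writing $h(K)$ for $\gamma^*(J,\bvec{\Phi},K)=\frac{K}{L'}(\indicator{\zeta_1\le1}+\zeta_1^{p}\indicator{\zeta_1>1})$ in the case of functions $1,2$, and $h(K):=\frac{K}{L'}(\zeta_2^{p_1}\indicator{\zeta_2\le1}+\zeta_2^{p_2}\indicator{\zeta_2>1})\ge\gamma^*(J,\bvec{\Phi},K)$ in the case of function $3$, the integer $K_1$ (resp. $K_2$) is by definition the largest $K\in[1,K_0]\cap\mathbb{Z}^+$ with $h(K)<1$. The care needed here is that $h$ is \emph{not} monotone in $K$ --- it is a product of $K/L'$, increasing in $K$, with $g(\zeta_i)$, non-increasing in $K$ --- so one cannot directly conclude $h(K)<1$ for all $K\le K_1$. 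Instead I would invoke the elementary fact, immediate from Definition~\ref{def:J-NSC-definition} because enlarging $K$ only enlarges the family of constraints defining it, that the \emph{exact} null space constant $\gamma(J,\bvec{\Phi},K)$ is non-decreasing in $K$. Then for any $K\le K_1$, $\gamma(J,\bvec{\Phi},K)\le\gamma(J,\bvec{\Phi},K_1)<\gamma^*(J,\bvec{\Phi},K_1)\le h(K_1)<1$, so every such $K$-sparse vector is recovered; the argument for $K_2$ is identical.

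The main obstacle is precisely this last point: because the explicit bound $\gamma^*$ need not be monotone in $K$, the implication ``$\gamma^*(K_1)<1\Rightarrow$ recovery for all $K\le K_1$'' must be routed through monotonicity of the true NSC together with $\gamma<\gamma^*$, not through monotonicity of the bound. A minor technical point to settle along the way is that the ``$\max$'' in the definitions of $p_{K,1},p_{K,2},K_1,K_2$ is best read as a supremum --- it is attained for the integer-valued $K_1,K_2$ --- because, when the defining strict inequality has no largest solution, the endpoint value still gives $\gamma^*\le1$ and hence, via $\gamma<\gamma^*$, recovery. Everything else reduces to substituting the formulas of Corollary~\ref{cor:null-space-constant-upper-bound-different-functions} and rearranging a one-variable inequality.
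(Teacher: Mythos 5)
Your proposal is correct and follows essentially the same route as the paper: invoke the equivalence of exact recovery with $\gamma(J,\bvec{\Phi},K)<1$, dominate $\gamma$ by the closed-form $\gamma^*$ of Theorem~\ref{thm:null-space-constant-upper-bound} and Corollary~\ref{cor:null-space-constant-upper-bound-different-functions}, and rearrange $\gamma^*<1$ into the stated conditions on $p,p_2$ and $K$. The one place you go beyond the paper's (terse) argument is Part 2, where you correctly observe that $\gamma^*$ need not be monotone in $K$ and therefore route the conclusion for all $K\le K_1$ through the monotonicity of the true NSC in $K$ (immediate from Definition~\ref{def:J-NSC-definition}) combined with $\gamma<\gamma^*$; this patches a step the paper leaves implicit and makes the claim for every $K\le K_1$ (rather than only $K=K_1$) rigorous.
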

\begin{proof}
	The proof of the first part follows by plugging in the expressions of $\zeta_1,\ \zeta_2$ in the expressions for $\gamma(J,\bvec{\Phi},K)$, and requiring that $\gamma(J,\bvec{\Phi},K)<1$. When $\zeta_1,\zeta_2\le 1$, the LHS of the bounds~\eqref{eq:p-upper-bound-perfect-recovery-exp-log} and~\eqref{eq:p-upper-bound-perfect-recovery-continuous-mixed-norm} are negative and so trivially satisfied. Thus, these bounds captures the maximum range of $p,p_2$ for all values of $\zeta_1,\zeta_2$.
	
	For the proof of the second part, again we just need to ensure that, for a fixed $p,p_2$, $K$ is chosen in such a way that $\gamma^\star(J,\bvec{\Phi},K)<1$. Plugging in the expressions of $\gamma^\star(J,\bvec{\Phi},K)$ from~\eqref{eq:null-space-constant-upper-bound-expression-exp-or-log} and~\eqref{eq:null-space-constant-upper-bound-expression-continuous-mixed-norm}, the bounds~\eqref{eq:K-upper-bound-perfect-recovery-exp-log} and~\eqref{eq:K-upper-bound-perfect-recovery-continuous-mixed-norm} follow.
\end{proof}

%
\section{Numerical results}
\label{sec:numerical-results}
In this section we numerically study the recovery performance of the problem $P_f$ using the different sparsity promoting functions $f(x)=\abs{x}^p,\ln(1+\abs{x}^p),1-e^{-\abs{x}^p}$, which will now be referred to as $f_a(x),f_b(x)$ and $f_c(x)$, respectively, in the rest of this section. The parameters of interest that are varied to study the comparative performances are the sparsity $K$ of the problem, and the maximum elasticity, which for all these functions, is simply $p$ (See Table~\ref{tab:scale-functions-examples}). For the $l_p$ minimization problem, as the parameter $p$ is made smaller, the corresponding $l_p$ minimization problem becomes more nonconvex and therefore harder to solve~\cite{lai2013improved}, although with smaller number of measurements~\cite{chartrand2008restricted}. Similar implications can also be expected to hold true for the more general sparsity promoting functions $f_a,f_b,f_c$, which would imply the significance of conducting experiments by varying the maximum elasticity $p$. However, a detailed study of how the maximum elasticity affects the hardness of numerically solving $P_f$ with these sparsity promoting functions seems to be rather difficult and beyond the scope of this paper.   
\subsection{Example of recovery properties of $P_f$}
\label{sec:example-recovery-properties-Pf}
%
%
\begin{figure}[t!]
	\centering
	\begin{subfigure}{0.3\textwidth}
		\centering
		\includegraphics[width=2.25in, height=1.5in]{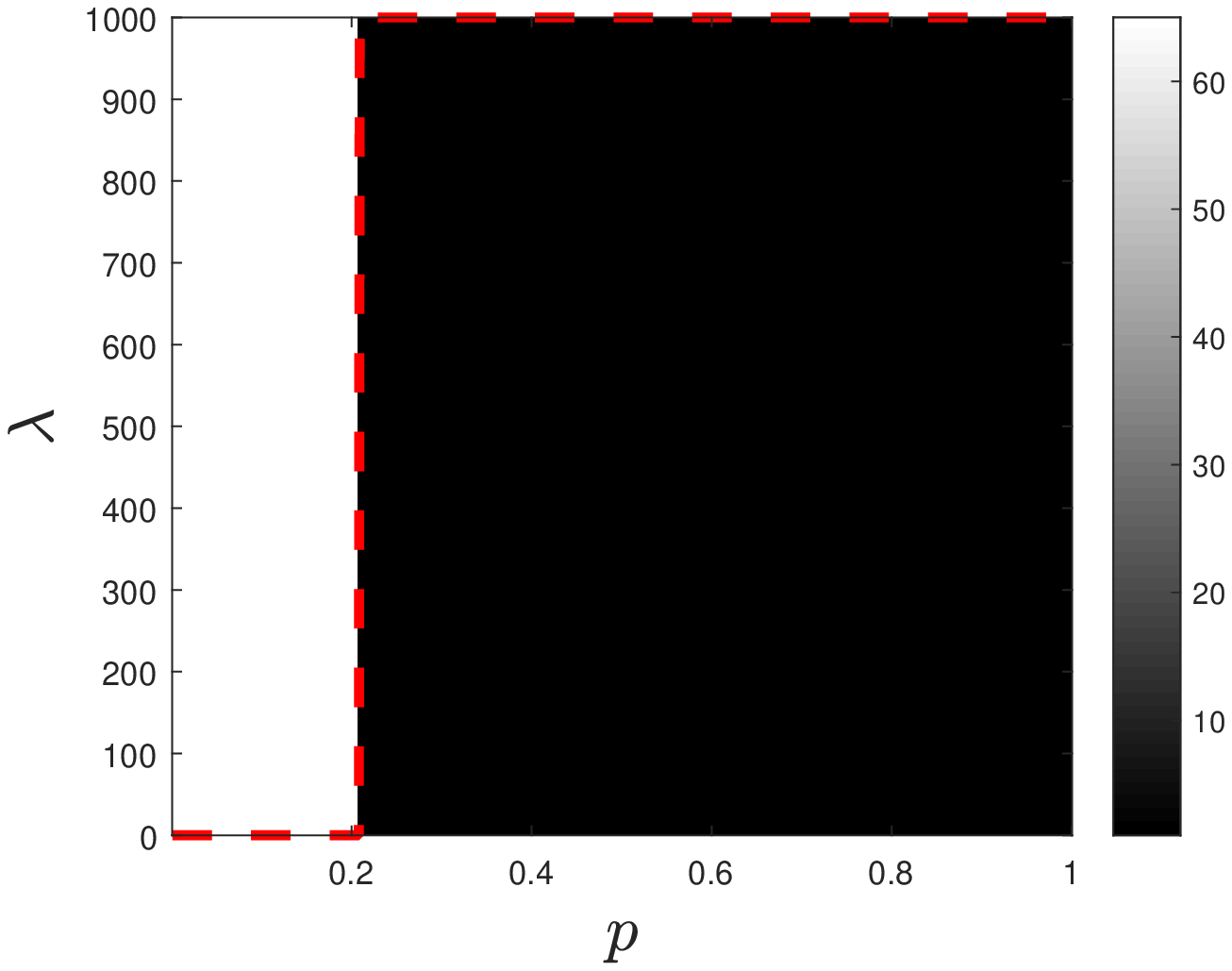}
		\caption{$f(x)=\abs{x}^p$}
		\label{fig:recovery-vs-lambda-p-lp}
	\end{subfigure}
	\begin{subfigure}{0.3\textwidth}
		\centering
		\includegraphics[width=2.25in, height=1.5in]{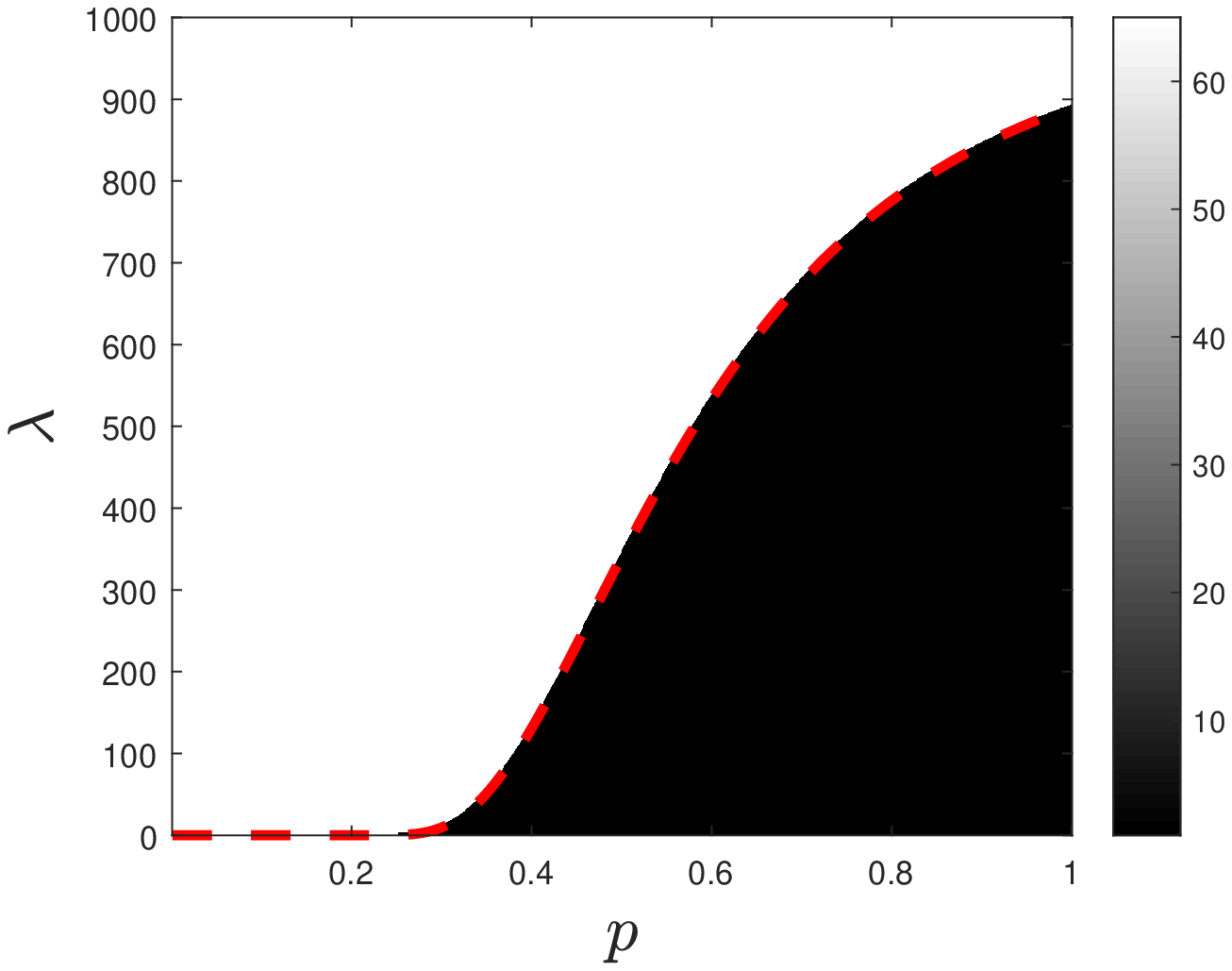}
		\caption{$f(x)=\ln(1+\abs{x}^p)$}
		\label{fig:recovery-vs-lambda-p-log}
	\end{subfigure}
	\begin{subfigure}{0.3\textwidth}
		\centering
		\includegraphics[width=2.25in, height=1.5in]{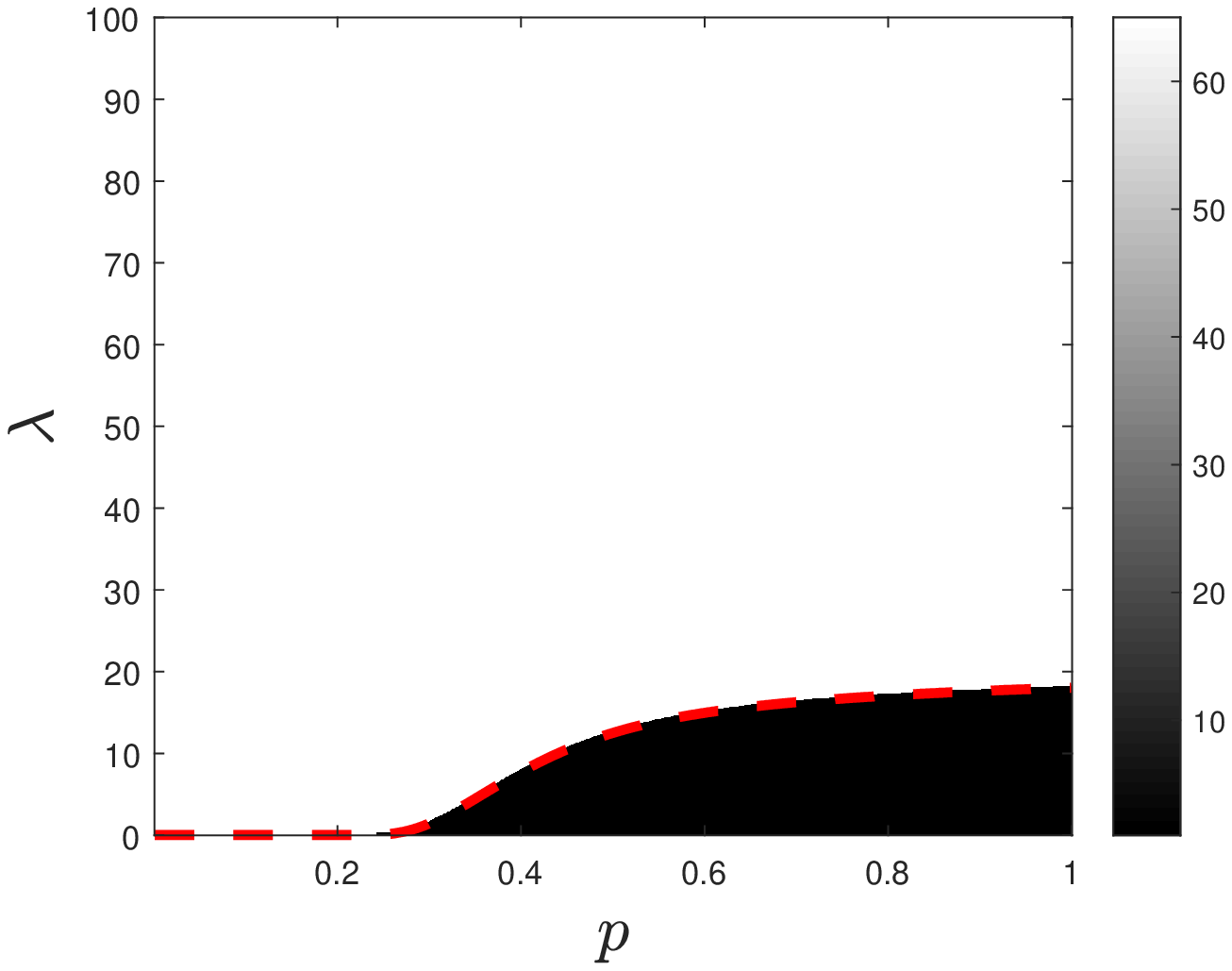}
		\caption{$f(x)=1-e^{-\abs{x}^p}$}
		\label{fig:recovery-vs-lambda-p-exp}
	\end{subfigure}
	\caption{Exact recovery of $1$-sparse vector solving the problem $Q_1$ for different $\lambda$ and $p$ for different sparsity promoting functions. For each pair $(\lambda,p)$, a white color represents a perfect recovery whereas a black color represents otherwise.}
	\label{fig:recovery-vs-lambda-p}
\end{figure}
In the first experiment we numerically study the recovery capabilities of the $J$-minimization (that is the problem $P_f$) by considering exact recovery of $1$-sparse vector using the functions $f_a,f_b$ and $f_c$. For any $1\le i\le N$, we study the following version of the problem $P_f$, which we will call $Q_i$: \begin{align}
\min_{\bvec{x}\in \real^N}\sum_{j=1}^N f(x_j),\ & \mathrm{s.t.}\ \bvec{y}=\lambda\bvec{\Phi}\bvec{e}_i,
\end{align}
where $\bvec{x}=\lambda\bvec{e}_i$, $\lambda\ne 0$ is a real number and $\bvec{e}_i\in \real^N$ is the $i^\mathrm{th}\ (1\le i\le N)$ standard basis vector with $[\bvec{e}_i]_i=1$ and $[\bvec{e}_i]_j=0,\ \forall j\ne i$. We also assume that the matrix $\bvec{\Phi}$ has full row rank $M$ and that $N=M+1$, so that the there exists a vector $\bvec{v}$ such that $\mathcal{N}(\bvec{\Phi})=\spn{\bvec{v}}$. Let us assume that $\bvec{v}=[v_1\ v_2\cdots v_N]^t$ and that the entries $v_i$ are ordered according to magnitude as $\abs{v_{\pi(1)}}\ge \cdots\ge \abs{v_{\pi(N)}}$, where $\pi$ is some permutation of the set $\{1,\cdots,N\}$. 

Now, consider solving the problem $Q_{i},1\le i\le N$. Clearly, any element of the constraint set satisfies $\bvec{x}=\lambda \bvec{e}_i + u\bvec{v}$, for some $u\in \real$. Therefore, to find the optimal $\bvec{x}$, one needs to find $u$ which minimizes the function $h_i(u)=f(\lambda+uv_i)+\sum_{j\ne i}f(uv_j)$. Observe that since the function $f$ is even, it is sufficient to assume $\lambda>0$ and consider the function $h_i(u)=f(\lambda+u\abs{v_i})+\sum_{j\ne i}f(u\abs{v_j})$. If $v_i=0$, then it is obvious that the minimizer of $h_i$ is $0$ and therefore the solution to $Q_i$ yields exact recovery. Therefore, we only consider the case $\abs{v_i}>0$. In this case, note that the function $h_i(u)$ is piece-wise concave in the intervals $(-\infty, -\lambda/\abs{v_i}),\ (-\lambda/\abs{v_i},0),(0,\infty)$. Since concave functions assume minimum at endpoints, and since the function $h_i(u)$ is decreasing in $(-\infty, -\lambda/\abs{v_i})$ and increasing in $(0,\infty)$, finding minimum of the function $h_i(u)$ is equivalent to finding $\min\{\sum_{j\ne i}f(\lambda \abs{v_j}/\abs{v_i}),\ f(\lambda)\}$. Therefore solution of the problem $Q_i$ yields exact recovery if and only if the solution is $u=0$, or equivalently, $\sum_{j\ne i}f(\lambda \abs{v_j}/\abs{v_i})\ge  f(\lambda)$. Clearly, when $i\ne \pi(1),$ $\sum_{j\ne i}f(\lambda \abs{v_j}/\abs{v_i})\ge f(\lambda\abs{v_{\pi(1)}}/\abs{v_i})\ge f(\lambda)$. Therefore, the solution of all the problems $Q_i,\ i\ne \pi(1)$ yield exact recovery. Furthermore, if $\sum_{j\ne 1}\abs{v_{\pi(j)}}\ge \abs{v_{\pi(1)}}$, we have $\sum_{j\ne i}f(\lambda \abs{v_{\pi(j)}}/\abs{v_{\pi(1)}})\ge f\left(\lambda\frac{\sum_{j\ne 1}\abs{v_{\pi(j)}}}{\abs{v_{\pi(1)}}}\right)\ge f(\lambda)$. Therefore, it only remains to see how the different functions perform while recovering exact solutions of the problem $Q_{\pi(1)}$ when $\sum_{j\ne 1}\abs{v_{\pi(j)}}< \abs{v_{\pi(1)}}$. To avoid confusion, we refer problem $Q_{\pi(1)}$ with such constraint on the corresponding vector spanning the null space as $Q_{\pi(1)}'$.

In order to study the solutions of the problem $Q'_{\pi(1)}$ we consider the simulation setup similar to the one used in~\cite[Section VII-A]{gu2018sparse-lp-bound}, and use the following sensing matrix: \begin{align}
\label{eq:example-measurement-matrix}
\bvec{\Phi} = \sqrt{\frac{2}{101.01}}\begin{bmatrix}
0.1 & 0 & 10\\
1 & -10 & 0
\end{bmatrix}.
\end{align} As discussed in Sec VII-A of~\cite{gu2018sparse-lp-bound}, for this matrix $K_0=1, \delta_{2K_0}=0.9998<1$, so that $P_0$ can recover any $1-$sparse signal. The null space of this matrix is spanned by $\bvec{v}=[100\ 10\ -1]^t$, so that, according to the discussion in the last paragraph, $\pi(1)=1$. Therefore, the corresponding $h_1$ function is \begin{align}
\label{eq:h_1-expression}
h_1(u) & = f(\lambda+100u)+f(10u)+f(u).
\end{align}

In Fig.~\ref{fig:recovery-vs-lambda-p}, exact recovery of a $1-$sparse vector by solving the problem $Q_{1}'$ with the measurement matrix of~\eqref{eq:example-measurement-matrix} is investigated using the sparsity promoting functions $f_a,f_b,f_c$. Since each of the functions $f_a,f_b,f_c$ is specified by their maximum elasticity $p$, for each pair of $\lambda$ and $p$, whether the solution of $Q_1'$ yields exact recovery is indicated by coloring the tuple $(\lambda,p)$ either in white if exact recovery occurs or in black otherwise. From Figs.~\ref{fig:recovery-vs-lambda-p-lp},~\ref{fig:recovery-vs-lambda-p-log} and~\ref{fig:recovery-vs-lambda-p-exp}, it can be seen that irrespective of the value of $\lambda$, for all the sparsity promoting functions, exact recovery for \emph{all} values of $\lambda$ occurs when $p\in (0,p_0)$ where $p_0$ is somewhat larger than $0.2$. Interestingly, for values of $p$ larger than this threshold, while exact recovery is not possible for \emph{any} value of $\lambda$ in the range $(0,1000)$ when $f_a$ is used, using the functions $f_b,f_c$, exact recovery is still possible when $\lambda$ is larger than the values indicated by the respective dashed curves in Figs.~\ref{fig:recovery-vs-lambda-p-log} and~\ref{fig:recovery-vs-lambda-p-exp}. Since the requirement that the bound $f(\lambda)\le f(\lambda/10)+f(\lambda/100)$ is satisfied for all $\lambda>0$, can be seen to be equivalent to requiring the corresponding $J-$NSC $< 1$ and since it can be shown that for any given $p\in(0,1]$, the $J-$NSC for $f_a,f_b,f_c$ are the same (see Appendix~\ref{sec:appendix-evaluating-nsc-null-space-spanned-by-single-vector}), the simulations in Fig.~\ref{fig:recovery-vs-lambda-p} imply that while for the function $\abs{x}^p$, exact recovery of any $1$-sparse vector is impossible by solving $Q_1'$ when the corresponding $J-$NSC $\ge 1$, for the functions $\ln(1+\abs{x}^p)$ and $1-e^{-\abs{x}^p}$, it is still possible for large enough $\lambda$.

\subsection{Comparison of NSC bounds with true NSC}
\label{sec:numerical-comparison-nsc-bounds}
\begin{figure}[t!]
	\centering
	\begin{subfigure}{.275\textwidth}
		\includegraphics[height = 1.5in, width = 1.75in]{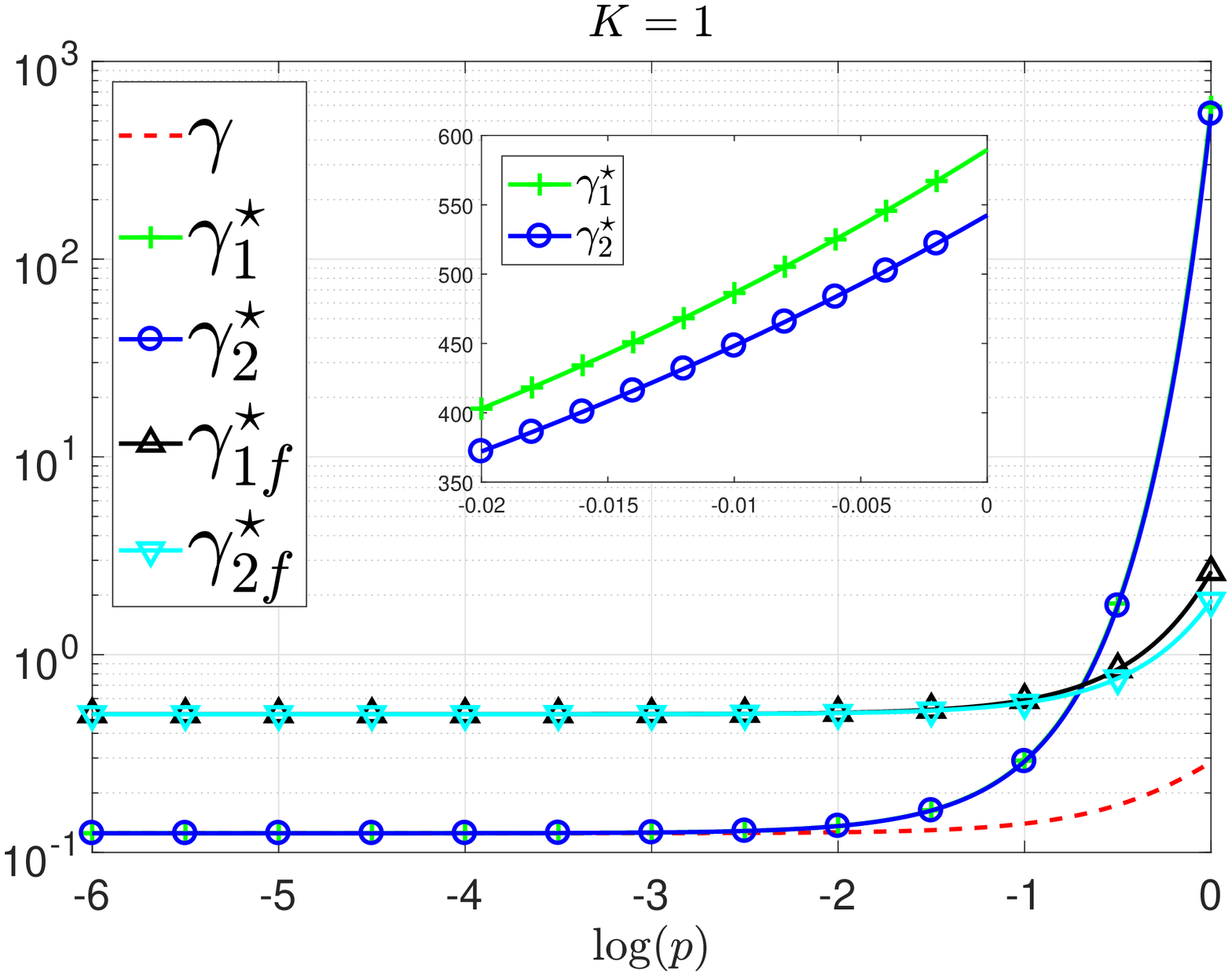}
	\end{subfigure}%
	\begin{subfigure}{.275\textwidth}
		\includegraphics[height = 1.5in, width = 1.75in]{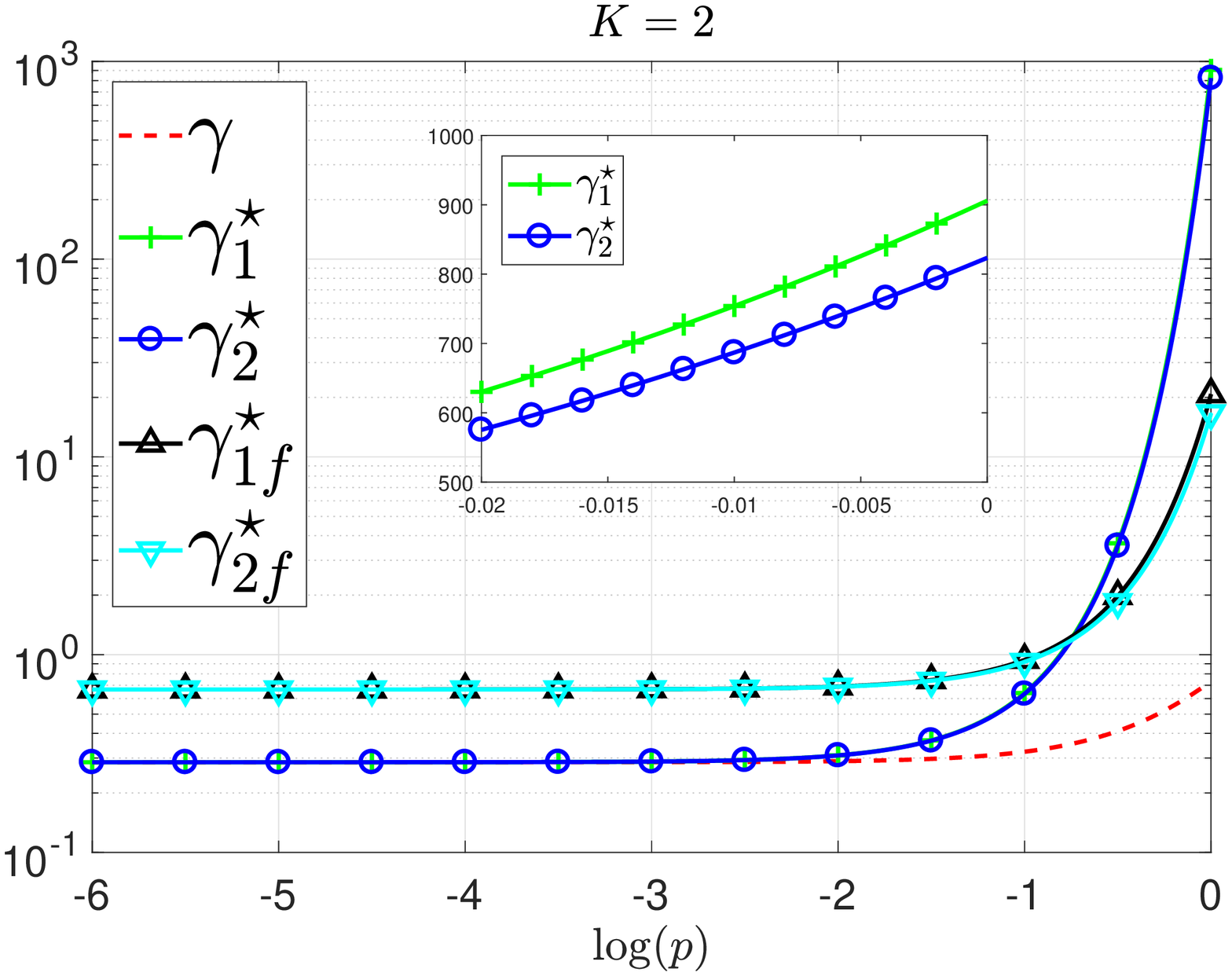}
	\end{subfigure}
	%
	
	\centering
	\begin{subfigure}{.275\textwidth}
		\includegraphics[height = 1.5in, width = 1.75in]{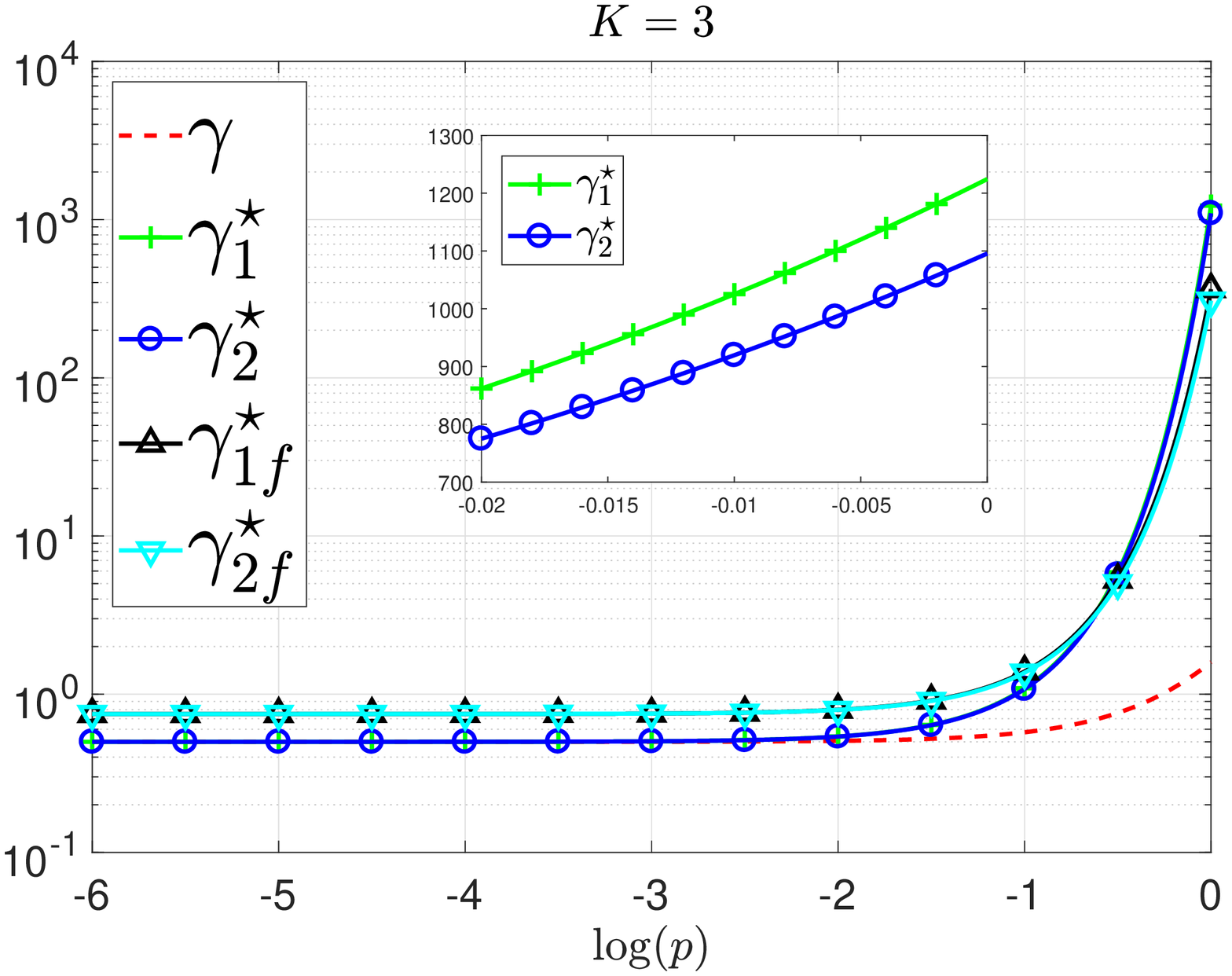}
	\end{subfigure}%
	\begin{subfigure}{.275\textwidth}
		\includegraphics[height = 1.5in, width = 1.75in]{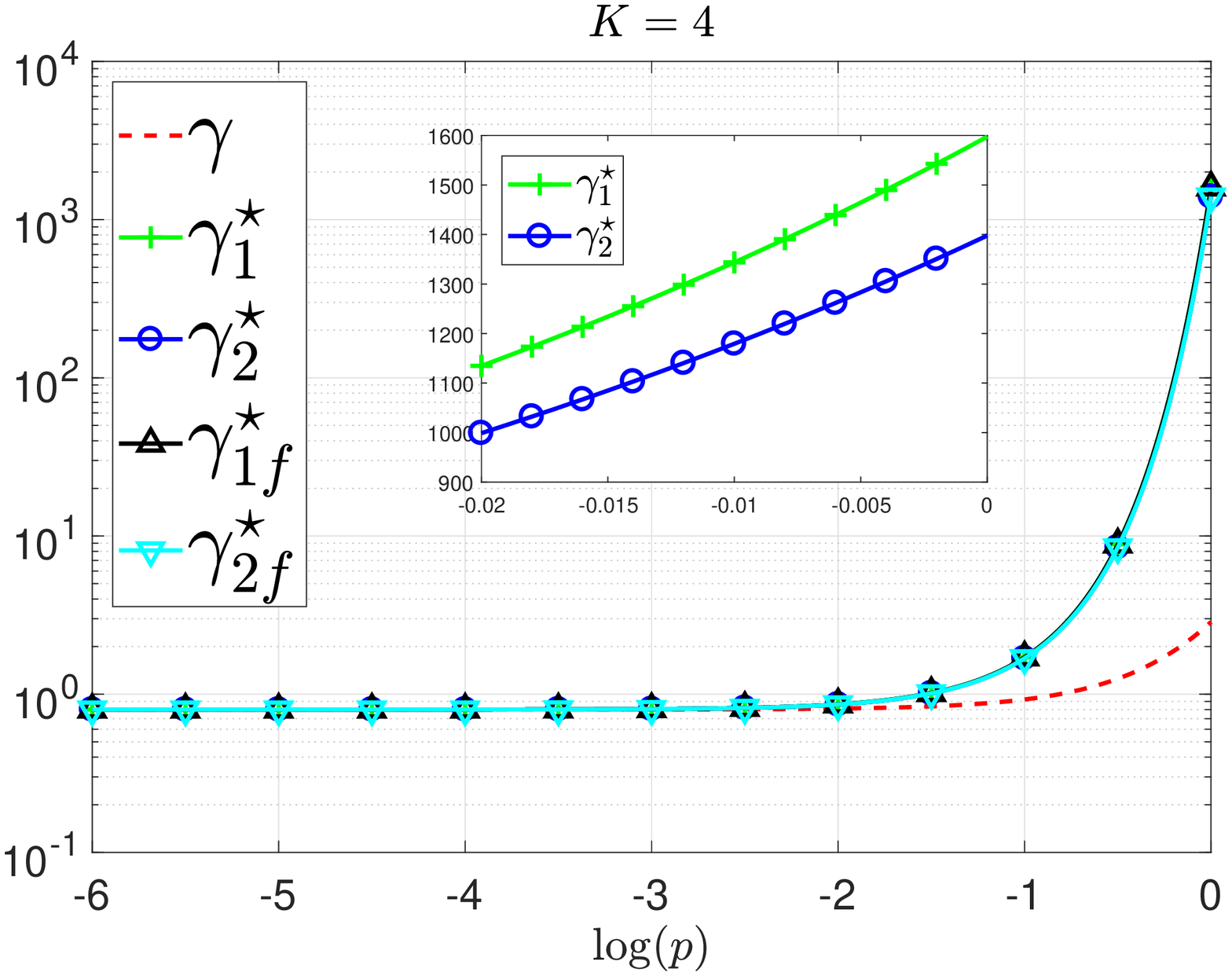}
	\end{subfigure}
	
	\caption{Variation of null space constant $\gamma$ and its bounds $\gamma^\star,\gamma_{1f}^\star,\gamma_{2f^\star}$ for  different functions and different $K$.}
	\label{fig:variation-nsc}
\end{figure}
In the second experiment, the bounds on NSC for the different functions are compared with the true NSC for a randomly generated measurement matrix. 

We consider the experimental setup used in Section VII-B of~\cite{gu2018sparse-lp-bound} and generate a random matrix $\bvec{\Phi}\in \real^{8\times 9}$ with standard normal entries and then normalize the columns of the matrix. The RIC's of the matrix are $\delta_2 = 0.5617,\ \delta_4 = 0.9034,\ \delta_6 = 0.9781,\ \delta_8 = 0.9999$. Clearly, $K_0 = 4$, and $\mathcal{N}(\bvec{\Phi})$ is spanned by a single vector. In this case, the $J-$NSC for the functions $f_a,f_b,f_c$ can be shown to be all identical. See Appendix~\ref{sec:appendix-evaluating-nsc-null-space-spanned-by-single-vector} for a proof of this fact.

In Fig~\ref{fig:variation-nsc}, for different values of $K$, the variation of the actual NSC and the bounds~\eqref{eq:null-space-constant-upper-bound-expression-exp-or-log}, and~\eqref{eq:null-space-constant-upper-bound-expression-continuous-mixed-norm} are plotted against the maximum elasticity of the functions ($p$) with either knowing only $\delta_8$, or with knowing all the RICs i.e., $\delta_2,\ \delta_4,\ \delta_6$ and $\delta_8$. In the plots, the true NSC is referred to as $\gamma$; with only knowing $\delta_8$, the NSC bound obtained from~\eqref{eq:null-space-constant-upper-bound-expression-continuous-mixed-norm} for $f_a$ is referred to as $\gamma_1^\star$, while for functions $f_b,f_c$, the bound obtained from~\eqref{eq:null-space-constant-upper-bound-expression-exp-or-log} is referred to as $\gamma_2^\star$. With the knowledge of $\delta_2,\delta_4,\delta_6$ and $\delta_8$, the bound for $f_a$ is referred to as $\gamma_{1f^\star}$, while for functions $f_b,f_c$ the corresponding bound is referred to as $\gamma_{2f}^\star$. All the plots reveal that for small $p$, $\gamma_2^\star$ is the best and for large $p$ $\gamma_{2f}^\star$ is the best bound. For small $p$ for all values of $K$, $\gamma_1^\star$ and $\gamma_2^\star$ are almost equal to the actual NSC $\gamma$, whereas, for large $K$ ($K=4$), all the bounds are almost equal for small $p$. This indicates that for highly sparse systems, for small $p$, the bounds derived in Corollary~\ref{cor:null-space-constant-upper-bound-different-functions} become almost equal to the actual NSC.

\subsection{Recovery conditions for different values of $K$ and $p$}
\label{secv:numerical-recovery-conditions-K-p}
\begin{figure}[t!]
	\centering
	\includegraphics[height = 1.6in, width = 3.2in]{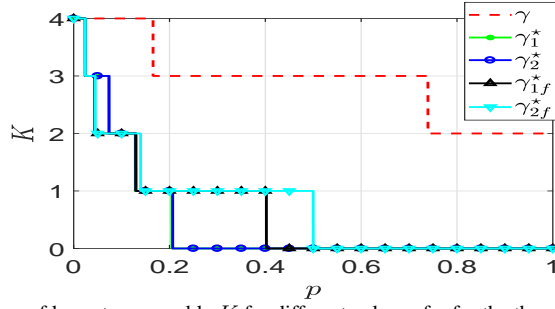}
	\caption{Variation of largest recoverable $K$ for different values of $p$ for the three different functions $f$.}
	\label{fig:K-variation-vs-p}
\end{figure}
In Figure~\ref{fig:K-variation-vs-p} the ranges of the maximum elasticity $p$ for which a particular sparsity $K$ is recoverable, are plotted so as to ensure that $\gamma,\gamma_1^\star,\gamma_2^\star,\gamma_{1f}^\star$ and $\gamma_{2f}^\star$ are $<1$ respectively. From the plot it can be seen that although none of the bounds $\gamma_1^\star,\gamma_2^\star,\gamma_{1f}^\star$ and $\gamma_{2f}^\star$ has as broad recoverable range of $p$ as that of the true NSC, the bounds $\gamma_2^\star$ and $\gamma_{2f}^\star$ have broader ranges than $\gamma_1^\star$ and $\gamma_{1f}^\star$ respectively. Moreover, for low values of $p$, the bounds with knowledge of only $\delta_8$ have broader recovery range than the bounds with full information of all the RICS and demonstrate the capability of recovering larger $K$. On the other hand, for large $p$, the recovery range for the bounds with full information about all the RICs are better than the one with knowledge of only $\delta_8$. However, for large $p$, the recoverable sparsity order becomes smaller. 
\subsection{Comparison of upper bounds~\eqref{eq:delta-upper-bound-perfect-recovery-exp-log} and~\eqref{eq:delta-upper-bound-perfect-recovery-continuous-mixed-norm}}
\label{sec:numerical-comparison-delta-upper-bounds}
\begin{figure}[t!]
	\centering
	\begin{subfigure}{0.25\textwidth}
		\centering
		\includegraphics[height = 1.5in, width = 1.75in]{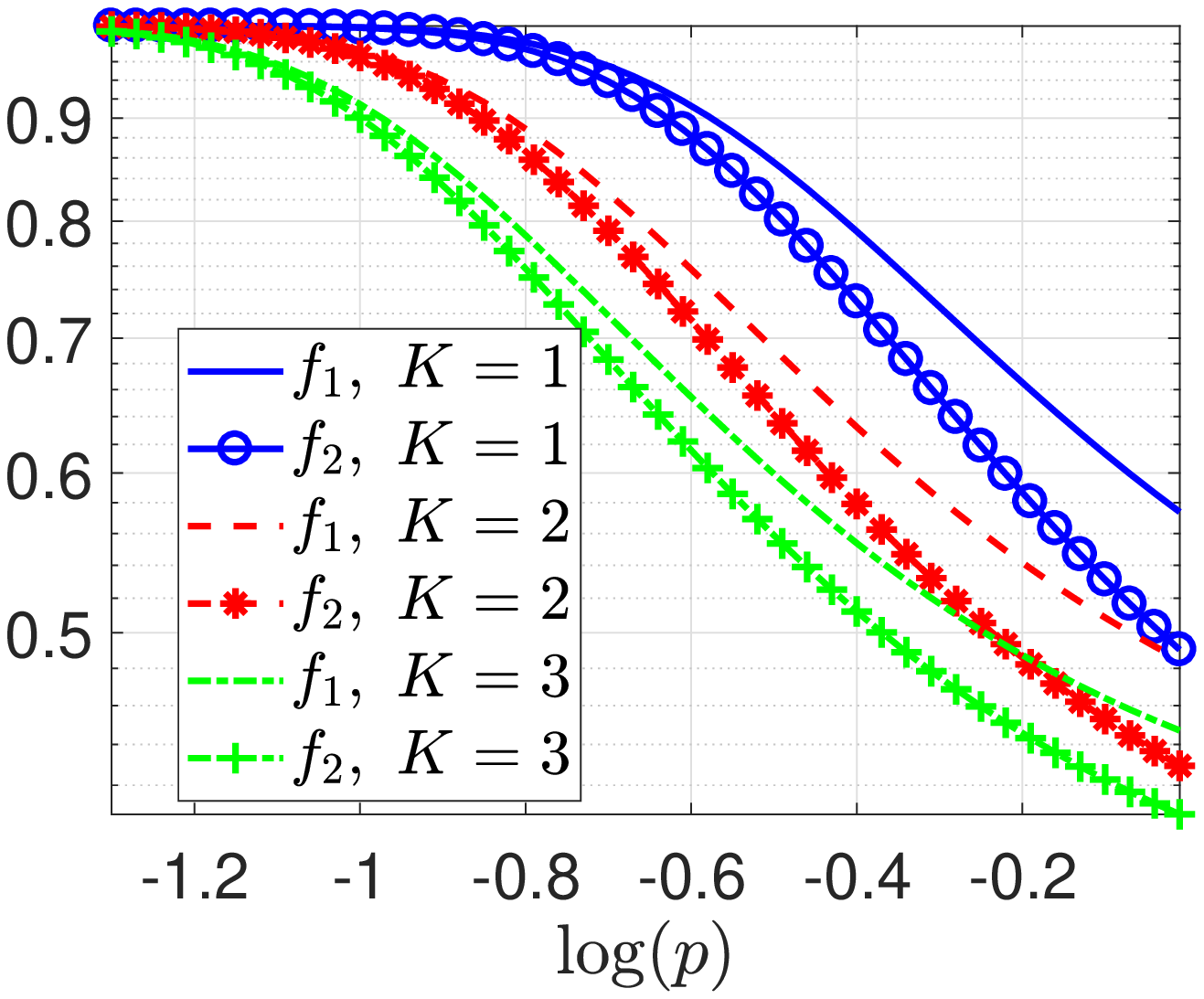}
		\caption{$K_0=K$}
		\label{fig:delta-upper-bounds-comparison-K0=K}
	\end{subfigure}
	\begin{subfigure}{0.25\textwidth}
		\centering
		\includegraphics[height = 1.5in, width = 1.75in]{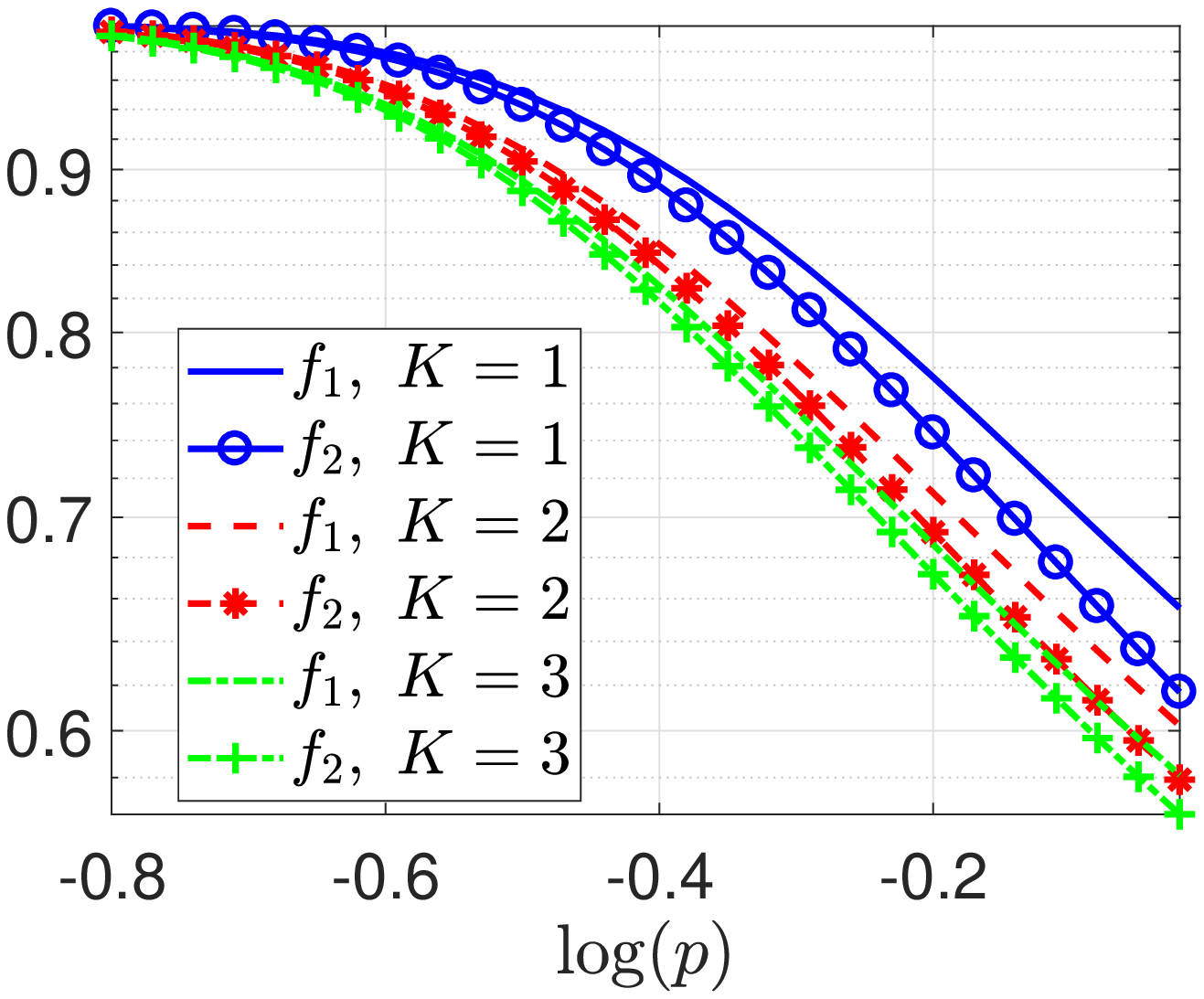}
		\caption{$K_0=2K$}
		\label{fig:delta-upper-bounds-comparison-K0=2K}
	\end{subfigure}
	\caption{Comparison of the upper bounds~\eqref{eq:delta-upper-bound-perfect-recovery-exp-log} and~\eqref{eq:delta-upper-bound-perfect-recovery-continuous-mixed-norm} on $\delta_{2K_0}$ for different values of $K,K_0$.}
	\label{fig:delta-upper-bounds-comparison}	
\end{figure}
In this section we plot the bounds $f_1(K,K_0,p)$~\eqref{eq:delta-upper-bound-perfect-recovery-exp-log} and $f_2(K,K_0,p_2)$ (with $p_2=p$)~\eqref{eq:delta-upper-bound-perfect-recovery-continuous-mixed-norm} against the maximum elasticity $p$ in the Fig.~\ref{fig:delta-upper-bounds-comparison} for different values of $K$. In Fig.~\ref{fig:delta-upper-bounds-comparison-K0=K} we use $K_0=K$ and in Fig.~\ref{fig:delta-upper-bounds-comparison-K0=2K} we use $K_0=2K$. From both the figures, it is evident that the upper bound $f_2$ is smaller than $f_1$ for a fixed $K,K_0,p$. Furthermore, for fixed $p$, both $f_1,f_2$ can be seen to decrease with increasing $K,K_0$, that is, as $K,K_0$ increase, the recovery bound on $\delta_{2K_0}$ decreases for all the functions, which is expected as recovering vectors of larger sparsity requires the measurement matrix to be closer to being ``unitary''. 
\section{Conclusion}
\label{sec:conclusion}
This paper aims to study sparse recovery problems, termed as the $J$-minimization problems, using general sparsity promoting functions using RIP and NSP. In order to do that the analysis technique of~\cite{gu2018sparse-lp-bound}, which focused on studying the performance of sparse recovery of $l_p$ functions, is generalized to the case of general sparsity promoting functions by introducing the notion of elasticity of a function and scale function. Thereafter, bounds on the general $J$-NSC is derived and is used to find concrete bounds for three particular sparsity promoting functions, $f(x)=\ln(1+\abs{x}^p),1-e^{-\abs{x}^p},\int_0^1 \abs{x}^p d\nu$. These specialized results are further used to find concrete bounds on the RIC that ensures sparse recovery using these functions, and are thereafter utilized to find bounds on maximum recoverable sparsity using these functions. Finally, it is numerically demonstrated that the $J-$minimization problem using $f(x)=\ln(1+\abs{x}^p),1-e^{-\abs{x}^p}$ has interesting theoretical recovery properties which are very competitive to the the $l_p$ minimization problem and in some cases more advantageous. Furthermore, the actual NSC, as well as the true maximum recoverable sparsity for these functions are evaluated and compared with the derived bounds for an example measurement matrix and the resulting implications are discussed. Several interesting questions, for example how the maximum elasticity affects the recovery capability of a general sparsity promoting functions, in terms of hardness of solving the corresponding $J$-minimization problem, as well as convergence rate of relevant solution techniques, can be considered as topics for future research.
%
%

%
 

%
\section*{Appendix}
\section{Proof of Lemma~\ref{lem:generalized-holder-type-bounds} }
\label{sec:appendix-proof-of-lemma-generalized-holder-type-bounds}
\subsection{Lower bound}
\label{sec:appendix-proof-of-lemma-generalized-holder-type-lower-bound}
First, we establish the lower bound in~\eqref{eq:main-holder-type-inequality-generalized-concave-function}. Before proceeding any further, we first observe that we can restrict ourselves to the case when $\bvec{a}\in \real_+^s$, since the function $f$ as well as $\opnorm{\cdot}{q}$ are even functions. More importantly, we observe that establishing the lower bound is equivalent to proving that $\sum_{j=1}^sf(a_j)\le sf\left(\frac{\opnorm{\bvec{a}}{q}}{s^{1/q}}\right)$, for $q\ge 1$, and $\bvec{a}\in \real_+^s$. However, the last statement is equivalent to showing that $\forall\ \xi>0$, $\opnorm{\bvec{a}}{q}\le s^{1/q}f^{-1}(\xi/s)\implies \sum_{j=1}^sf(a_j)\le \xi$. Hence, we fix any $\xi>0$, and solve the following optimization problem~\eqref{eq:equivalent-optimization-problem-lower-bound}: \begin{align}
\label{eq:equivalent-optimization-problem-lower-bound}
\max_{\bvec{a}} \sum_{j=1}^s f(a_j),\ \mathrm{s.t.}\bvec{a}\in \mathcal{C},
\end{align}
where $\mathcal{C}=\left\{\bvec{a}\in \real_+^s:\opnorm{\bvec{a}}{q}^q\le h(\xi)\right\}$, and $h(\xi)=s\left(f^{-1}(\xi/s)\right)^q$. We intend to show that the solution of the optimization problem~\eqref{eq:equivalent-optimization-problem-lower-bound} will yield a value of the objective function, which is $\le \xi$. 

To solve the optimization problem~\eqref{eq:equivalent-optimization-problem-lower-bound}, we first note that the objective function $\sum_{j=1}^s f(a_j)$ is concave whenever $\bvec{a}\in \real_+^s$. Also, the set $\mathcal{C}$ can be easily shown to be convex, since the function $\opnorm{\cdot}{q}$ is convex for $q\ge 1$. Hence, the optimizer of the problem~\eqref{eq:equivalent-optimization-problem-lower-bound} is its KKT point. 

To find the KKT point, we form the Lagrangian for problem~\eqref{eq:equivalent-optimization-problem-lower-bound}, \begin{align}
\label{eq:lagrangian-for-equivalent-optimization-problem}
\mathcal{L}(\bvec{a},\lambda) = \sum_{j=1}^sf(a_j)+\lambda\left(\opnorm{\bvec{a}}{q}^q-h(\xi)\right).
\end{align}
The KKT point $\bvec{a}^\star$ satisfies, for all $j=1,\cdots, s$, \begin{align}
\frac{\partial \mathcal{L}}{\partial a_j}|_{a_j=a_j^\star}=0, & 
\ \lambda\left(\opnorm{\bvec{a}^\star}{q}^q-h(\xi)\right)=0,
\end{align} with $\lambda=0$, if $\opnorm{\bvec{a}^\star}{q}^q<h(\xi)$, and $\lambda< 0$, if $\opnorm{\bvec{a}^\star}{q}^q=h(\xi)$. Now $\frac{\partial \mathcal{L}}{\partial a_j}|_{a_j=a_j^\star}=0\implies f'(a_j)=-\lambda q (a_j^\star)^{q-1},\ j=1,\cdots,s$. Now, if $\lambda=0$, clearly, we have $f'(a_j)=0,\ \forall j$, which is not possible as $f$ is strictly increasing in $[0,\infty)$. 
Hence, one must have $\lambda< 0$, and $f'(a_j^\star)=-\lambda q (a_j^\star)^{q-1}\ \forall j=1,\cdots,s$. Now, as both the functions $f'(x)$ and $x^{1-q}$ are positive valued with the former non-increasing and the latter strictly decreasing for $x>0$ the function $f'(x)/x^{q-1}$ is monotonically decreasing and one-one. Hence, $\frac{f'(a_j^\star)}{(a_j^\star)^{q-1}}=-\lambda q\ \forall j=1,\cdots,s\implies a_1^\star=\cdots,\ a_s^\star$. Hence, from $\opnorm{\bvec{a}^\star}{q}^q=h(\xi)$, and using $h(\xi)=s\left(f^{-1}(\xi/s)\right)^q$ one obtains, $a_j^\star=f^{-1}(\xi/s),\ j=1,\cdots,s$. Therefore, $\max_{\bvec{a}\in \mathcal{C}}\sum_{j=1}^s f(a_j)=sf(a_1^\star)=\xi.$ Since $\xi>0$ was chosen arbitrarily, we have obtained the lower bound of~\eqref{eq:main-holder-type-inequality-generalized-concave-function}.
\subsection{Upper bound}
\label{sec:appendix-proof-of-lemma-generalized-holder-type-upper-bound}
\subsubsection{Proof sketch}
\label{sec:upper-bound-proof-sketch}
For proving the upper bound in~\eqref{eq:main-holder-type-inequality-generalized-concave-function}, we follow the approach taken in~\cite{gu2018sparse-lp-bound} for proving Lemma~$1$ therein. We first fix some $\xi >0$, and show that for some fixed $\alpha,\beta\ge 0$, \begin{align}
\alpha f^{-1}\left(\sum_{j=1}^s \frac{f(a_j)}{s}\right)+\beta \opnorm{\bvec{a}}{\infty}<\xi\implies \frac{\opnorm{\bvec{a}}{q}}{s^{1/q}}\le h(\xi),
\end{align} for some function $h$. Later we choose $\alpha,\beta$ in such a way that $h(\xi)=\xi,\ \forall\xi>0$, so that the desired claim is proved.

Let $\kappa(\bvec{a}) = \alpha f^{-1}\left(\sum_{j=1}^s f(a_j)/s\right)+\beta \opnorm{\bvec{a}}{\infty}$. We fix some $\xi>0$, and w.l.o.g. assume that $\bvec{a}\in \real_+^s$, and $a_1\ge a_2\ge \cdots\ge a_s\ge 0$. Consider the set $\mathcal{D}=\left\{\bvec{a}\in \real_+^s:a_1\ge a_2\ge \cdots a_s,\ \kappa(\bvec{a})\le \xi\right\}$. We want to maximize the function $\opnorm{\bvec{a}}{q}/s^{1/q}$ over the set $\mathcal{D}$. In order to do this, following~\cite{gu2018sparse-lp-bound}, we will create $\mathcal{D}_{1}$ which is the convex polytope with its vertices (extreme points) taken from a set of points in $\mathcal{D}$. Since $\opnorm{\cdot}{q}$ is convex for $q\ge 1$, the maximum value of $\opnorm{\bvec{a}}{q}/s^{1/q}$ over $\mathcal{D}_{1}$ is attained at one of these vertices. Finally, we will show that $\mathcal{D}_{1}\supset\mathcal{D}$, so that the maximum of $\opnorm{\bvec{a}}{q}/s^{1/q}$ over points of the set $\mathcal{D}$ is upper bounded by the maximum value obtained over the vertices of $\mathcal{D}_1$. We now proceed to find these extreme points.

\subsubsection{Finding the vertices}
\label{sec:upper-bound-proof-finding-vertices}
As in the proof of Lemma~$1$ in~\cite{gu2018sparse-lp-bound}, the $s+1$ vertices of the $\mathcal{D}_1$ are obtained by taking the intersection of the hypersurface $\kappa(\bvec{a})=\xi$ with hyperplanes of the form $a_1=a_2=\cdots=a_k>a_{k+1}=\cdots=a_s=0$, for $k=1,\cdots,s$, as well as considering the trivial solution $a_1=\cdots=a_s=0$. For a fixed $k=1,\cdots,s$, $a_k$ will satisfy \begin{align}
\label{eq:ak-equation}
 \alpha f^{-1}\left(\frac{k}{s}f(a_k)\right)+\beta a_k =\xi
 & \implies \gamma_kf(a_k) =f\left(\frac{\xi-\beta a_k}{\alpha}\right),
\end{align} where $\gamma_k=k/s,\ 1\le k\le s$. Hence the $s+1$ vertices will form the set $\mathcal{P}=\{\bvec{0},\bvec{p}_1,\cdots,\bvec{p}_s\}$, where $\bvec{p}_k=[\underbrace{a_k,\cdots,a_k}_{k\ \mathrm{times}},0,\cdots,0]^t,\ 1\le k\le s$.
At any of the vertices $\bvec{p}_k,\ 1\le k\le s$, the value of the function $\opnorm{\bvec{a}}{q}/s^{1/q}$ is $\gamma_k^{1/q}a_k$. We need to find $\max_{1\le k\le s}\gamma_k^{1/q}a_k$. Note that our goal is to choose $\alpha,\beta$ in such a way that this maximum value is $\le \xi$. Hence, we have to show that \begin{align}
\label{eq:desired-a_k-inequality}
\max_{1\le k\le s}\gamma_k^{1/q}a_k & \le \xi.
\end{align}  
For $k=s$, desired inequality~\eqref{eq:desired-a_k-inequality} reduces to $a_s\le \xi$. Since $a_s=\frac{\xi}{\alpha+\beta}$ (from Eq~\eqref{eq:ak-equation}),~\eqref{eq:desired-a_k-inequality} holds for $k=s$ if $\alpha+\beta\ge 1$. Now, let us fix some $k,\ 1\le k\le s-1$. If $\beta\ge \gamma_k^{1/q}$, $\gamma_k^{1/q}a_k\le \gamma_k^{1/q}\xi/\beta\le \xi$. If $\beta\le \gamma_k^{1/q}$, then note that for the inequality~\eqref{eq:desired-a_k-inequality} to hold, the following must be satisfied:\begin{align}
\label{eq:desired-a_k-inequality-required-condition-1_le_k_le_s-1}
\gamma_k f\left(\xi\gamma_k^{-1/q}\right) & \ge f\left(\frac{\xi-\beta \xi\gamma_k^{-1/q}}{\alpha}\right),
\end{align} 
which follows since the function $\gamma_kf(a_k)-f\left(\frac{\xi-\beta a_k}{\alpha}\right)$ is increasing in $a_k$, as long as $a_k\le \xi/\beta$. Since~\eqref{eq:desired-a_k-inequality-required-condition-1_le_k_le_s-1} has to be satisfied $\forall \xi>0$, one must have, along with $0\le \beta\le \gamma_k^{1/q},\ \alpha+\beta\ge 1$, \begin{align}
\label{eq:aq-maximization-problem-wrt-xi}
\sup_{\xi>0}\frac{f\left(\frac{\xi-\beta \xi\left(\gamma_k\right)^{-1/q}}{\alpha}\right)}{f\left(\xi\gamma_k^{-1/q}\right) } \le \gamma_k & \Leftrightarrow g\left(\frac{\gamma_k^{1/q}-\beta}{\alpha}\right) \le \gamma_k.
\end{align} 
Therefore, to choose $\alpha,\ \beta>0$ in such a way the inequality~\eqref{eq:desired-a_k-inequality} holds true, one needs to consider two cases:\begin{itemize}
	\item If $\beta\ge \left(1-1/s\right)^{1/q}$, the inequality~\eqref{eq:desired-a_k-inequality} holds true by default;
	\item If $0<\beta\le (l/s)^{1/q}$, for some $l=1,\cdots,s-1$, then, the inequality~\eqref{eq:aq-maximization-problem-wrt-xi} has to hold for $l\le k\le s-1$. 
\end{itemize} 
While the former is a valid choice for $\beta$, it is dependent upon $s$, and for large $s$ this choice might result in choosing $\beta$ to be too close to $1$. However, as it is obvious that the desired upper bound~\eqref{eq:main-holder-type-inequality-generalized-concave-function} in Lemma~\ref{lem:generalized-holder-type-bounds} is valid for any $\beta\ge 1$, one should consider if a choice of $\beta$ which is smaller than $1$ and independent of $s$ is possible. For this reason, one considers the latter case, which is satisfied if the following stronger inequality holds true: \begin{align}
\label{eq:aq-maximization-problem-wrt-xi-stronger-condition}
 g\left(\frac{x^{1/q}-\beta}{\alpha}\right) & \le x,\ \forall x\in (\beta^q,(\alpha u_0+\beta)^q),
\end{align} 
where $u_0=\inf\{u\in(0,1]:g(u)=1\}$. Clearly, this second choice of $\beta$ can be considered only when $u_0>0$. By a change of variables, $x\to u$, with $u=(x^{1/q}-\beta)/\alpha$, one can see that the above is equivalent to \begin{align}
\label{eq:aq-maximization-problem-wrt-xi-stronger-equivalent-condition}
\sup_{u\in(0,u_0)}\frac{g\left(u\right)}{(\alpha u+\beta)^q} & \le 1.
\end{align} 
Hence, we find that the inequality~\eqref{eq:desired-a_k-inequality} is satisfied if the conditions on $\alpha,\beta$ hold true as stated in Lemma~\ref{lem:generalized-holder-type-bounds}. 
Now we proceed to create the polytope $\mathcal{D}_{1}=\conv{\mathcal{P}}$.
\subsubsection{Proving that $\mathcal{D}\subset \mathcal{D}_1$}
\label{sec:proving-convex-polytope-is-larger}
We first fix any $\bvec{v}\in \mathcal{D}$. Note that the vectors $\{\bvec{p}_1,\cdots,\bvec{p}_s\}$ form a basis for $\real^s$, so that $\exists \bvec{c}\in \real^s$ such that $\bvec{v}=\sum_{i=1}^s c_i \bvec{p}_i$. Then, to prove that $\mathcal{D}\subset \mathcal{D}_1$, it is enough to show that $c_i\ge 0,\ i=1,\cdots,s$, and $\sum_{i=1}^s c_i\le 1$.

First, observe that, $v_j=\sum_{t=j}^sc_ta_t,\ 1\le j\le s$. Then, for any $1\le k\le s-1,$ \begin{align}
c_k=\frac{v_k-v_{k+1}}{a_k}, & c_s=\frac{v_s}{a_s}.
\end{align} Now note that $a_k>0$ for all $1\le k\le s$. Otherwise, if $a_k=0$, for some $1\le k\le s$, then from Eq~\eqref{eq:ak-equation}, $f(\xi/\alpha)=0\implies \xi=0$, which contradicts the assumption $\xi>0$. Now as $\bvec{v}\in \mathcal{D}$ implies that $v_1\ge \cdots v_s\ge 0$, we have that $c_k\ge 0$ for all $1\le k\le s$. Now, we need to show that $\sum_{k=1}^s c_k\le 1$, or equivalently,
\begin{align}
\sum_{k=1}^{s-1}\frac{v_{k}-v_{k+1}}{a_k}+\frac{v_s}{a_s} \le 1 &  \Leftrightarrow \frac{v_1}{a_1}+\sum_{k=1}^{s-1}v_{k+1}\left(\frac{1}{a_{k+1}}-\frac{1}{a_k}\right) \le 1.
\end{align} To prove this, let us define, for any $p_2\ge p_1\ge 0$, the set,
\begin{align}
\label{eq:s-gamma-set-defintion}
S_{p_1,p_2} = & \left\{\bvec{v}\in \real^s:v_1\ge v_2\ge \cdots\ge v_s\ge 0,\ p_1\le \bvec{v}^t \bvec{w}\le p_2\right\},
\end{align} 
where \begin{align}
\bvec{w}=\left[\frac{1}{a_1}\ \frac{1}{a_2}-\frac{1}{a_1}\cdots\ \frac{1}{a_s}-\frac{1}{a_{s-1}}\right]^t. 
\end{align} Since $\mathcal{D}\subset \left\{\bvec{v}\in \real^s:v_1\ge v_2\ge \cdots\ge v_s\ge 0\right\}$, by definition, we need to prove that $\mathcal{D}\subset S^C$, where $S=\cup_{p_1,p_2:1< p_1\le p_2}S_{p_1,p_2}$. Let us assume that $\mathcal{D}\not\subset S^C$, so that $\exists \bvec{v}\in \mathcal{D}$, and $p_2\ge p_1>1$ such that $\bvec{v}\in S_{p_1,p_2}$ (Note that such $p_2$ must be finite since, $\bvec{v}^t\bvec{w}<v_1/a_s<\xi/(\beta a_s)$). Consider the function \begin{align}
h(\bvec{v}) & =\frac{1}{s}\sum_{k=1}^s f(v_k)-f\left(\frac{\xi-\beta v_1}{\alpha}\right).
\end{align} Since $\bvec{v}\in \mathcal{D}$, we have $h(\bvec{v})\le 0$. However, we will show that $h(\bvec{v})>0$ if $\bvec{v}\in S_{p_1,p_2}$ whenever $p_2\ge p_1>1$, so that we will arrive at a contradiction implying that $\bvec{v}\in S^C$. In order to do this, we will show that if $\bvec{v}\in S_{p_1,p_2}$ for some $p_1>1$, $\frac{1}{s}\sum_{k=1}^s f(v_k)>\eta$, whenever $f\left(\frac{\xi-\beta v_1}{\alpha}\right)\ge \eta,\ \forall \eta>0$. Hence, we need to show that the minimum value of the objective function of the following optimization problem is $>\eta$:\begin{align}
\label{eq:f-optimization-intermediate}
\min_{\bvec{v}} \frac{1}{s}\sum_{k=1}^s f(v_k),\ &
\mathrm{s.t.} \bvec{v}\in T_\eta,
\end{align}
where, \begin{align}
T_\eta=\{\bvec{v}\in \real^s:\rho(\eta)\ge v_1\ge \cdots\ge v_s\ge 0,p_1\le \bvec{v}^t \bvec{w}\le p_2\},
\end{align} where $\rho(\eta)=(\xi-\alpha f^{-1}(\eta))/\beta$. It is easy to check that the set $T_\eta$ is convex and obviously is a subset of $\real_+^s$. Since the function $\frac{1}{s}\sum_{k=1}^s f(v_k)$ is concave in $\real_+^s$, the minimum value is achieved at one of the extreme points of the set $T_\eta$. 

In order to find the extreme points of $T_\eta$, we use the Lemma~\ref{lem:verification-vertex-polyhedron}. First, we express $T_\eta$ in the canonical form of a polyhedra created by a set of linear inequalities as in Lemma~\ref{lem:verification-vertex-polyhedron}. The set $T_\eta$ can be expressed as $T_\eta=\{\bvec{x}\in\real^s:\bvec{A}\bvec{x}\le \bvec{b}\}$. Here $\bvec{A}\in \real^{(s+3)\times s}$, with the $i^\mathrm{th}\ (1\le i\le s+3)$ row of $\bvec{A}$ equal to $\bvec{a}_i^t$, 
where $\bvec{a}_1=[1\ 0\ 0\ \cdots\ 0]^t$, $\bvec{a}_j=[0\ 0\ \cdots\ -1\ \underset{\underset{j^\mathrm{th}\ \text{position}}{\uparrow}}{1}\cdots\ 0],\ j=2,\cdots,s$, $\bvec{a}_{s+1}=[0\ 0\ 0\ \cdots\ 0 -1]^t$, $\bvec{a}_{s+2}=-\bvec{a}_{s+3}=\bvec{w}$, and $b_1=\rho(\eta),\ b_j=0,\ j=2,\cdots, s+1,\ b_{s+2}=p_2,\ b_{s+3}=-p_1$. 

Now consider an extreme point $\bvec{v}$ of $T_{\eta}$ such that $\bvec{v}$ has $l(\ge 1)$ distinct positive values. If $l=1$, all the indices have the same value $v_1$, while, for $l\ge 2$, $\exists$ indices $i_1> i_2>\cdots> i_{l-1}$, such that $v_1=\cdots=v_{i_1}>v_{i_1+1}=\cdots=v_{i_2}>\cdots v_{i_{l-1}+1}=\cdots=v_s$. Now consider the set $A_v=\{\bvec{a}_j\in \real^s:\bvec{a}_j^t\bvec{v}=b_j,\ j\in\{1,2,\cdots,s+3\}\}$. According to Lemma~\ref{lem:verification-vertex-polyhedron}, for $\bvec{v}$ to be an extreme point of $T_\eta$, the set $A_v$ must contain $s$ linearly independent vectors. Using the definition of the vectors $\bvec{a}_j,\ j=1,2,\cdots,s+3$, one finds that $\bvec{a}_j^t\bvec{v}=b_j$ implies the following: \begin{align}
\label{eq:vertex-condition-first}
v_1 & = b_1,\ \mbox{if}\ j=1,\\
\label{eq:vertex-condition-second}
v_{j} & = v_{j-1},\ \mbox{if}\ j=2,\cdots,s,\\
\label{eq:vertex-codnition-last}
v_{s} & = 0,\ \mbox{if}\ j=s+1,\\
\label{eq:vertex-condition-p_1-p_2}
\bvec{a}_j^t\bvec{v} & = p_2,\ \mbox{resp.}\ p_1,\ \mbox{if}\ j=s+2\ \mbox{resp.}\ s+3.
\end{align}  
 Since there are $l$ distinct values of $v_j$'s, a maximum of $s-l$ of the constraints of the form~\eqref{eq:vertex-condition-second} can be satisfied, which means that a maximum of $s-l$ vectors of the form $\bvec{a}_j, j=2,\cdots, s$ can be there in $A_v$. Furthermore, only one of the constraints in~\eqref{eq:vertex-condition-p_1-p_2} can be satisfied. Now note that if $l\ge 4$, the set $A_v$ can have a maximum of $s-4$ of the vectors $\bvec{a}_j,\ j=2,\cdots,s$, which along with the vectors $\bvec{a}_j,\ j=1,s+1$ and $j=s+2$ or $s+3$, form a linear independent set of a maximum of $s-1$ vectors. This implies that the entries of a vertex cannot take more than $3$ distinct values. 
 
 Now, we consider the different cases that arise considering the satisfaction of the constraints~\eqref{eq:vertex-condition-first} and~\eqref{eq:vertex-codnition-last}. If~\eqref{eq:vertex-condition-first} is satisfied, then $1\le l\le 3$ so that the extreme point has the form $[\underbrace{b_1,b_1,\cdots,b_1}_{i\ \mathrm{times}},\underbrace{u_2,\cdots,u_2}_{j\ \mathrm{times}},\underbrace{0,\cdots,0}_{k\ \mathrm{times}}]^t$ where $i,j,k$ are arbitrary non-negative integers such that $1\le i\le s$, $0\le j+k\le s-i$, and $u_2(>0)$ is determined from either $\bvec{v}^t\bvec{w}=p_1$ or $\bvec{v}^t\bvec{w}=p_2$. Using the expression for the coordinates of $\bvec{w}$, it follows that $u_2=\frac{a_{i+j}(qa_i-\rho(\eta))}{a_i-a_{i+j}}$(In this case one must have $qa_i>\rho(\eta)$), where $q$ is either $p_1$ or $p_2$. If~\eqref{eq:vertex-condition-first} is not satisfied, then $1\le l\le 2$, where $l=2$ if~\eqref{eq:vertex-codnition-last} is satisfied, and $l=1$ if it is not. Thus, in this case, the extreme points are of the form $[\underbrace{u_1,\cdots,u_1}_{r\ \mathrm{times}},\underbrace{0,\cdots,0}_{t\ \mathrm{times}}]^t$, where $1\le r\le s$, and $0\le t\le s-r$. Here $u_1$ is obtained from $\bvec{v}^t\bvec{w}=q$ ($q=p_1$ or $p_2$), so that $u_1=qa_r$. 

Therefore, to summarize, an extreme point of $T_\eta$ is either of the forms $[\underbrace{b_1,b_1,\cdots,b_1}_{i\ \mathrm{times}},\underbrace{u_2,\cdots,u_2}_{j\ \mathrm{times}},\underbrace{0,\cdots,0}_{k\ \mathrm{times}}]^t$ or, $ [\underbrace{u_1,\cdots,u_1}_{r\ \mathrm{times}},\underbrace{0,\cdots,0}_{t\ \mathrm{times}}]^t$ where $i,j,k,r,t$ are arbitrary non-negative integers such that $1\le i,r\le s$, $0\le j+k\le s-i,\ 0\le t\le s-r$, and $u_1,\ u_2$ are obtained as below: \begin{align}
\label{eq:vertex-u_1-expression}
u_1 & = q_1 a_r,\\
\label{eq:vertex-u_2-expression}
u_2 & = \frac{a_{i+j}(q_2a_i-\rho(\eta))}{a_i-a_{i+j}},
\end{align}
where $q_1,q_2\in \{p_1,p_2\}$. We will now show that the value of $\frac{1}{s}\sum_{i=1s}^s f(v_i)$ at each of these extreme points is greater than $\eta$.

If the extreme point is of the form $ [\underbrace{u_1,\cdots,u_1}_{r\ \mathrm{times}},\underbrace{0,\cdots,0}_{t\ \mathrm{times}}]^t$ with $u_1$ given by Eq~\eqref{eq:vertex-u_1-expression}, then the value of the objective function of the problem~\eqref{eq:f-optimization-intermediate} at this point is $\frac{r}{s}f(q_1a_r)$. Since the function $r/sf(a)-f\left((\xi-\beta a)/\alpha\right)$ is a strictly increasing function in $a$ as long as $a<\xi/\beta$, and since $v_1\le \rho(\eta)<\xi/\beta\implies q_1a_r<\xi/\beta$, one obtains, along with the fact that $q_1>1$, \begin{align}
\ & \frac{r}{s}f(q_1a_r)-f\left(\frac{\xi-\beta q_1a_r}{\alpha}\right) > \frac{r}{s}f(a_r)-f\left(\frac{\xi-\beta a_r}{\alpha}\right)=0\nonumber\\
\ & \implies \frac{r}{s}f(q_1a_r) > f\left(\frac{\xi-\beta q_1a_r}{\alpha}\right)\ge f\left(\frac{\xi-\beta \rho(\eta)}{\alpha}\right)=\eta.
\end{align}

On the other hand, if an extreme point is of the form $[\underbrace{b_1,b_1,\cdots,b_1}_{i\ \mathrm{times}},\underbrace{u_2,\cdots,u_2}_{j\ \mathrm{times}},\underbrace{0,\cdots,0}_{k\ \mathrm{times}}]^t$, the value of the objective function at such a point becomes $\displaystyle \frac{i}{s}f(b_1)+\frac{j}{s}f\left(\frac{a_{i+j}(q_2a_i-\rho(\eta))}{a_i-a_{i+j}}\right)$. Now, observe that due to the constraint $1<p_1\le \bvec{v}^t\bvec{w}\le p_2$ and $v_1\ge v_2\cdots\ge v_s$, one obtains, \begin{align}
\ & 1<\bvec{v}^t\bvec{w}=\frac{v_1}{a_1}+\sum_{k=1}^{s-1}v_{k+1}\left(\frac{1}{a_{k+1}}-\frac{1}{a_k}\right)\le \frac{v_1}{a_s}\nonumber\\
\ & \implies v_1>a_s\implies \rho(\eta)>a_s\nonumber\\
\ & \implies (\xi-\alpha f^{-1}(\eta))/\beta>\xi/(\alpha+\beta)\implies f(a_s)>\eta.
\end{align} Therefore, $f(b_1)=f(\rho(\eta))>f(a_s)>\eta$. On the other hand, \begin{align}
\lefteqn{ \frac{a_{i+j}(q_2a_i-\rho(\eta))}{a_i-a_{i+j}}-f^{-1}(\eta)} & &\nonumber\\
\ & =\frac{a_{i+j}(q_2\beta a_i-(\xi-\alpha f^{-1}(\eta)))-\beta f^{-1}(\eta)(a_i-a_{i+j})}{\beta(a_i-a_{i+j})}\nonumber\\
\ & =\frac{a_{i+j}(q_2\beta a_i-\xi)-f^{-1}(\eta)(\beta a_i-a_{i+j}(\beta+\alpha))}{\beta(a_i-a_{i+j})}\nonumber\\
\ & = \frac{a_{i+j}}{a_s(a_{i}-a_{i+j})}\left[a_s\left(q_2a_i-\frac{\xi}{\beta}\right) -f^{-1}(\eta)\left(\frac{a_ia_s}{a_{i+j}}-\frac{\xi}{\beta} \right)\right]\nonumber\\
\ & > \frac{a_{i+j}a_if^{-1}(\eta)}{a_s(a_{i}-a_{i+j})}\left(q_2-\frac{a_s}{a_{i+j}}\right)>0,
\end{align} 
where the last step used $a_s>f^{-1}(\eta)$ as well as the fact that $q_2>1\ge a_s/a_k,\ \forall k=1,2,\cdots,s$. Hence the value of the objective function at this extreme point is also greater than $\eta$. 

Consequently, we have shown that $h(\bvec{v})>0$, if $\bvec{v}\in S_{p_1,p_2}$, for some $1\le p_1\le p_2$. Hence, we must have that $\bvec{v}^t\bvec{w}\le 1,\ \forall\ \bvec{v}\in \mathcal{D}$. Therefore, we have proven that $\mathcal{D}\subset\mathcal{D}_1$.

\subsubsection{Proving the upper bound} 
\label{sec:proving-upper-bound-lemma-main-inequality}
We are now in a position to prove the upper bound in~\eqref{eq:J-NSC-inequality}. As we have proved that $\mathcal{D}\subset \mathcal{D}_1$, we have $\max_{\bvec{a}\in \mathcal{D}}\opnorm{\bvec{a}}{q}/s^{1/q}\le \max_{\bvec{a}\in \mathcal{D}_1}\opnorm{\bvec{a}}{q}/s^{1/q}$. Since $\opnorm{\bvec{a}}{q}/s^{1/q}$ is a convex function (since $q\ge 1$), we have, using the selection criterion of $\alpha,\beta$ as stated in Lemma~\ref{lem:generalized-holder-type-bounds}, $\max_{\bvec{a}\in \mathcal{D}_1}\opnorm{\bvec{a}}{q}/s^{1/q}=\max_{1\le k\le s}(k/s)^{1/q}a_k/s^{1/q}\le \xi$. Since the selection criterion of $\alpha,\beta$ is independent of $\xi$, the upper bound of~\eqref{eq:main-holder-type-inequality-generalized-concave-function} follows. 
\section{Proving the Equations~\eqref{eq:g(y)-evluation-continuous-mixed-norm} and~\eqref{eq:psi(y)-evluation-continuous-mixed-norm}}
\label{sec:appendix-proving-equations-g(y)-continuos-mixed-norm}
We will prove the following more general result: for any continuous and bounded function $h:[0,1]\to \real$, \begin{align}
\label{eq:general-result-evluation-continuous-mixed-norm}	
\lim_{x\downarrow 0}\frac{\expect{p}{x^ph(p)}}{\expect{p}{x^p}} & = h(p_1),\	\lim_{x\uparrow \infty}\frac{\expect{p}{x^ph(p)}}{\expect{p}{x^p}} = h(p_2).
\end{align}
Using $h(p)=y^p$, which is continuous in $p\in[0,1]$ and bounded in $[\min\{y,1\},\ \max\{y,1\}]$, results in Eq.~\eqref{eq:g(y)-evluation-continuous-mixed-norm}, whereas, using $h(p)=p$, which is continuous in $p\in[0,1]$ and bounded in $[0,1]$, results in Eq.~\eqref{eq:psi(y)-evluation-continuous-mixed-norm}.\\
\textbf{Evaluation of $\lim_{x\downarrow 0}\frac{\expect{p}{x^ph(p)}}{\expect{p}{x^p}},\ y>0$:} Since the function $\frac{\expect{p}{x^ph(p)}}{\expect{p}{x^p}}$ is continuous in $x$ for $x>0$, the limit to be evaluated is simply equal to $\lim_{n\to \infty}\frac{\expect{p}{x_n^ph(p)}}{\expect{p}{x_n^p}}$ for any sequence $\{x_n\}_{n\ge 0}$, such that $x_n> 0\forall n\ge 0$, and $x_n\stackrel{n\to \infty}{\to} 0$. Let us choose arbitrarily such a sequence $\{x_n\}_{n\ge 0}$. In order to evaluate $\lim_{n\to \infty}\frac{\expect{p}{x_n^ph(p)}}{\expect{p}{x_n^p}}$, we first observe that one can write $\frac{\expect{p}{x_n^ph(p)}}{\expect{p}{x_n^p}}=\int h(p)d\mu_n$, where $\mu_n$ is a probability measure defined on the Borel sets associated to the set $[0,1]$, such that, for each Borel set $A$ (defined on $[0,1]$), $\mu_n(A)=\frac{\int x_n^p\indicator{A}d\nu}{\expect{p}{x_n^p}}$. The cumulative distribution function (cdf) corresponding to $\mu_n$ is $F_n$, where, for any $u\in [0,1]$, $F_n(u)=\mu_n([0,u])=\frac{\int x_n^p\indicator{[0,u]}d\nu}{\expect{p}{x_n^p}}$. We will now prove that $F_n(u)\to \indicator{{u\ge p_1}}$, as $n\to \infty$, $\forall u\ne p_1$, where $p_1=\inf \Omega$, where $\Omega$ is the support of $\nu$ as well as $\mu_n,\ \forall n\ge 0$. Since the set $\real\setminus \{p_1\}$ is the set of continuity of the function $\indicator{{u\ge p_1}}$, this will imply, by the equivalence between weak convergence of distributions and weak convergence of corresponding probability measures (see Theorem 13.23 of~\cite{klenke2013probability}) that $\int h(p) d\mu_n\stackrel{n\to \infty}{\to} \int h(p) d\mu$, for any continuous and bounded function $h:[0,1]\to \real$, where $\mu$ is the measure associated to the cdf $\indicator{{u\ge p_1}}$, i.e., $\mu$ is the discrete probability measure satisfying $\mu(\{p_1\})=1$. Thus it remains to prove that $F_n(u)\stackrel{n\to \infty}{\to}\indicator{u\ge p_1}$, for any $u\ne p_1$. 

First of all, as $p_1=\inf \Omega$, obviously $F_n(u)=0,\ \forall u\in [0,p_1)$. Similarly if $u> p_2=\sup \Omega$, obviously $F_n(u)=1$. Now, fix any arbitrary $u\in(p_1,p_2]$ and $\epsilon\in (0,1)$. If we can prove that $\exists $ a positive integer $N$, such that $\forall n\ge N$, \begin{align}
\label{eq:limit-x-downto-0-limit-objective}
\abs{\frac{\int x_n^p\indicator{[p_1,u]}d\nu}{\expect{p}{x_n^p}}-1} \le \epsilon & \Leftrightarrow \frac{\int_{[p_1,u)} x_n^pd\nu}{\int_{[u,p_2]} x_n^pd\nu} \ge \frac{1}{\epsilon}-1,
\end{align} we are done, since $\epsilon$ was chosen arbitrarily. 
Now, as $1>p_2\ge u>p_1,\ \exists \epsilon_1>0$, such that $p_1+\epsilon_1<u$. Now, as $x_n\stackrel{n\to \infty}{\to}0$, $\exists$ positive integer $N_1$, such that $x_n\le 1,\ \forall n\ge N_1$. Then, $\forall n\ge N_1$, $\int_{[p_1,u)} x_n^pd\nu\ge x_n^{p_1+\epsilon_1}\nu([p_1,p_1+\epsilon_1])$, and $\int_{[u,p_2]} x_n^pd\nu\le x_n^{u}\nu([u,p_2])$, which imply that \begin{align}
\label{eq:continuous-mixed-norm-lower-limit-intermediate-step}
\frac{\int_{[p_1,u)} x_n^pd\nu}{\int_{[u,p_2]} x_n^pd\nu} & \ge x_n^{p_1+\epsilon_1-u}\frac{\nu([p_1,p_1+\epsilon_1])}{\nu([u,p_2])}.
\end{align} Let \begin{align}
\label{eq:continuous-mixed-norm-lower-limit-intermediate-step-eppsilon2-def}
\epsilon_2 & =\left[\left(\frac{1}{\epsilon}-1\right)^{-1}\frac{\nu([p_1,p_1+\epsilon_1])}{\nu([u,p_2])}\right]^{1/(u-(p_1+\epsilon_1))}.
\end{align} Again, since $x_n\stackrel{n\to \infty}{\to}0$, $\exists$ a positive integer $N_2$, such that $x_n\le \epsilon_2,\ \forall n\ge N_2$. Then, for $n\ge N:=\max\{N_1,N_2\}$, ~\eqref{eq:limit-x-downto-0-limit-objective} follows from~\eqref{eq:continuous-mixed-norm-lower-limit-intermediate-step} and~\eqref{eq:continuous-mixed-norm-lower-limit-intermediate-step-eppsilon2-def}.

\textbf{Evaluation of $\lim_{x\uparrow \infty}\frac{\expect{p}{x^ph(p)}}{\expect{p}{x^p}},\ y>0$:} This evaluation is largely similar to the evaluation of $\lim_{x\downarrow 0}\frac{\expect{p}{x^ph(p)}}{\expect{p}{x^p}},\ y>0$. Here again, the goal is to evaluate $\lim_{n\to \infty}\frac{\expect{p}{x_n^ph(p)}}{\expect{p}{x_n^p}}$ for any sequence of positive numbers $\{x_n\}$, such that $x_n\stackrel{n\to \infty}{\to}\infty$. We fix such an arbitrary such sequence $\{x_n\}_{n\ge 0}$. Again, one can write $\frac{\expect{p}{x_n^py^p}}{\expect{p}{x_n^p}}=\int y^{p}d\mu_n$, with the same description of $\mu_n,F_n$ as before. Again, using the Theorem~13.23 of~\cite{klenke2013probability} about the equivalence of the weak convergence of distributions and the weak convergence of probability measures, we will establish that $\lim_{x\uparrow \infty}\frac{\expect{p}{x^ph(p)}}{\expect{p}{x^p}}=h(p_2)$ with $p_2=\sup \Omega$, by showing that $\lim_{n\to \infty}F_n(u)=\indicator{u\ge p_2}$, $\forall u\ne p_2$. To establish this, first note that clearly $F_n(u)=1$, if $u>p_2$, and $F_n(u)=0$, if $u<p_1$. We thus need to show that, for any fixed $u\in [p_1,p_2)$, and arbitrarily chosen $\epsilon>0$, \begin{align}
\label{eq:limit-x-upto-infty-limit-objective}
 \abs{\frac{\int x_n^p\indicator{[p_1,u]}d\nu}{\int x_n^pd\nu}} \le \epsilon &\Leftrightarrow \frac{\int_{[u,p_2]} x_n^pd\nu}{\int_{[p_1,u]} x_n^pd\nu}  \ge \frac{1}{\epsilon}-1.
\end{align}  
Note that since $p_1\le u<p_2$, $\exists \epsilon_1>0$, such that $u+\epsilon_1<p_2$. Now, as $x_n\stackrel{n\to \infty}{\to}\infty$, $\exists$ positive integer $N_1$, such that $\forall n\ge N_1$, $x_n\ge 1$. Thus, whenever $n\ge N_1$, $\int_{[u,p_2]} x_n^pd\nu\ge x_n^{p_2-\epsilon_1}\nu([p_2-\epsilon_1,p_2])$, and $\int_{[p_1,u]} x_n^pd\nu\le x_n^u \nu([p_1,u])$. Therefore, $\forall n\ge N_1$, \begin{align}
\label{eq:continuous-mixed-norm-upper-limit-intermediate-step}
\frac{\int_{[u,p_2]} x_n^pd\nu}{\int_{[p_1,u]} x_n^pd\nu} & \ge x_n^{p_2-\epsilon_1-u}\frac{\nu([p_2-\epsilon_1,p_2])}{\nu([p_1,u])}.
\end{align} Let \begin{align}
\label{eq:continuous-mixed-norm-upper-limit-intermediate-step-eppsilon2-def}
\epsilon_2 & = \left[\left(\frac{1}{\epsilon}-1\right)\frac{\nu([p_1,u])}{\nu([p_2-\epsilon_1,p_2])}\right]^{1/(p_2-\epsilon_1-u)}.
\end{align}
Now, as $x_n\stackrel{n\to \infty}{\to}\infty$, $\exists $ positive integer $N_2$, such that $\forall n\ge N_2$, $x_n\ge \epsilon_2$. Then, $\forall n\ge N:=\max \{N_1,N_2\} $, it follows from~\eqref{eq:continuous-mixed-norm-upper-limit-intermediate-step} and~\eqref {eq:continuous-mixed-norm-upper-limit-intermediate-step-eppsilon2-def} that ~\eqref{eq:limit-x-upto-infty-limit-objective} is satisfied.

\section{Proof of Theorem~\ref{thm:null-space-constant-upper-bound}}
\label{sec:appendix-proof-of-theorem-null-space-constant-upper-bound}
Our proof is an extension of the proof of Theorem $4$ in~\cite{gu2018sparse-lp-bound}. Before commencing the proof, we describe a standard support partitioning trick~\cite{candes_decoding_2005,foucart2013mathematical,gu2018sparse-lp-bound} that is later used in the proof: Any vector $\bvec{z}\in \mathcal{N}(\bvec{\Phi})$ is decomposed by partitioning the indices $1,2,\cdots,N$ into the sets $T_i,\ i=0,1,\cdots$, where:\\ 
$\bullet$ $T_0$ is the index set corresponding to the coordinates of $\bvec{z}$ having the largest $K$ absolute values;\\
$\bullet$ $T_1$ is the index set corresponding to the coordinates of $\bvec{z}_{\overline{T}_0}$ with the largest $2K_0-K$ absolute values;\\
$\bullet$ $T_i(i\ge 2)$ is the index set corresponding to the coordinates of $\bvec{z}_{\overline{{T}_0\cup T_1\cup\cdots \cup T_{i-1}}}$ with the largest $2K_0-K$ absolute values.   

Using this partition, our proof begins with Eq~($52$) and Eq~($53$) of~\cite{gu2018sparse-lp-bound}, which are reproduced respectively below: \begin{align}
\label{eq:gu-theorem4-proof-first-eqn}
\norm{\bvec{z}_{T_0}} & \le C'_{2K_0}\sum_{i\ge 2}\opnorm{\bvec{z}_{T_i}}{1},\\
\label{eq:gu-theorem4-proof-second-eqn}
\sum_{i\ge 2}\norm{\bvec{z}_{T_i}} & \le \frac{\opnorm{\bvec{z}_{T'}}{1}}{\sqrt{L}}+\omega z_2^+,
\end{align}
where $C_{2K_0}'=\frac{\sqrt{2}+1}{2}\frac{\delta_{2K_0}}{1-\delta_{2K_0}}$,\ $L=2K_0-K$, $z_i'=\opnorm{\bvec{z}_{T_i}}{\infty},\ i\ge 1$, $T'=\overline{T_0\cup T_1}\setminus \{2K_0+1\}$, and $\omega=\frac{\sqrt{L}}{2}+\frac{1}{\sqrt{L}}$. Let $\{T_i'\}_{i\ge 2}$ form a partition of the set $T'$ such that $T_i'=\{(L+1)(i-1)+K+1,\cdots,(L+1)i+K\}$, which has $L+1$ entries. Now, rewriting $\opnorm{\bvec{z}_{T'}}{1}$ as $\sum_{i\ge 2}\opnorm{\bvec{z}_{T_i'}}{1}$ and using Lemma~\ref{lem:generalized-holder-type-bounds} for function $f$, with $L'=L+1$, the RHS (Right Hand Side) of~\eqref{eq:gu-theorem4-proof-second-eqn} can be further upper bounded as below: \begin{align}
\sum_{i\ge 2}\norm{\bvec{z}_{T_i}} \le \omega z_2^+ + & \frac{L'}{\sqrt{L}}\sum_{i\ge 2}\left[\alpha_f(1,L')f^{-1}\left(\frac{F(\bvec{z}_{T_i'})}{L'}\right)+\beta_f(1,L')z_i'^+ \right],
\end{align}
where $z_i'^+=\opnorm{\bvec{z}_{T_i'}}{\infty}$. Now, using \begin{align*}
z_2'^+ \le z_2^+,\ z_{i+1}'^+ & \le f^{-1}\left(\frac{F(\bvec{z}_{T_i'})}{L'}\right),\ i\ge 1,\\ 
\sum_{i\ge 2} f^{-1}\left(\frac{F(\bvec{z}_{T_i'})}{L'}\right) & \le f^{-1}\left(\frac{\sum_{i\ge 2}F(\bvec{z}_{T_i'})}{L'}\right) =f^{-1}\left(\frac{F(\bvec{z}_{T'})}{L'}\right),
\end{align*}
one obtains, \begin{align}
\sum_{i\ge 2}\norm{\bvec{z}_{T_i}} & \le \frac{L'}{\sqrt{L}}(\alpha_f(1,L')+\beta_f(1,L'))f^{-1}\left(\frac{F(\bvec{z}_{T'})}{L'}\right) +\left(\omega + \frac{\beta_f(1,L')L'}{\sqrt{L}}\right)f^{-1}\left(\frac{F(\bvec{z}_{\overline{T_0}\setminus T'})}{L'}\right)\nonumber\\
\ & \le \frac{L'}{\sqrt{L}}(\pi_f(L')+\beta_f(1,L'))f^{-1}\left(\frac{F(\bvec{z}_{\overline{T_0}})}{L'}\right),
\end{align}
where $\pi_f(L')=\max\{\alpha_f(1,L'),\frac{1}{2}+\frac{1}{2L'}\}$. Furthermore, using the first inequality of~\eqref{eq:main-holder-type-inequality-generalized-concave-function} of Lemma~\ref{lem:generalized-holder-type-bounds} for $q=2$, along with inequality~\eqref{eq:gu-theorem4-proof-first-eqn}, one obtains, \begin{align}
f^{-1}\left(\frac{F(\bvec{z}_{T_0})}{K}\right) & \le \zeta_f(K,K_0) f^{-1}\left(\frac{F(\bvec{z}_{\overline{T_0}})}{L'}\right)\nonumber\\
\ \implies \frac{F(\bvec{z}_{T_0})}{K} & \le f\left(\zeta_f(K,K_0) f^{-1}\left(\frac{F(\bvec{z}_{\overline{T_0}})}{L'}\right)\right)\nonumber\\
\ \implies F(\bvec{z}_{T_0}) & \le \frac{K}{2K_0-K+1}g( \zeta_f(K,K_0))F(\bvec{z}_{\overline{T_0}}).
\end{align}
\section{Evaluating NSC when \texorpdfstring{$\mathcal{N}(\bvec{\Phi})=\spn{\bvec{z}}$}{blah}}
\label{sec:appendix-evaluating-nsc-null-space-spanned-by-single-vector}
In this case, $\bvec{u}\in \mathcal{N}(\bvec{\Phi})\implies \bvec{u}=t\bvec{z}$, for some $t\in \real$. We assume that the function $xf'(x)/f(x)$ is non-increasing, which is true for $f(x)=\abs{x}^p,\ln(1+\abs{x}^p),\ 1-e^{-\abs{x}^p}$, for $p\in(0,1]$.

Writing $[N]=\{1,2,\cdots,N\}$, one then finds that \begin{align}
\gamma(J,\bvec{\Phi},K) & = \max_{S\subset [N]:\abs{S}\le K}\sup_{t>0}\frac{\sum_{i\in S}f(tz_i)}{\sum_{i\notin S}f(tz_i)}\nonumber\\
\ & = \left[\left(\max_{S\subset [N]:\abs{S}\le K}\sup_{t>0}\frac{\sum_{i\in S}f(t\abs{z_i})}{\sum_{i\in[N]}f(t\abs{z_i})}\right)^{-1}-1\right]^{-1}\nonumber\\
\ & = \left[\left(\sup_{t>0}\max_{S\subset [N]:\abs{S}\le K}\frac{\sum_{i\in S}f(t\abs{z_i})}{\sum_{i\in[N]}f(t\abs{z_i})}\right)^{-1}-1\right]^{-1}\nonumber\\
\ & = \left[\left(\sup_{t>0}\frac{\sum_{i=1}^Kf(tz_i^+)}{\sum_{i\in[N]}f(tz_i^+)}\right)^{-1}-1\right]^{-1}\nonumber\\
\ & = \sup_{t>0}\frac{\sum_{i=1}^Kf(tz_i^+)}{\sum_{i=K+1}^Nf(tz_i^+)},
\end{align} 
where $\bvec{z}^+$ is the positive rearrangement of $\bvec{z}$, i.e., for $i=1,\cdots,N$, $z_i^+$ is, magnitude-wise, the $i^\mathrm{th}$ largest entry of $\bvec{z}$.
Now, consider finding $\sup_{t>0}r(t)$, where $r(t) = \frac{\sum_{i=1}^Kf(tz_i)}{\sum_{i=K+1}^Nf(tz_i)},\ t>0$, for a vector $\bvec{z}$ with non-negative entries arranged as $z_1\ge z_2\ge \cdots\ge z_N\ge 0$. Then \begin{align}
\label{eq:r'(t)-expression}
\lefteqn{r'(t)} & &\nonumber\\
\ &  = \frac{\sum_{i=1}^K\sum_{j=K+1}^Nf(tz_i)f(tz_j)\left(\frac{t\abs{z_i}f'(t\abs{z_i})}{f(t\abs{z_i})} - \frac{t\abs{z_j}f'(t\abs{z_j})}{f(t\abs{z_j})}\right)}{t(\sum_{i\notin S}f(tz_i))^2}.
\end{align}  Therefore, as $xf'(x)/f(x)$ is non-increasing, one can easily check that $r(t)$ is a non-increasing function of $t$, so that $\sup_{t>0} r(t)=\lim_{t\downarrow 0}r(t).$ Consequently, $\gamma(J,\bvec{\Phi},K) = \lim_{t\downarrow 0}\frac{\sum_{i=1}^Kf(tz_i^+)}{\sum_{i=K+1}^Nf(tz_i^+)}.$ Therefore, for for $f(x)=\abs{x}^p,\ln(1+\abs{x}^p),1-e^{-\abs{x}^p},\ \gamma(J,\bvec{\Phi},K) = \frac{\sum_{i=1}^K\abs{z_i^+}^p}{\sum_{i=K+1}^N\abs{z_i^+}^p}$. 
\bibliography{null-space-J-minimization}
\end{document}